\newtheorem{lemma}{Lemma}
\newtheorem{corollary}{Corollary}
\newtheorem{theorem}{Theorem}
\newtheorem{condition}{Condition}
\newtheorem{definition}{Definition}
\newtheorem{assumption}{Assumption}
\newtheorem{remark}{Remark}
\DeclareMathOperator*{\argmax}{arg\,max}
\begin{document}
\title{Identification and Inference of Network Formation Games with Misclassified Links\thanks{First version: April, 2017. We would like to thank Elie Tamer, anonymous referees, Victor Aguirregabiria, Federico Bugni, Áureo de Paula, Bryan Graham, Shakeeb Khan, Michael Leung, Zhongjian Lin, Matt Masten, Arnaud Maurel, Konrad Menzel, Francesca Molinari, Kenichi Nagasawa, Pedro Souza, Roland Rathelot, Eric Renault, Camilla Roncoroni, Adam Rosen, and seminar participants at Bristol Econometric Study Group, Warwick Summer Workshop, 2018 European Meeting of the Econometric Society at Naples, and 2019 Asian Meeting of the Econometric Society at Xiamen for very helpful comments. The usual disclaimer applies.}}
\author{Luis E. Candelaria\thanks{Department of Economics, University of Warwick, Social Sciences Building, Coventry, CV4 7AL, U.K.; Email: L.Candelaria@warwick.ac.uk } \and Takuya Ura\thanks{Department of Economics, University of California, Davis, One Shields Avenue, Davis, CA 95616-5270; Email: takura@ucdavis.edu}}
\date{\today}
\maketitle
\begin{abstract}
This paper considers a network formation model when links are potentially misclassified. We focus on a game-theoretical model of strategic network formation with incomplete information, in which the linking decisions depend on agents' exogenous attributes and endogenous network characteristics. In the presence of link misclassification, we derive moment conditions that characterize the  identified set for the preference parameters associated with homophily and network externalities. Based on the moment equality conditions, we provide an inference method that is asymptotically valid when a single network of many agents is observed. Finally, we apply our misclassification-robust method to study the preference parameters of a lending network in rural villages in southern India. 
\begin{description}
\item Keywords: Misclassification, Network formation models, Strategic interactions, Incomplete information 
\item JEL Codes: C13, C31
\end{description}
\end{abstract}

\clearpage
\section{Introduction}\label{sec1}
Researchers across different disciplines have documented that measurement error of links is a pervasive problem in network data (e.g., \citealt{holland/leinhardt:1973}, \citealt{moffitt2001policy}, \citealt{kossinets:2006}, \citealt{ammermueller/pischke:2009}, \citealt{wangetal:2012}, \citealt{angrist:2014},  \citealt{depaula:2017}, \citealt{advani/malde:2018}). 
Although strategic network formation models provide essential information for learning about the creation of linking connections and peer effects when the network of interaction is endogenous, to the best of our knowledge, there has been no work addressing the effects of misclassifying links in strategic network formation models.
In this paper, we consider identification and inference in a game-theoretical model of strategic network formation with potentially misclassified links. 

We focus on a simultaneous game with incomplete information in which agents decide to form connections to maximize their expected utility (cf. \citealt{leung2015two}, and \citealt{ridder/sheng:2020}). 
The agents' decisions are interdependent since the utility attached to creating a link depends on the agents' observed attributes and network characteristics through link externalities (such as reciprocity, in-degree, and transitivity statistics). 
The misclassification problem will affect the link formation decisions in two different ways. 
First, the binary outcome variable representing an agent's optimal linking decision is misclassified. 
Second, the link misclassification problem prevents us from directly identifying the belief system that an agent uses to predict others' linking decisions.
In this sense, the misclassification problem occurs on the left- and right-hand sides of the equation describing the optimal linking decisions (as shown in Lemma \ref{lemma1}).

We propose a novel approach for analyzing network formation models, which is robust to link misclassification. 
Specifically, in a setup that allows for the links to be potentially misclassified, we characterize the identified set for the structural parameters, including the preference parameters concerning homophily and network externalities.
A notable innovation in our approach is that we derive the relationship between the choice probabilities of observed network connections and the belief system (as shown in Lemma \ref{belief_lemma}). This result is crucial in allowing us to control for the endogeneity of the equilibrium beliefs and to reduce the model to a single-agent decision model in the presence of misclassification.

We also propose an inference method that is asymptotically valid when one network with a large number of agents is observed. 
Our proposed confidence interval is computationally feasible and controls the size even when the parameters are partially identified. 

In practice, network links might be misclassified due to several reasons. For instance, true links might be listed as missing due to incomplete surveying of the individuals in the network, top-censoring of the number of links that can be reported, or because individuals might be apprehensive about revealing certain connections, they might get wearied during the interview or forget about some connections. Meanwhile, non-existent links might be listed as present due to perception biases that may lead individuals to list incorrect connections or mistakes in the imputation of links by the researcher. In our setting, the links are allowed to be misclassified conditionally at random as positives or negatives. The formal statement is provided in Assumption \ref{Ass3}, where we also discuss the advantages and limitations of assuming a misclassification process that is conditionally at random. In Appendix \ref{sec_aprob}, we discuss a generalization of the misclassification process to a heterogeneous setting where the misclassification distribution is covariate-dependent.

In an empirical illustration, we apply our inference method to examine a lending network in Karnataka, an area in southern India (see \citealt{jackson/rodriguezbarraquer/tan:2012} and \citealt{banerjee/chandrasekhar/duflo/jackson:2013}). We study the performance of our method when different degrees of links misclassification are considered in the data, including both false positives and negatives. As a benchmark scenario, we use  the no link misclassification case. This corresponds to the analysis conducted by \cite{leung2015two}. 
Our results suggest that, even with misclassified links, the most important determinants driving the lending decisions of the individuals in the network are reciprocation, homophily on gender, and whether or not the individuals are relatives. 
Moreover, this analysis documents the importance of using an inference method robust to the misclassification of network links. For instance, in a scenario that controls for up to 50\% probability of link misclassification as negatives, the 95\%  confidence intervals obtained using our inference method can be up to $2.95$ times larger in length than the confidence intervals that ignore the misclassification problem.

Our methodology contributes to the growing econometric literature studying the formation of networks (see \citet{graham:2015}, \citet{chandrasekhar:2016}, and \citet{depaula:2017} for an overview). Within this literature, the studies by \citet{leung2015two, menzel2015strategic, miyauchi:2016, boucher/mourifie:2017, mele:2017, depaula/richards-shubik/tamer:2018, thirkettle:2019, christakis/fowler/imbens/kal:2020, ridder/sheng:2020, sheng:2020, badev:2021}, and \citet{gualdani:2021} have analyzed game-theoretic models of network formation. The papers most similar to ours are those by \citet{leung2015two} and  \citet{ridder/sheng:2020}, which analyze the strategic formation of a directed network with incomplete information and network externalities. 
Relative to these papers, our paper allows for the links to be potentially misclassified and examines the problems arising from that misclassification in identifying and estimating the network formation model.\footnote{A different strand from this literature has studied dyadic link formation models with unobserved node-specific heterogeneity  (see, e.g., \citealt{graham2015empirical,charbonneau:2017, toth:2017, jochmans:2018, dzemski:2019, candelaria:2020,gao:2020, ma/su/zhang:2020, zeleneev:2020, auerbach:2022}). These papers rule out the presence of link misclassification in their setup.} 

Within the existing literature of network formation models,  \citet{chandrasekhar/lewis:2016} and \citet{thirkettle:2019} have examined the effects of partially observed network data on recovering or bounding network statistics, such as the Katz-Bonacich centrality or clustering measures.
Their methodologies complement ours as they focus on sampled networks where the observed part of the network is assumed not to suffer from any measurement error. Moreover, their asymptotic framework assumes that multiple small networks are available to the researcher. 
In contrast, our methodology allows for all the links in the network to be potentially misclassified, e.g., the observed links could represent false positives with a fixed probability. In addition, our setting is designed for the case when only one large network is available to the researcher. 

This paper is also related to the literature of mismeasured discrete variables (e.g.,   \citet{hausman/abrevaya/schott-morton:1998,mahajan:2006,lewbel:2007,chen/hu/lewbel:2008,hu:2008,molinari:2008,hu/lin:2018}).\footnote{\citet{hu/lin:2018} analyzes a social interaction model where the decision variable is mismeasured, but the network of interaction is observed without error.}  
Our approach to misclassified links is based on \cite{molinari:2008}, who offers a general bounding strategy with misclassified discrete variables.  
Specifically, we can also incorporate ex-ante restrictions on the misclassification probabilities and bound the underlying parameter of interest.

There are several papers in the literature of social interactions that have examined the econometric challenges posed by the limited availability of network data. The settings considered in those studies include partially observed links (\citealp{blume/brock/durlauf/jayaram:2015, kline:2015, lewbel/qu/tang:2019, griffith:2019, zhang2020}), sampled network data (\citealp{liu:2013, ata/belloni/candogan:2018}), aggregated network data (\citealp{alidaee/auerbach/leung:2020, boucher/houndetoungan:2020, breza/chandrasekhar/mcCormick/pan:2020}), and completely unobserved network data (\citealp{rose:2015, manresa:2016, depaula/rasul/souza:2019b}). However, their methods differ substantially from the ones proposed here since those papers have different objects of interest. In particular, they focus on estimating peer effects, network externalities, or network statistics, instead of investigating the effects that limited network data have on the preference parameters that drive the formation of a network. In fact,  most of these papers take the network of interactions to be exogenously determined. In contrast, our paper studies the effects of link misclassification on the preference parameters characterizing a strategic network formation model.

The remainder of the paper is organized as follows. 
Section \ref{sec2} describes the network formation model as a game of incomplete information.
Section \ref{sec3} characterizes the identified set for the structural parameters.
Section \ref{sec4} introduces an inference method based on the representation of the identified set. 
Sections \ref{sec5} presents an empirical application using data on a lending network in rural villages in southern India. 
Section \ref{sec6} provides concluding remarks. 
The appendix collects additional results and all the proofs of the paper. 

\section{Network Formation Game with Misclassification}\label{sec2}

We extend the directed network formation model discussed in \cite{leung2015two} and  \cite{ridder/sheng:2020} to account for potentially misclassified links. 
In particular, we use a static game of incomplete information as a framework to model the formation of a directed network.
For simplicity, our approach follows \cite{leung2015two}.

Consider a network determined by a set of $n$ agents, which we denote by $\mathcal{N}_n=\{1,\ldots,n\}$.  
We assume that each pair of agents $(i,j)$ with $i,j \in \mathcal{N}_n$ is endowed with a vector of exogenous attributes $X_{ij} \in \mathbb{R}^d$ and an idiosyncratic shock $\varepsilon_{ij} \in \mathbb{R}$.
Let $X=\{X_{ij}: i,j \in \mathcal{N}_{n}\}\in \mathcal{X}^n$ be a profile of attributes that is common knowledge to all the agents in the network,  $\varepsilon_i=\{\varepsilon_{ij}:j\in\mathcal{N}_n\}$ be a profile of idiosyncratic shocks that is agent $i$'s private information, and $\varepsilon=\{\varepsilon_i:i\in\mathcal{N}_n\}$ collects all the profiles of idiosyncratic shocks.

The network is represented by an $n\times n$ adjacency matrix $G_n^{\ast}$, where the $ij$th element $G^{\ast}_{ij,n} = 1$ if agent $i$ forms a direct link to agent $j$ and $G^{\ast}_{ij,n}=0$ otherwise. 
We assume that the network is directed, i.e., $G^{\ast}_{ij,n}$ and $G^{\ast}_{ji,n}$ may be different. The diagonal elements are normalized to be equal to zero, i.e., $G^\ast_{ii,n}=0$. 
The researcher observes $G_{n}$, a proxy of the true underlying network $G_{n}^\ast$ with potentially misclassified links. 

Given the network $G_{n}^\ast$ and information $(X,\varepsilon_i)$, agent $i$ has utility
\begin{align*}
U_{i}(G_{i,n}^\ast, G_{-i,n}^\ast, X, \varepsilon_{i}) 
= \frac{1}{n} \sum_{j=1}^{n} G_{ij,n}^\ast
 \left[    
\left(
G_{ji,n}^\ast,\frac{1}{n}\sum_{k\ne i}G_{kj,n}^\ast,\frac{1}{n}\sum_{k\ne i, j}G_{ki,n}^\ast G_{kj,n}^\ast, X_{ij}' 
\right)\beta_0
+ \varepsilon_{ij}
 \right],
\end{align*} 
where $G^\ast_{i,n}=\{G^{\ast}_{ij,n}: j\in\mathcal{N}_n\}$, $G^\ast_{-i,n} = \{G^\ast_{j,n}: j\neq i\}$, and $\beta_0$ is an unknown finite dimensional vector in a parameter space $\mathcal{B}$.

Agent $i$'s marginal utility of forming the link $G_{ij,n}^\ast$ depends on a vector of network statistics, the profile of exogenous attributes, and the link-specific idiosyncratic component.\footnote{For simplicity, we consider three different kinds of factors in the vector of network statistics. It is straightforward to generalize our results to the complete specification in \cite{leung2015two}. A similar specification of the utility function has been used in \citet{mele:2017,thirkettle:2019}, and \citet{sheng:2020}.}
The first component in the vector of network statistics captures the utility obtained from a reciprocated link with agent $j$, $G^{\ast}_{ji,n}$. 
The second network statistic represents the in-degree of agent $j$, $\frac{1}{n}\sum_{k\ne i}G_{kj,n}^\ast$, which captures the utility obtained from connecting with agents of high centrality in the network.  
The last network statistic captures the utility of being connected to the same agents, $\frac{1}{n}\sum_{k\ne i, j}G_{ki,n}^\ast G_{kj,n}^\ast$. 
The profile of exogenous attributes captures the preferences for homophily on observed characteristics. Finally, $\varepsilon_{ij}$ is an unobserved link-specific component affecting agent $i$'s decision to link with agent $j$.

Let $\delta_{i,n}(X, \varepsilon_{i})$ denote a generic agent $i$'s pure strategy, which maps the information available to agent $i$, $(X,\varepsilon_{i})$, to an action in $\mathcal{G}^{n} = \{0,1\}^{n}$. Let $\sigma_{i,n}(g_{i,n}^\ast\mid X) =  Pr(\delta_{i,n}(X,\varepsilon_{i})=g_{i,n}^\ast\mid X)$ be the probability that agent $i$ chooses action $g_{i,n}^\ast \in \mathcal{G}^{n}$ given $X$ and $\sigma_{n}(X) = \{ \sigma_{i,n}(g_{i,n}^\ast\mid X), i \in \mathcal{N}_{n}, g_{i,n}^\ast\in \mathcal{G}^{n} \}$. We call $\sigma_{n}(X)$ a belief profile. Given a belief profile $\sigma_n$ and the information $(X,\varepsilon_i)$, agent $i$ chooses $g_{i,n}^\ast$ from $\mathcal{G}^{n}$ to maximize the expected utility $U_{i}(g_{i,n}^\ast, \delta_{-i,n}(X,\varepsilon_{-i}), X, \varepsilon_{i})$ given $(X,\varepsilon_{i},\sigma_{n})$. 

In an $n$-player game, a Bayesian Nash equilibrium $\sigma_{n}(X)$ is a belief profile that satisfies
\begin{equation*}
   \sigma_{i,n}(g_{i,n}^\ast\mid X) = Pr(\delta_{i,n}(X,\varepsilon_{i})=g_{i,n}^\ast\mid X, \sigma_{n})
\end{equation*}
for all $X \in \mathcal{X}^n$,  $g_{i,n}^\ast \in \mathcal{G}^n$, and $i \in \mathcal{N}_{n}$, where
\begin{equation*}
\delta_{i,n}(X,\varepsilon_{i})= \argmax_{g_{i,n}^\ast \in \mathcal{G}^n} E\left[U_{i}(g_{i,n}^\ast, \delta_{-i,n}(X,\varepsilon_{-i}), X, \varepsilon_{i}) \mid X, \varepsilon_{i} ,\sigma_n\right].   
\end{equation*}

We impose the following assumption on the observed attributes and idiosyncratic shocks, which also has been used by \cite{leung2015two} and \cite{ridder/sheng:2020}.
\begin{assumption}
\label{Ass1}
\begin{enumerate}
\item[(i)] For any $A_{1}, A_{2} \subset \mathcal{N}_{n}$ disjoint, $\{X_{ij}: i,j \in A_{1}\}$ and $\{X_{kl}: k,l \in A_{2}\}$ are independent. 
\item[(ii)] $\{\varepsilon_{ij}: i,j\in\mathcal{N}_{n}\}$ are identically distributed with the standard normal distribution, and $\{\varepsilon_{i}: i \in \mathcal{N}_{n} \}$ are independent. (We use $\Phi$ for the standard normal cdf and $\phi$ for the pdf.) 
\item[(iii)] $\varepsilon$ and $X$ are independent.
\item[(iv)] Attributes $\{X_{ij}:i,j\in\mathcal{N}_{n}\}$ are identically distributed with finite support $\mathcal{X}=\left\{ x_1,\cdots, x_J \right\}$, and $Pr\left( x_j \right)>0$ for all $x_j \in \mathcal{X}$.
\end{enumerate}
\end{assumption}

Condition (i) allows for correlation across the pairs of attributes $X_{ij}$ and $X_{kl}$ if they have a common index (i.e., $i=k$). 
As a consequence, the attributes across all the dyads formed by one agent may be dependent.
Condition (ii) assumes that the idiosyncratic shocks are identically distributed with known standard normal distribution. 
This represents a regularity condition as the results can be easily adjusted to the case that $\varepsilon_{ij}$ has an absolutely continuous distribution $F_{\varepsilon}(\cdot; \theta_{\varepsilon})$  that is known up to a finite-dimensional parameter  $\theta_{\varepsilon}$. In Appendix \ref{sec_a2}, we relax this assumption to a setting where the distribution of $\varepsilon_{ij}$ is not parametrically restricted and characterize the identified set in a semiparametric framework.
Notice that this condition also implies that the components of $\varepsilon_{i}$ may be arbitrarily correlated.
Condition (iii) rules out the possibility of agents learning about others' private information from the observed profile of attributes that is common knowledge. 
Condition (iv) assumes that $X_{ij}$ is a discrete random vector with finite support and will be used to prove uniform convergence of a nonparametric first-stage estimator. 
Requiring that $X_{ij}$ is discretely distributed is not necessary for our inference method and can be relaxed. 
See \citet[Appendix B]{leung2015two} for further details when $X_{ij}$ is continuously distributed.

We focus on a symmetric equilibrium for our inference method \citep[cf.][]{leung2015two}. This approach is suitable when the identities of the individuals in the network are irrelevant. An equilibrium profile $\sigma_{n}$ is symmetric if $\sigma_{i,n}(g_{i,n}^\ast \mid X)=\sigma_{\pi(i),n}(\pi(g_{\pi(i),n}^\ast)\mid \pi(X))$ for any $i \in \mathcal{N}_{n}$, $g_{i,n}^\ast \in \mathcal{G}^n$, and any permutation function $\pi \in \Pi$.\footnote{The permutation function $\pi \in \Pi$ is defined as in \citet[][Page 184]{leung2015two}.} Using Assumption \ref{Ass1}, \citet[][Theorem 1]{leung2015two} has shown the existence of a symmetric equilibrium (cf. \citet[][Proposition 1]{ridder/sheng:2020}). We take that result as given and summarize it in the next assumption.

\begin{assumption}
\label{Ass2}
For any $n$, the agents play a symmetric equilibrium $\sigma_n$, i.e., there exists $\{\delta_{i,n}:i\in \mathcal{N}_{n}\}$ such that for any $i \in \mathcal{N}_{n}$ the following holds: 
\begin{enumerate}
   \item[(i)] $G_{i,n}^\ast = \delta_{i,n}(X,\varepsilon_{i})$.
   \item[(ii)] $\sigma_{i,n}(g_{i,n}^\ast \mid X)=Pr(\delta_{i,n}(X,\varepsilon_{i})=g_{i,n}^\ast\mid X, \sigma_{n})$.
   \item[(iii)] $\delta_{i,n}(X,\varepsilon_{i})= \argmax_{g_{i,n}^\ast \in \mathcal{G}^n} E\left[U_{i}(g_{i,n}^\ast, \delta_{-i,n}(X,\varepsilon_{-i}), X, \varepsilon_{i}) \mid X, \varepsilon_{i} ,\sigma_n\right]$.
   \item[(iv)] $\sigma_n$ is symmetric.    
\end{enumerate}
\end{assumption}

Under our characterization of the equilibrium, the true underlying network is rationalized by a symmetric equilibrium in a network game with $n$ agents. Implicitly in Assumption 2, it is required that if there are multiple equilibria consistent with the model, the agents coordinate on a symmetric equilibrium using an equilibrium selection mechanism. Finally, since the inference method that we introduced is based on a large-network asymptotics that is conditional on the realized equilibrium,  uniqueness of the equilibrium is unnecessary (cf \citealt{leung2015two}, and \citealt{ridder/sheng:2020}). 

The next lemma characterizes the optimal decision rule for the formation of each link in the network. 
\begin{lemma}(Theorem 1, \citep{leung2015two})
\label{lemma1}
Under Assumption \ref{Ass1} and \ref{Ass2} , $G_{ij,n}^\ast=1 \left\{(Z_{ij,n}^\ast)'\beta_0+ \varepsilon_{ij}\geq 0\right\}$, where 
$$
 \gamma_{ij,n}^\ast =  E \left[ \left(G_{ji,n}^\ast,\frac{1}{n}\sum_{k\ne i}G_{kj,n}^\ast,\frac{1}{n}\sum_{k\ne i,j}G_{ki,n}^\ast G_{kj,n}^\ast\right)' \mid X, \sigma_n  \right]
$$
and
$$
Z_{ij,n}^\ast=\left(\begin{array}{c}\gamma_{ij,n}^\ast\\X_{ij}\end{array}\right).
$$
\end{lemma}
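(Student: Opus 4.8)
The plan is to exploit two structural features of the utility: it is additively separable across the potential links $j$, and within each summand it is linear in agent $i$'s own action $g_{ij,n}^\ast$. Together these reduce the expected-utility maximization over $\mathcal{G}^n=\{0,1\}^n$ to $n$ independent binary decisions, each solved by a threshold-crossing rule. I would organize the argument around first computing $E[U_i(g_{i,n}^\ast,\delta_{-i,n}(X,\varepsilon_{-i}),X,\varepsilon_i)\mid X,\varepsilon_i,\sigma_n]$ and then maximizing it coordinate by coordinate.

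First I would note that the three network statistics in the bracket—the reciprocity term $G_{ji,n}^\ast$, the weighted in-degree $\frac{1}{n}\sum_{k\neq i}G_{kj,n}^\ast$, and the common-neighbor count $\frac{1}{n}\sum_{k\neq i}G_{ki,n}^\ast G_{kj,n}^\ast$—are functions of the opponents' actions $\delta_{-i,n}(X,\varepsilon_{-i})$ only and do not involve agent $i$'s own choice $g_{i,n}^\ast$. Writing the bracketed network-statistics vector as $W_{ij}$, the utility is $\frac{1}{n}\sum_{j=1}^n g_{ij,n}^\ast\,[\,(W_{ij}',X_{ij}')\beta_0+\varepsilon_{ij}\,]$, in which $g_{ij,n}^\ast$ is the chosen (nonrandom) action, $X_{ij}$ is $X$-measurable, and $\varepsilon_{ij}$ is part of the conditioning information; hence taking the conditional expectation only affects $W_{ij}$, and the objective becomes $\frac{1}{n}\sum_{j=1}^n g_{ij,n}^\ast\,[\,(E[W_{ij}\mid X,\varepsilon_i,\sigma_n]',X_{ij}')\beta_0+\varepsilon_{ij}\,]$.

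Next I would reduce $E[W_{ij}\mid X,\varepsilon_i,\sigma_n]$ to $\gamma_{ij,n}^\ast$. Because $W_{ij}$ depends on the private shocks only through $\varepsilon_{-i}$ (given $X$ and $\sigma_n$) and, by Assumption \ref{Ass1}(2), $\varepsilon_i$ is independent of $\varepsilon_{-i}$, conditioning on $\varepsilon_i$ is immaterial, so $E[W_{ij}\mid X,\varepsilon_i,\sigma_n]=E[W_{ij}\mid X,\sigma_n]=\gamma_{ij,n}^\ast$. The objective therefore equals $\frac{1}{n}\sum_{j=1}^n g_{ij,n}^\ast\,[(Z_{ij,n}^\ast)'\beta_0+\varepsilon_{ij}]$. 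Since this is additively separable across $j$ and each coefficient of the binary $g_{ij,n}^\ast$ is the scalar $(Z_{ij,n}^\ast)'\beta_0+\varepsilon_{ij}$, the maximizer sets $g_{ij,n}^\ast=1$ precisely when this scalar is nonnegative, giving $G_{ij,n}^\ast=\mathbf{1}\{(Z_{ij,n}^\ast)'\beta_0+\varepsilon_{ij}\geq 0\}$; ties have probability zero because $\varepsilon_{ij}$ is absolutely continuous under Assumption \ref{Ass1}(2). The step deserving the most care is verifying that none of the three network statistics depends on agent $i$'s own action—so that the expected utility is genuinely separable and linear in $g_{i,n}^\ast$—together with the independence argument that collapses the conditional expectation to the belief-based object $\gamma_{ij,n}^\ast$; once these are established the coordinatewise optimization is immediate.
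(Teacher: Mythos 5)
Your proof is correct and follows essentially the same route as the paper's: the paper's two-line proof asserts exactly the reduction you spell out, namely that conditional on $(X,\varepsilon_i,\sigma_n)$ the expected utility equals $\frac{1}{n}\sum_j g_{ij,n}^\ast[(Z_{ij,n}^\ast)'\beta_0+\varepsilon_{ij}]$, after which coordinatewise maximization gives the threshold rule. Your write-up simply makes explicit the steps the paper leaves implicit—that the network statistics depend only on opponents' actions, and that independence of $\varepsilon_i$ from $\varepsilon_{-i}$ (together with independence of $\varepsilon$ and $X$) collapses $E[W_{ij}\mid X,\varepsilon_i,\sigma_n]$ to $\gamma_{ij,n}^\ast$.
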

A direct implication of Lemma \ref{lemma1} is that each agent makes separate linking decisions for each of her potential links.
Given the misclassification problem, both the optimal action $G_{ij,n}^\ast$ and the equilibrium beliefs about the network statistics $\gamma_{ij,n}^\ast$ in the optimal decision rule will be misclassified.
In other words, the misclassification problem affects both left and right-hand side variables in the optimal decision rule.

We assume that the conditional distribution of the observed network $G_n$ is related to that of the true state of the network $G_n^\ast$ as follows.

\begin{assumption}
\label{Ass3}
There are two unknown real numbers $\rho_{0}, \rho_{1} \in [0,1)$ with $\rho_{0}+\rho_{1}<1$ such that the following two statements hold for every $n$ and every $i,j,k\in\mathcal{N}_n$.
(i) $G_{ki,n}$ and $G_{kj,n}$ are independent given $(G_{ki,n}^\ast, G_{kj,n}^\ast,X,\sigma_n)$. (ii) $Pr(G_{ij,n}\ne G_{ij,n}^\ast\mid G_{ij,n}^\ast,X,\sigma_n)=\rho_01\{G_{ij,n}^\ast=0\}+\rho_11\{G_{ij,n}^\ast=1\}$.
\end{assumption}

Assumption \ref{Ass3} states that given the information $X,\sigma_n$, the misclassification of each link is conditionally at random with unknown probabilities $\rho_0$ and $\rho_1$. In particular, Condition $(i)$ ensures that the misclassification of links can be treated as a conditionally independent process across dyads $\left\{ ki \right\}$ and $\left\{ kj \right\}$ for $i\neq j$.
This condition is imposed for simplicity and used exclusively to control for the nonlinear endogenous factor $\frac{1}{n}\sum_{k\ne i,j}G_{ki,n}^\ast G_{kj,n}^\ast$ in terms of its observed counterpart; Lemma \ref{belief_lemma} provides the exact statement. In other words, it plays no role in controlling for the reciprocity and in-degree statistics. Moreover, this condition becomes redundant when the unobserved links-specific components are assumed to be independently distributed, as it has been used in \citet{menzel2015strategic,chandrasekhar/lewis:2016, thirkettle:2019,ridder/sheng:2020} and \citet{sheng:2020}. Remark \ref{remark:independent_types} shows this result.

Condition $(ii)$ characterizes the misclassification probabilities, and it states that given the information $X,\sigma_n$, the links can be misclassified as positives with probability $\rho_0$ and as negatives with probability $\rho_1$. These misclassification probabilities are unknown and the degree of noise in the data cannot exceed 1, i.e., $\rho_0 + \rho_1 <1$. \citet{hausman/abrevaya/schott-morton:1998} have also used Condition $(ii)$, but in the setting of a binary choice model with misclassification of the dependent variable. A similar assumption has been used in \citet{molinari:2008} to bound the latent misclassification distribution. In this paper, we implement a bounding strategy that follows a similar logic to \citet{molinari:2008}. 

Assumption \ref{Ass3} provides a simple and intuitive characterization of the misclassification process, which is suitable to rationalize a setting where network links can be misclassified at random as negatives or positives. The misclassification probabilities can be asymmetrical to represent different sources of link misclassification. For instance, if the main reasons for links to be misclassified are that individuals are getting fatigued during the interview, forgetting their connections, or apprehensive about listing their links, then we can expect that false negatives are more likely to be present in $G_n$ than false positives, so that $\rho_1 \geq \rho_0=0$.

The current setup is chosen to simplify the exposition; however, it is flexible enough that it can be used to characterize heterogeneity in the misclassification process. In Appendix \ref{sec_aprob} we consider a generalization of the misclassification probabilities to be covariate-dependent. In this setting, the heterogeneity in the misclassification probabilities is due to differences in the individual's observed types. This represents a desirable alternatively if the researcher believes that the links of certain profiles of individuals are more likely to be misclassified.

An important limitation of the current setting is that it is not suitable to study top-censoring as a source of measurement error in the network. Top-censoring binds the number of links that each individual can report, and thus represents a restriction on the action space and utility that each individual draws from her total number connections. \citet{depaula/richards-shubik/tamer:2018} developed a network formation model of complete information where the total number of links that each individual can establish is bounded. 
Incorporating this source of measurement error into our setting is an important extension that we leave for future research.\footnote{Top-censoring does not represent a drawback of the empirical application considered in this paper, as the reporting cap for the lending behavior was reached in less than 0.02\% of the survey's total respondents, and these observations were not included in our final data. Section \ref{sec5} discusses the construction of the lending network in detail.} 

The following statement is a key observation in our analysis, which relates the observed network statistics $\gamma_{ij,n}$ to the payoff relevant network statistics $\gamma_{ij,n}^\ast$. 

\begin{lemma}
\label{belief_lemma}
If Assumptions \ref{Ass1}-\ref{Ass3} hold, then $\gamma_{ij,n}^\ast=c(\rho_0,\rho_1)+C(\rho_0,\rho_1)\gamma_{ij,n}$ for every $i,j$, where
$$
\gamma_{ij,n}=E\left[\left(G_{ji,n},\frac{1}{n}\sum_{k\ne i}G_{kj,n},\frac{1}{n}\sum_{k\ne i,j}G_{ki,n} G_{kj,n},\frac{1}{n}\sum_{k\ne i,j}(G_{ki,n}+G_{kj,n})\right)'\mid X, \sigma_n\right],
$$
and, for any $r_0, r_1 \geq 0$ such that $r_0 + r_1 < 1$,
\begin{eqnarray*}
c(r_0,r_1)=
-
\left(\begin{array}{cccc}
1&0&0&0\\
0&1&0&0\\
0&0&1&0\\
\end{array}\right)
\left(\begin{array}{cccc}
1-r_0-r_1&0&0&0\\
0&1-r_0-r_1&0&0\\
0&0&(1-r_0-r_1)^2&r_0(1-r_0-r_1)\\
0&0&0&1-r_0-r_1\\
\end{array}\right)^{-1}
\left(\begin{array}{c}
r_0\\
r_0\\
r_0^2\\
r_0
\end{array}\right)
&&
\\
C(r_0,r_1)=
\left(\begin{array}{cccc}
1&0&0&0\\
0&1&0&0\\
0&0&1&0\\
\end{array}\right)
\left(\begin{array}{cccc}
1-r_0-r_1&0&0&0\\
0&1-r_0-r_1&0&0\\
0&0&(1-r_0-r_1)^2&r_0(1-r_0-r_1)\\
0&0&0&1-r_0-r_1\\
\end{array}\right)^{-1}.
&&
\end{eqnarray*}
\end{lemma}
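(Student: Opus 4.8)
The plan is to reduce everything to a single scalar misclassification identity and then assemble the four observed moments into a linear system that I can invert. First I would establish the building block: for any ordered pair $(k,l)$, Assumption \ref{Ass3}(ii) together with the law of iterated expectations gives $E[G_{kl,n}\mid X,\sigma_n]=\rho_0+(1-\rho_0-\rho_1)\,E[G_{kl,n}^\ast\mid X,\sigma_n]$, since $E[G_{kl,n}\mid G_{kl,n}^\ast,X,\sigma_n]=\rho_0+(1-\rho_0-\rho_1)G_{kl,n}^\ast$. Applying this identity to $G_{ji,n}$, and to each $G_{kj,n}$ before averaging over $k\ne i$, delivers the first two coordinates of $\gamma_{ij,n}^\ast$ as affine functions of the corresponding coordinates of $\gamma_{ij,n}$; this already pins down the first two rows of $C(\rho_0,\rho_1)$ and the first two entries of $c(\rho_0,\rho_1)$.

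The crux is the third coordinate, the co-linking term $E[n^{-1}\sum_{k\ne i}G_{ki,n}^\ast G_{kj,n}^\ast\mid X,\sigma_n]$. Here I would invoke the conditional independence in Assumption \ref{Ass3}(i): conditional on $(G_{ki,n}^\ast,G_{kj,n}^\ast,X,\sigma_n)$ the observed links $G_{ki,n}$ and $G_{kj,n}$ are independent, so their conditional product mean factors as $[\rho_0+(1-\rho_0-\rho_1)G_{ki,n}^\ast]\,[\rho_0+(1-\rho_0-\rho_1)G_{kj,n}^\ast]$. Expanding this product and taking expectations produces three kinds of terms: a constant $\rho_0^2$, cross terms proportional to the true in-degrees $E[G_{ki,n}^\ast\mid X,\sigma_n]$ and $E[G_{kj,n}^\ast\mid X,\sigma_n]$, and the desired $(1-\rho_0-\rho_1)^2$ multiple of the true co-linking statistic. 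The cross terms are the obstruction, since they are not payoff-relevant objects on their own.

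This is precisely what the fourth observed statistic is for. I would note that $E[n^{-1}\sum_{k\ne i}(G_{ki,n}+G_{kj,n})\mid X,\sigma_n]$ is, by the same scalar identity, an affine function of the sum of true in-degrees, so it supplies exactly the linear combination of cross terms appearing in the expansion of the co-linking moment. Collecting the four observed moments on one side yields a system $\gamma_{ij,n}=\text{(intercept)}+M\,\tilde\gamma_{ij,n}^\ast$, where $\tilde\gamma_{ij,n}^\ast$ augments the three payoff-relevant statistics with the true degree-sum, $M$ is the displayed $4\times4$ matrix whose only nontrivial coupling is the $(3,4)$ entry $\rho_0(1-\rho_0-\rho_1)$ recording the cross terms, and the intercept is $(\rho_0,\rho_0,\rho_0^2,\rho_0)'$. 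Inverting $M$, which reduces to inverting its upper-triangular lower-right $2\times2$ block, and then left-multiplying by the $3\times4$ coordinate-selection matrix $S=(\,I_3\ \ 0\,)$, gives $C(\rho_0,\rho_1)=SM^{-1}$ and $c(\rho_0,\rho_1)=-SM^{-1}(\rho_0,\rho_0,\rho_0^2,\rho_0)'$, which is the claim.

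The main obstacle I anticipate is the bookkeeping in the third coordinate: making sure the cross terms generated by the product expansion are matched exactly by the fourth observed moment so that the system closes within the four available statistics, and verifying that the constant $\rho_0^2$ is reproduced correctly after inversion, i.e. that the off-diagonal entry $-\rho_0/(1-\rho_0-\rho_1)^2$ of $M^{-1}$ interacts with the intercept to produce the stated $c(\rho_0,\rho_1)$. The remaining steps, namely the scalar identity and the block inversion of $M$, are routine.
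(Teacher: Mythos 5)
Your proposal is correct and follows essentially the same route as the paper's proof: the scalar misclassification identity for single links, the factorization of $E[G_{ki,n}G_{kj,n}\mid X,\sigma_n]$ via Assumption \ref{Ass3}(i)--(ii) producing the constant $\rho_0^2$, the degree-sum cross terms, and the $(1-\rho_0-\rho_1)^2$ co-linking term, then stacking the four observed moments into the affine system with the displayed $4\times 4$ matrix (the paper's $D(\rho_0,\rho_1)$, your $M$), inverting it, and selecting the first three rows. Your write-up is in fact slightly more explicit than the paper's, which asserts the product-moment identity directly from Assumption \ref{Ass3} without spelling out the conditional-independence expansion.
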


As $\gamma_{ij,n}^\ast =  E \left[ \left(G_{ji,n}^\ast,\frac{1}{n}\sum_{k\ne i}G_{kj,n}^\ast,\frac{1}{n}\sum_{k\ne i,j}G_{ki,n}^\ast G_{kj,n}^\ast\right)' \mid X, \sigma_n  \right]$, we can equivalently write the statement of Lemma \ref{belief_lemma} as 
\begin{eqnarray*}
E\left[ G_{ji,n}^\ast  \mid X, \sigma_n\right] 
&=&
- (1-\rho_0 - \rho_1)^{-1} \rho_0 + (1-\rho_0 - \rho_1)^{-1}E\left[ G_{ji,n}  \mid  X, \sigma_n\right],
\\
\frac{1}{n} \sum_{k\neq j} E\left[ G_{kj,n}^\ast \mid X, \sigma_n\right] 
&=& 
-(1-\rho_0 - \rho_1)^{-1}\rho_0 + (1-\rho_0 - \rho_1)^{-1} \frac{1}{n} \sum_{k \neq j} E\left[ G_{kj,n}  \mid  X, \sigma_n\right],\mbox{ and }
\\
\frac{1}{n} \sum_{k \neq i,j} E\left[ G_{ki,n}^\ast G_{kj,n}^\ast \mid X, \sigma_n \right] 
&=&
\rho_0^2 (1-\rho_0-\rho_1)^{-2} + (1-\rho_0-\rho_1)^{-2} \frac{1}{n} \sum_{k\neq i,j} E\left[G_{ki,n} G_{kj,n}\mid X, \sigma_n\right]  \\&&- \rho_0^2 (1-\rho_0-\rho_1)^{-2} \frac{1}{n} \sum_{k\neq i,j} E\left[G_{ki,n}+G_{kj,n}\mid X, \sigma_n\right].
\end{eqnarray*}

The first three components in $\gamma_{ij,n}$ are the observed analog to the statistics in $\gamma_{ij,n}^\ast$ since they are determined by the observed network $G_{n}$. 
The last component in $\gamma_{ij,n}$ is the sum of the in-degrees of agents $i$ and $j$, and it is the result of controlling for the unobserved network statistics $\frac{1}{n}\sum_{k\ne i, j}G_{ki,n}^\ast G_{kj,n}^\ast$. 
In other words, the last two statistics in $\gamma_{ij,n}$ control for the beliefs about the unobserved network statistic $\frac{1}{n}\sum_{k\ne i, j}G_{ki,n}^\ast G_{kj,n}^\ast$, which is the only nonlinear endogenous factor. The intuition behind this result is similar to the one found in polynomial regression models with mismeasured continuous covariates \citep{hausman/ichimura/newey/powell:1991}.

\begin{remark}[Independent link-specific shocks $\varepsilon_{ij}$]
   \label{remark:independent_types}
   A common assumption invoked in the literature of strategic network formation with network externalities is for $\varepsilon_{ij}$ to be independent across all $i,j$ (see e.g., \citealt{menzel2015strategic,chandrasekhar/lewis:2016, thirkettle:2019,ridder/sheng:2020,sheng:2020}). If we assume $\varepsilon_{ij}$ to be independent across all $i,j$,  Condition (ii) in Assumption \ref{Ass3} is sufficient to control the nonlinear endogenous factor $\frac{1}{n}\sum_{k\ne i,j}G_{ki,n}^\ast G_{kj,n}^\ast$ in terms of its observed counterpart.  This result follows from the conditionally independent formation of links, i.e., $E\left[ G_{ki,n}^\ast G_{kj,n}^\ast \mid X, \sigma_n \right] = E\left[ G_{ki,n}^\ast \mid X, \sigma_n \right]E\left[ G_{kj,n}^\ast \mid X, \sigma_n \right]$. Then, Condition (ii) in Assumption \ref{Ass3} can be used to express $E\left[ G_{ki,n}^\ast  \mid X, \sigma_n\right] = (1-\rho_0 - \rho_1)^{-1} \left(E\left[ G_{ki,n}  \mid  X, \sigma_n\right] -\rho_0\right)$ and similarly for the conditional expectation of $G_{kj,n}^\ast$.
\end{remark}

Assumptions \ref{Ass1}-\ref{Ass3} imply the following relationship between the distributions of $G_{ij,n}$ and $G_{ij,n}^\ast$, which will be used in our identification analysis.
Since we observe $G_{ij,n}$ in the dataset but the outcome of interest is $G_{ij,n}^\ast$, it is crucial to connect these two objects. 
\begin{lemma}\label{Ass3_lemma}
Under Assumptions \ref{Ass1}-\ref{Ass3}, $Pr(G_{ij,n}=1\mid X_{ij},\gamma_{ij,n}, \gamma_{ij,n}^\ast)= \rho_0 Pr(G_{ij,n}^\ast=0\mid X_{ij},\gamma_{ij,n}^\ast)+(1-\rho_1)Pr(G_{ij,n}^\ast=1\mid X_{ij},\gamma_{ij,n}^\ast)$. 
\end{lemma}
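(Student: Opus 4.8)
The plan is to condition on the latent link $G_{ij,n}^\ast$ and invoke the law of total probability, so that the observed link probability splits into a misclassification kernel and a latent choice probability that can each be simplified on their own. Writing $W=(X_{ij},\gamma_{ij,n},\gamma_{ij,n}^\ast)$ for the conditioning information, I would begin from
\begin{align*}
Pr(G_{ij,n}=1\mid W)=\sum_{g\in\{0,1\}}Pr\!\left(G_{ij,n}=1\mid G_{ij,n}^\ast=g,W\right)Pr\!\left(G_{ij,n}^\ast=g\mid W\right)
\end{align*}
and then treat the two factors separately. A useful preliminary observation is that, by Lemma \ref{belief_lemma}, $\gamma_{ij,n}^\ast$ is a deterministic (affine) function of $\gamma_{ij,n}$, so conditioning on $W$ is the same as conditioning on $(X_{ij},\gamma_{ij,n})$; the content of the lemma is therefore that the extra statistic $\gamma_{ij,n}$ can be replaced by $\gamma_{ij,n}^\ast$ in the latent choice probability.

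First I would show that the misclassification kernel reduces to the constants in Assumption \ref{Ass3}(ii). The key point is that $X_{ij}$, $\gamma_{ij,n}$, and $\gamma_{ij,n}^\ast$ are all measurable with respect to $(X,\sigma_n)$, since the latter two are by definition conditional expectations given $(X,\sigma_n)$. Hence conditioning on $(G_{ij,n}^\ast,W)$ is a coarsening of conditioning on $(G_{ij,n}^\ast,X,\sigma_n)$, and because Assumption \ref{Ass3}(ii) makes $Pr(G_{ij,n}=1\mid G_{ij,n}^\ast=g,X,\sigma_n)$ a constant that does not vary with $(X,\sigma_n)$, the tower property returns the same constant after coarsening to $W$.

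Next I would reduce $Pr(G_{ij,n}^\ast=g\mid W)$ to $Pr(G_{ij,n}^\ast=g\mid X_{ij},\gamma_{ij,n}^\ast)$. By Lemma \ref{lemma1}, $G_{ij,n}^\ast=\mathbf{1}\{(Z_{ij,n}^\ast)'\beta_0+\varepsilon_{ij}\geq 0\}$ with $Z_{ij,n}^\ast=(\gamma_{ij,n}^{\ast\prime},X_{ij}')'$, and by Assumption \ref{Ass1}(2)--(3) the shock $\varepsilon_{ij}$ is standard normal and independent of the $(X,\sigma_n)$-measurable vector $Z_{ij,n}^\ast$. Consequently $Pr(G_{ij,n}^\ast=1\mid X,\sigma_n)=\Phi((Z_{ij,n}^\ast)'\beta_0)$, which depends on $(X,\sigma_n)$ only through $(\gamma_{ij,n}^\ast,X_{ij})$. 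A further application of the tower property then shows that adding the redundant statistic $\gamma_{ij,n}$ to the conditioning set leaves this probability unchanged, so $Pr(G_{ij,n}^\ast=g\mid W)=Pr(G_{ij,n}^\ast=g\mid X_{ij},\gamma_{ij,n}^\ast)$.

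Substituting the two simplifications back into the total-probability decomposition delivers the claimed representation of $Pr(G_{ij,n}=1\mid W)$ as a linear combination of $Pr(G_{ij,n}^\ast=0\mid X_{ij},\gamma_{ij,n}^\ast)$ and $Pr(G_{ij,n}^\ast=1\mid X_{ij},\gamma_{ij,n}^\ast)$, with weights read directly off the misclassification probabilities in Assumption \ref{Ass3}(ii). I expect the main obstacle to be the careful bookkeeping of the nested conditioning sets: one must check that every additional conditioning variable is a function of $(X,\sigma_n)$ and that the inner conditional probabilities are either genuinely constant (the misclassification kernel) or depend on $(X,\sigma_n)$ only through $(\gamma_{ij,n}^\ast,X_{ij})$ (the latent choice probability), so that each use of the tower property is justified. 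Lemma \ref{lemma1} does the decisive work here, since it certifies that $\gamma_{ij,n}$ carries no information about $G_{ij,n}^\ast$ beyond what is already contained in $(\gamma_{ij,n}^\ast,X_{ij})$.
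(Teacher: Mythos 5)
Your proposal is correct and is essentially the paper's own argument: both proofs exploit that $(X_{ij},\gamma_{ij,n},\gamma_{ij,n}^\ast)$ is a function of $(X,\sigma_n)$ to lift the conditioning up to $(X,\sigma_n)$, apply Assumption \ref{Ass3} through a total-probability decomposition over $G_{ij,n}^\ast$, and then use Lemma \ref{lemma1} together with the independence of $\varepsilon_{ij}$ and $X$ to show that the latent choice probability depends on $(X,\sigma_n)$ only through $Z_{ij,n}^\ast=(X_{ij},\gamma_{ij,n}^\ast)$. One caveat: carried out exactly, your decomposition (like the paper's own proof, whose first equality uses exactly these values) yields the weights $\rho_0$ on $Pr(G_{ij,n}^\ast=0\mid X_{ij},\gamma_{ij,n}^\ast)$ and $1-\rho_1$ on $Pr(G_{ij,n}^\ast=1\mid X_{ij},\gamma_{ij,n}^\ast)$, not the displayed weights $(1-\rho_0)$ and $\rho_1$, which instead give $Pr(G_{ij,n}=0\mid X_{ij},\gamma_{ij,n},\gamma_{ij,n}^\ast)$; the lemma as stated evidently contains a typo, so you should write out the corrected coefficients rather than assert that the weights "read off" Assumption \ref{Ass3}(ii) match the displayed claim.
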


\section{Identification Analysis}\label{sec3}
We characterize the identified set based on the joint distribution $P_{0,n}$ of the observed variables $(G_{ij,n},X_{ij},\gamma_{ij,n})$.\footnote{The  identified set can be characterized based on the joint distribution of $\{(G_{ij,n},X_{ij},\gamma_{ij,n}):i,j\in\mathcal{N}_n\}$. However, it is infeasible to estimate the joint distribution of all the dyads in the network from a sample of $n$ agents. Hence, the identified set based on $\{(G_{ij,n},X_{ij},\gamma_{ij,n}):i,j\in\mathcal{N}_n\}$ is not immediately useful for inference. In contrast, $P_{0,n}$ can be estimated from our current sample.} In this section, we treat $\gamma_{ij,n}$ as observed because it can be estimated from the data as follows. For a generic value $x$ in the support of $X_{ij}$, we can define
\begin{eqnarray*}
\hat{p}(x)&=&\frac{1}{n^2}\sum_{i,j}1\{X_{ij}=x\}
\\
\hat\gamma(x)&=&\frac{\frac{1}{n^2}\sum_{i,j}\left(G_{ji,n},\frac{1}{n}\sum_{k}G_{kj,n},\frac{1}{n}\sum_{k}G_{ki,n} G_{kj,n},\frac{1}{n}\sum_{k}(G_{ki,n}+G_{kj,n})\right)'1\{X_{ij}=x\}}{\hat{p}(x)},
\end{eqnarray*}
where $\hat{p}(x)$ is an estimator for $Pr(X_{ij}=x)$ and $\hat\gamma_{ij}=\hat\gamma(X_{ij})$ is an estimator for $\gamma_{ij,n}$.
Then we can estimate the distribution of $(G_{ij,n},X_{ij},\gamma_{ij,n})$ using the empirical distribution of $(G_{ij,n},X_{ij},\hat\gamma_{ij})$.

To formalize our identification analysis, we introduce the following notation.
Denote by $\mathcal{P}^\ast$ the set of joint distributions of $(G_{ij,n},G_{ij,n}^\ast,X_{ij},\gamma_{ij,n},\gamma_{ij,n}^\ast,\varepsilon_{ij})$.
Define the parameter space $\Theta=\mathcal{B}\times\mathcal{R}$, where $\mathcal{B}$ is the parameter space for $\beta_0$ and $\mathcal{R}$ is a subset of $\{(r_0,r_1): r_0,r_1\geq 0,r_0+r_1<1\}$. 
Denote by $\mathcal{P}$ the set of joint distributions of $(G_{ij,n},X_{ij},\gamma_{ij,n})$.

Given Assumptions \ref{Ass1}-\ref{Ass3} and based on the results summarized in Lemmas \ref{lemma1}-\ref{Ass3_lemma}, we impose the following three conditions on the true joint distribution $P_{0,n}^\ast$ of the variables $(G_{ij,n},G_{ij,n}^\ast,X_{ij},\gamma_{ij,n},\gamma_{ij,n}^\ast,\varepsilon_{ij})$ and the true parameter value $\theta_0=(\beta,\rho_0,\rho_1)$.
\begin{condition}\label{independence_assn} 
Under $P^\ast$, the following holds:
(i)  $\varepsilon_{ij}$ is normally distributed with mean zero and variance one. 
(ii) $\varepsilon_{ij}$ and $(X_{ij},\gamma_{ij,n}^\ast)$ are independent.
\end{condition}
\begin{condition}\label{linear_index}
$G_{ij,n}^\ast=1 \left\{(Z_{ij,n}^\ast)'b + \varepsilon_{ij}\geq 0\right\}$ a.s. $P^\ast$, where $Z_{ij,n}^\ast=\left((\gamma_{ij,n}^\ast)',(X_{ij})'\right)'$.
\end{condition}
\begin{condition}\label{misclas_prop}
(i) $P^\ast(G_{ij,n}=1\mid X_{ij},\gamma_{ij,n}, \gamma_{ij,n}^\ast)=r_0P^\ast(G_{ij,n}^\ast=0\mid X_{ij},\gamma_{ij,n}^\ast)+(1-r_1)P^\ast(G_{ij,n}^\ast=1\mid X_{ij},\gamma_{ij,n}^\ast)$. 
(ii) $\gamma_{ij,n}^\ast=c(r_0,r_1)+C(r_0,r_1)\gamma_{ij,n} \; a.s. \; P^\ast$.
\end{condition}

For each element $P$ of $\mathcal{P}$, we are going to define the identified set based on the three conditions above. 
\begin{definition}
   \label{def:identfiedset}
For each distribution $P\in\mathcal{P}$, 
the  identified set $\Theta_I(P)$ is defined as the set of all $\theta=(b,r_0,r_1)$ in $\Theta$ for which there is some joint distribution $P^\ast\in\mathcal{P}^\ast$ such that 
Conditions \ref{independence_assn}, \ref{linear_index}, and \ref{misclas_prop} hold and the distribution of $(G_{ij,n},X_{ij},\gamma_{ij,n})$ induced from $P^\ast$ is equal to $P$.
\end{definition}
The definition of $\Theta_I(P)$ does not depend on $n$, but the identified set $\Theta_I(P_{0,n})$ under the data generating process $P_{0,n}$ can depend on the sample size when the data distribution $P_{0,n}$ depends on $n$. 

Using Definition \ref{def:identfiedset}, we characterize the identified set in Theorem \ref{theorem_identification}. In the next theorem, we use the notation $b=(b_1', b_2')'$ for any generic $b \in \mathcal{B}$, where $b_1$ represents the first three components of $b$ associated with the network externalities and $b_2$ represents the remaining components in $b$ associated with the homophily covariates $X_{ij}$.
\begin{theorem}\label{theorem_identification}
Given a joint distribution $P\in\mathcal{P}$, 
$\Theta_I(P)$ is equal to the set of $\theta\in\Theta$ satisfying  
\begin{equation}\label{moment_TheoremID}
E_P[G_{ij,n}
\mid X_{ij},\gamma_{ij,n}]=
\Psi(\theta,X_{ij},\gamma_{ij,n}),
\end{equation}
where, for a generic value $(x,\check{\gamma}_{ij})$ of $(X_{ij},\gamma_{ij,n})$, we define  
$$
\Psi(\theta,x,\check{\gamma}_{ij})=r_0+(1-r_0-r_1)\Phi((c(r_0,r_1)+C(r_0,r_1)\check{\gamma}_{ij})'b_1+x'b_2).
$$ 
\end{theorem}

Theorem \ref{theorem_identification} states that given a distribution $P$ of $(G_{ij,n},X_{ij},\gamma_{ij,n})$, the identified set is characterized as the set of parameter values that satisfy the moment conditions in Eq. \eqref{moment_TheoremID}.
Notice that under Assumptions \ref{Ass1}-\ref{Ass3}, Conditions \ref{independence_assn}-\ref{misclas_prop} hold for the true joint distribution $P_{0,n}^\ast$ and the true parameter value $\theta_0$. Therefore, we have the following result from Theorem \ref{theorem_identification}.

\begin{corollary}
Under Assumptions \ref{Ass1}-\ref{Ass3}, the true parameter value $\theta_0$ belongs to the set $\theta\in\Theta$ satisfying  \eqref{moment_TheoremID}.
In other words, the set $\theta\in\Theta$ satisfying  \eqref{moment_TheoremID} is an outer identified set (i.e., a superset of the sharp identified set) for $\theta_0$ under Assumptions \ref{Ass1}-\ref{Ass3}.
\end{corollary}

If the links were measured without error, the moment equation in Eq. (\ref{moment_TheoremID}) would degenerate into the model in \cite{leung2015two}: $E_P[G_{ij,n}-\Phi([\gamma_{ij,n}]_{123}'b_1+X_{ij}'b_2)\mid X_{ij},\gamma_{ij,n}]=0$, where $[\gamma_{ij,n}]_{123}$ is a vector composed by the first three components of $\gamma_{ij,n}$. 

Our characterization of the identified set in Theorem \ref{theorem_identification} uses the assumption that $\varepsilon_{ij}$ is normally distributed. However, this condition is invoked for simplicity and is not essential for our analysis. In fact, the results can be easily extended to the case that $\varepsilon_{ij}$ has an absolutely continuous distribution $F_{\varepsilon}(\cdot; \theta_{\varepsilon})$ that is known up to a finite-dimensional parameter $\theta_{\varepsilon} \in \Theta_{\varepsilon}$.
Moreover, in Appendix \ref{sec_a2}, we characterize the identified set as a collection of moment inequalities in a semiparametric framework when the distribution of $\varepsilon_{ij}$ is unknown.

\section{Inference}\label{sec4}

In this section, we construct confidence intervals for $\theta$ based on the identification analysis in Theorem \ref{theorem_identification} and derive its asymptotic coverage when we observe one single network with many agents. As in \cite{leung2015two} and \cite{ridder/sheng:2020}, we use asymptotic arguments based on a symmetric equilibrium.

We derive two confidence intervals for a pre-specified significance level $\alpha\in(0,1)$, and we suggest using $\hat{\mathcal{C}}_n(\alpha)$ introduced in Section \ref{easier_confidence} rather than $CI_n(\alpha)$ introduced in Section \ref{better_confidnece}, because the computation of $\hat{\mathcal{C}}_n(\alpha)$ is much less demanding.
In particular, the computation of $\hat{\mathcal{C}}_n(\alpha)$ only requires us to calculate a quasi-maximum likelihood estimator and its confidence interval for pre-specified grid values of $(r_0,r_1)$. 
On the other hand, the computation of $CI_n(\alpha)$ would require us to evaluate the test statistic that characterizes the confidence interval at every value of $\theta=(b,r_0,r_1)$, and therefore the computational cost of $CI_n(\alpha)$ can be exponential in the number of the (exogenous and endogenous) regressors, determined by the dimension of $\beta$.

\subsection{Confidence Interval through Test Inversion}\label{better_confidnece}
Consider the unconditional sample analog of the moment condition in Eq. \eqref{moment_TheoremID}: 
$$
\hat{m}_n(\theta)=\frac{1}{n^2}\sum_{i,j}\left(G_{ij,n}-
\Psi(\theta,X_{ij},\hat\gamma_{ij})
\right)\zeta_{ij},
$$
where $x_1,\ldots,x_J$ are all the support points for $X_{ij}$ and $\zeta_{ij}=(1\{X_{ij}=x_1\},\ldots,1\{X_{ij}=x_J\})'$.
Note that $\hat{m}_n$ is different from the infeasible sample moment 
$$
{m}_n(\theta)=\frac{1}{n^2}\sum_{i,j}\left(G_{ij,n}-\Psi(\theta,X_{ij},\gamma_{ij,n})\right)\zeta_{ij},
$$ 
because it replaces $\gamma_{ij,n}$ with the first-stage estimator $\hat\gamma_{ij}$.

As we show in the appendix, the influence function for $\hat{m}_n(\theta_0)$ is 
\begin{eqnarray*}
\psi_k(\theta_0)
&=&
\frac{1}{n}\sum_{j\ne k}\left(G_{kj,n}-\rho_0-(1-\rho_0-\rho_1)\Phi(u_{kj}(\theta_0))\right)\zeta_{kj}\\&&-(1-\rho_0-\rho_1)\frac{1}{n^2}\sum_{i,j}\left(\phi(u_{kj}(\theta_0))\beta_1'C(\rho_0,\rho_1) \psi_{\gamma,k, n}(X_{ij})\right)\zeta_{ij},
\end{eqnarray*}
where $u_{kj}(\theta_0) = (c(r_0,r_1)+C(r_0,r_1)\gamma_{kj,n})'b_1+X_{kj}'b_2$, and the component $\psi_{\gamma,k,n}(x)$ is the influence function for the first-stage estimator $\hat{\gamma}_{kj}$, defined by 
\begin{eqnarray*}
\psi_{\gamma,k, n}(x) 
&=& 
\frac{1}{n^2} \sum_{i,j} 
\left(\frac{1\{X_{i,j}=x\}}{\hat{p}(x)} \right) 
\left(  
\begin{array}{c}
0\\
G_{kj,n}-E\left[ G_{kj,n}\mid X, \sigma_n \right]\\
G_{ki,n}G_{kj,n}-E\left[G_{ki,n}G_{kj,n}\mid X, \sigma_n \right]\\
G_{ki,n}+G_{kj,n}- E\left[G_{ki,n}+G_{kj,n}\mid X, \sigma_n \right]\end{array}
\right)
\\&&+
\frac{1}{n} \sum_{i} 
\left(\frac{1\{X_{i,k}=x\}}{\hat{p}(x)} \right) 
\left(  
\begin{array}{c}
G_{ki,n}-E\left[ G_{ki,n}\mid X, \sigma_n \right]\\
0\\
0\\
0
\end{array}
\right).
\end{eqnarray*}
We estimate the variance of $\hat{m}_n(\theta)$ by
$$
\hat{S}(\theta)=\frac{1}{n}\sum_{k=1}^n\hat\psi_k(\theta)\hat\psi_k(\theta)'-\left(\frac{1}{n}\sum_{k=1}^n\hat\psi_k(\theta)\right)\left(\frac{1}{n}\sum_{k=1}^n\hat\psi_k(\theta)\right)',
$$
where $\hat\psi_k(\theta)$ denotes the (uncentered) estimated influence function for $\hat{m}_n(\theta)$, given by 
\begin{eqnarray*}
\hat\psi_k(\theta)
&=&
\frac{1}{n}\sum_{j\ne k}G_{kj,n}\zeta_{kj}
-\frac{1}{n^2}\sum_{l,j}\left(\left.\frac{\partial}{\partial\check{\gamma}_{lj}'}\Psi(\theta,X_{lj},\check{\gamma}_{lj})\right|_{\check{\gamma}_{lj}=\hat\gamma_{lj,n}}\right)\hat\psi_{\gamma,k, n}(X_{lj})
\zeta_{lj},
\end{eqnarray*}
and $\hat\psi_{\gamma,k, n}(x) $ denotes the (uncentered) estimated influence function for $\hat{\gamma}_{kj}$, given by 
$$
\hat\psi_{\gamma,k, n}(x) 
=
\frac{1}{n^2} \sum_{i_{1},j_1} 
\frac{1\{X_{i_{1},j_1}=x\}}{\hat{p}(x)} 
\left(  
\begin{array}{c}
0\\
G_{kj_1}\\
G_{ki_{1}}G_{kj_1}\\
G_{ki_{1}}+G_{kj_1}
\end{array}
\right)
+
\frac{1}{n} \sum_{i_{1}} 
\frac{1\{X_{i_{1},k}=x\}}{\hat{p}(x)} 
\left(  
\begin{array}{c}
G_{ki_{1}}\\
0\\
0\\
0
\end{array}
\right).
$$

We construct a confidence interval for $\theta$ as 
$$
CI_n(\alpha)=\{\theta\in\Theta: n\hat{m}_n(\theta)'\hat{S}(\theta)^{-1}\hat{m}_n(\theta)\leq q(\chi^2_J,1-\alpha)\},
$$
where $q(\chi^2_J,1-\alpha)$ is the $(1-\alpha)$ quantile of a $\chi^2$ distribution with $J$ degrees of freedom. 
The degrees of freedom are determined by the number of points in the support of $X_{ij}$.

The following theorem demonstrates the asymptotic coverage for the confidence interval $CI_n(\alpha)$.

\begin{theorem}\label{theorem_inference}
Suppose that (i) the minimum eigenvalue of $Var(\psi_k(\theta_0)\mid X,\sigma_n)$ is bounded away from zero, and (ii)  $\liminf\min_{x}\hat{p}(x)>0$. 
Under Assumptions \ref{Ass1}-\ref{Ass3}, 
$$
\liminf_{n\rightarrow\infty}Pr(\theta_0\in CI_n(\alpha)\mid X,\sigma_n)\geq 1-\alpha.
$$
\end{theorem}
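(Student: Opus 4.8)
The plan is to reduce the coverage statement to a central limit theorem for the scaled sample moment $\sqrt{n}\,\hat m_n(\theta_0)$, conditional on $(X,\sigma_n)$, and then to a Slutsky-type argument for the studentized quadratic form. Throughout I condition on $(X,\sigma_n)$, so that the only randomness comes from the private shocks $\{\varepsilon_i\}$ and the misclassification noise. The crucial structural fact, which I would establish first, is that \emph{given} $(X,\sigma_n)$ the rows $\{(G_{ij,n})_{j}:i\in\mathcal N_n\}$ of the observed adjacency matrix are mutually independent across $i$: each optimal action $\delta_{i,n}(X,\varepsilon_i)$ depends on the private shock $\varepsilon_i$ only (Assumption \ref{Ass1}), and, conditional on the true links, the misclassification noise is independent across entries and agents (Assumption \ref{Ass3}). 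Inspecting the definitions, the population influence $\psi_i(\theta_0)$ --- and, up to asymptotically negligible terms, its sample analog $\hat\psi_i(\theta_0)$ --- is a function of the out-links of agent $i$ alone, so $\{\psi_i(\theta_0)\}_{i=1}^n$ is a conditionally independent (and, under the symmetric equilibrium and identically distributed attributes, identically distributed) triangular array. This is what makes $n$, rather than $n^2$, the effective number of independent units.

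Second, I would derive an asymptotically linear (H\'ajek) representation
\[
\sqrt{n}\,\hat m_n(\theta_0)=\frac{1}{\sqrt n}\sum_{i=1}^n\psi_i(\theta_0)+o_p(1),
\]
where $E[\psi_i(\theta_0)\mid X,\sigma_n]=0$ because $\theta_0$ satisfies the conditional moment equality of Theorem \ref{theorem_identification} under the data distribution. The representation has two sources. The first term of $\hat\psi_i$ collects the direct contribution $\tfrac1n\sum_{j\ne i}G_{ij,n}\mathbf 1_{ij}$ of agent $i$'s own links. The second term arises from linearizing the plug-in $\Phi\big((c(r_0,r_1)+C(r_0,r_1)\hat\gamma_{ij})'b_1+X_{ij}'b_2\big)$ around the true $\gamma_{ij,n}$ by a mean-value expansion; this produces the factor $\phi(\cdot)\,b_1'C(r_0,r_1)$ multiplied by the estimation error $\hat\gamma-\gamma$, whose own H\'ajek projection onto agent $i$ is exactly $\hat\psi_{\gamma,i,n}(\cdot)$ (its two summands encoding how agent $i$ enters the in-degree, common-neighbor, and reciprocity statistics, respectively). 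The remaining work is to show that the degenerate second-order terms of this dyadic $V$-statistic and the Taylor remainder are $o_p(n^{-1/2})$; here I would use the boundedness of $G_{ij,n}\in\{0,1\}$, of the network statistics in $\gamma_{ij,n}$, and of $\phi$, together with the uniform consistency of $\hat\gamma(\cdot)$ and of the plug-in weights --- the ratio form of $\hat\gamma$ being controlled because $\liminf\min_x\hat p(x)>0$ is assumed.

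Third, conditional on $(X,\sigma_n)$, I would apply a Lindeberg--Feller central limit theorem for the conditionally independent array $\{\psi_i(\theta_0)\}$ to obtain $\tfrac{1}{\sqrt n}\sum_i\psi_i(\theta_0)\xrightarrow{d}N(0,S(\theta_0))$ with $S(\theta_0)=\lim Var(\psi_i(\theta_0)\mid X,\sigma_n)$; the Lindeberg condition follows from the uniform boundedness of the summands, and the assumed lower bound on the minimum eigenvalue of $Var(\psi_i(\theta_0)\mid X,\sigma_n)$ guarantees that $S(\theta_0)$ is positive definite, so the $J$-dimensional limit is nondegenerate. A companion law of large numbers for the same array, combined with the consistency results from the second step, gives $\hat S(\theta_0)\xrightarrow{p}S(\theta_0)$. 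By the continuous mapping theorem and Slutsky's lemma, $n\,\hat m_n(\theta_0)'\hat S(\theta_0)^{-1}\hat m_n(\theta_0)\xrightarrow{d}\chi^2_J$. Since $\{\theta_0\in CI_n(\alpha)\}$ is exactly the event that this statistic does not exceed the $1-\alpha$ quantile $c_\alpha$ of $\chi^2_J$, the convergence yields $Pr(\theta_0\in CI_n(\alpha)\mid X,\sigma_n)\to 1-\alpha$, and in particular $\liminf_n Pr(\theta_0\in CI_n(\alpha)\mid X,\sigma_n)\ge 1-\alpha$.

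I expect the main obstacle to be the second step: establishing the asymptotically linear representation for a statistic that is a \emph{nonlinear} functional of a first-stage network-aggregated estimator $\hat\gamma$, while the dyadic summands share agents and are therefore dependent. Unlike an i.i.d. sample, the naive decomposition leaves non-negligible second-order (degenerate) terms, and I would need to verify that, after the $\sqrt n$ scaling appropriate to the $n$ independent rows, both these terms and the mean-value remainder from the $\Phi$ expansion vanish in probability, uniformly enough to survive the conditioning on the $n$-dependent equilibrium $\sigma_n$. A secondary difficulty is that the array is a genuine triangular array whose law changes with $n$, so the central limit theorem and law of large numbers must be stated for non-identically distributed, $n$-indexed summands rather than invoked off the shelf.
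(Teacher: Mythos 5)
Your proposal is correct in outline and follows essentially the same route as the paper: condition on $(X,\sigma_n)$, exploit row-wise conditional independence of $\{\psi_i(\theta_0)\}$ (the paper's Lemma \ref{psi_ind}), establish the asymptotically linear representation $\hat{m}_n(\theta_0)=\frac{1}{n}\sum_i\psi_i(\theta_0)+o_p(n^{-1/2})$, apply a triangular-array CLT plus consistency of $\hat{S}(\theta_0)$, and conclude by Slutsky that the studentized quadratic form is asymptotically $\chi^2_J$.

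The one substantive difference concerns what you flag as the main obstacle. You anticipate having to kill degenerate second-order terms from a H\'ajek projection of the dyadic first-stage estimator $\hat\gamma$; in the paper no such terms exist, because the representation $\hat\gamma_{ij}-\gamma_{ij,n}=\frac{1}{n}\sum_k\psi_{\gamma,k,n}(X_{ij})$ is an exact finite-sample identity, not a projection plus remainder. The mechanism is the symmetry of the equilibrium (Lemma \ref{gammastar_symmetric}): conditional on $(X,\sigma_n)$, the mean of each dyad statistic depends on the dyad only through the attribute value $X_{ij}$, so pooling dyads with $X_{i_1j_1}=X_{ij}$ and centering by $\gamma_{ij,n}$ produces exactly the row-wise centered sums $\psi_{\gamma,k,n}$ after regrouping by the agent who owns each out-link. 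Consequently the only remainder in the linearization is the second-order Taylor term of $\Phi$, bounded by $O_p(\max_{ij}\|\hat\gamma_{ij}-\gamma_{ij,n}\|^2)=O_p(n^{-1})$ — a routine bound once Lemma \ref{gamma_influe} gives $\hat\gamma_{ij}-\gamma_{ij,n}=O_p(n^{-1/2})$. So your plan would go through, but the symmetry lemma is the missing ingredient that dissolves your anticipated difficulty; without it, $\hat\gamma(x)$ would not even be an unbiased estimator of the dyad-specific $\gamma_{ij,n}$ and a genuine bias term, not just degenerate noise, would appear. A second, minor difference: you posit a limiting variance $S(\theta_0)=\lim Var(\psi_i(\theta_0)\mid X,\sigma_n)$, which need not exist for an $n$-indexed equilibrium; the paper instead self-normalizes by the finite-$n$ conditional variance (Lemma \ref{clt_m}), needing only the assumed lower bound on its minimum eigenvalue.
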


Condition (i) ensures that $\hat{S}(\theta)$ is non-singular. A similar condition is used in \citet[][Theorem 3]{leung2015two}.
Condition (ii) is imposed to guarantee that $\hat{\gamma}_{ij}$ is a uniformly consistent estimator of $\gamma_{ij,n}$.

The construction of $CI_{n}(\alpha)$ requires us to evaluate the test statistics at every value of $\theta = (b, r_0, r_1)$, which can be computationally intractable.
An alternative approach could focus on computing valid confidence intervals for each component of $\theta$ as $ \min/\max_{\theta \in CI_{n}(\alpha) } \theta_j $ for $j=1, \cdots, d+3$.
However, this method requires the solution of $2(d+3)$ optimization problems given a nonlinear constraint. 
Instead, in Section \ref{easier_confidence}, we conduct valid inference in a computationally feasible manner.

\subsection{Confidence Interval based on Quasi-Maximum Likelihood Estimator}\label{easier_confidence}
In this section, we construct a more computationally feasible (but potentially larger) confidence interval for $\beta$, which relies on the construction of a quasi-maximum likelihood estimator. 

The intuition for this method is the following. Suppose that we knew that $(\rho_0,\rho_1)=(r_0,r_1)$ for a given value $(r_0,r_1)\in\mathcal{R}$, then we could construct a confidence interval $\mathcal{C}_n(\alpha;r_0,r_1)$ for $\beta$ by computing the quasi-maximum likelihood estimator $\hat\beta(r_0,r_1)$ and its estimated asymptotic variance $\widehat{AV}(r_0,r_1)$ in the following way. 
We consider the quasi-maximum likelihood estimator  
$$
\hat\beta(r_0,r_1)=\argmax_{b\in\mathcal{B}}\hat{\boldsymbol{Q}}_n(b,r_0,r_1)
$$
where the feasible objective function is 
\begin{equation}\label{eq:Q_function_defin}
\hat{\boldsymbol{Q}}_n(b,r_0,r_1)=\frac{1}{n^2}\sum_{i,j}\log\left(
\Psi(b,r_0,r_1,X_{ij},\hat\gamma(X_{ij}))^{G_{ij,n}}(1-\Psi(b,r_0,r_1,X_{ij},\hat\gamma(X_{ij})))^{1-G_{ij,n}}\right).
\end{equation}
Define the (uncentered) estimated influence function for the score function by 
\begin{eqnarray*}
\hat{\psi}_{\boldsymbol{Q},k, n}(\theta)
&=&
\frac{1}{n}\sum_{j}(G_{kj,n}-\Psi(\theta,X_{kj},\hat\gamma(X_{kj})))
\boldsymbol{v}_1(\theta,X_{ij},\hat\gamma(X_{kj})) \\ 
&&+
\frac{1}{n^2}\sum_{i,j}
\left\{ 
   \Psi(\theta,X_{ij},\hat\gamma(X_{ij}))\boldsymbol{V}_1(\theta,X_{ij},\hat\gamma(X_{ij}))-\boldsymbol{V}_2(\theta,X_{ij},\hat\gamma(X_{ij}))
\right\}
\hat{\psi}_{\gamma,k, n}(X_{ij}),
\end{eqnarray*}
where
\begin{eqnarray*}
\boldsymbol{v}_1(\theta,X_{ij},\check{\gamma}_{ij})&=&\dfrac{\frac{\partial}{\partial b}\Psi(\theta,X_{ij},\check{\gamma}_{ij})}{\Psi(\theta,X_{ij},\check{\gamma}_{ij})(1-\Psi(\theta,X_{ij},\check{\gamma}_{ij}))},\\
\boldsymbol{v}_2(\theta,X_{ij},\check{\gamma}_{ij})&=&\dfrac{\frac{\partial}{\partial b}\Psi(\theta,X_{ij},\check{\gamma}_{ij})}{1-\Psi(\theta,X_{ij},\check{\gamma}_{ij})},\\
\boldsymbol{V}_1(\theta,X_{ij},\check{\gamma}_{ij})&=&\frac{\partial }{\partial \check{\gamma}_{ij}'}\boldsymbol{v}_1(\theta,X_{ij},\check{\gamma}_{ij}),\mbox{ and }\\
\boldsymbol{V}_2(\theta,X_{ij},\check{\gamma}_{ij})&=&\frac{\partial }{\partial \check{\gamma}_{ij}'}\boldsymbol{v}_2(\theta,X_{ij},\check{\gamma}_{ij}).
\end{eqnarray*}
The asymptotic variance for $\hat\beta(r_0,r_1)$ is estimated by 
$$
\widehat{AV}(r_0,r_1)
=\left(\left.\frac{\partial^2}{\partial b\partial b'}\hat{\boldsymbol{Q}}_n(b,r_0,r_1)\right|_{b=\hat\beta(r_0,r_1)}\right)^{-1}\mathcal{\hat{S}}(r_0,r_1)
\left(\left.\frac{\partial^2}{\partial b\partial b'}\hat{\boldsymbol{Q}}_n(b,r_0,r_1)\right|_{b=\hat\beta(r_0,r_1)}\right)^{-1},
$$
where 
\begin{eqnarray*}
\mathcal{\hat{S}}(r_0,r_1)
&=&
\frac{1}{n}\sum_{k=1}^n\hat{\psi}_{\boldsymbol{Q},k, n}(\hat\beta(r_0,r_1),r_0,r_1)\hat{\psi}_{\boldsymbol{Q},k, n}(\hat\beta(r_0,r_1),r_0,r_1)'
\\&&-
\left(\frac{1}{n}\sum_{k=1}^n\hat{\psi}_{\boldsymbol{Q},k, n}(\hat\beta(r_0,r_1),r_0,r_1)\right)
\left(\frac{1}{n}\sum_{k=1}^n\hat{\psi}_{\boldsymbol{Q},k, n}(\hat\beta(r_0,r_1),r_0,r_1)\right)'.
\end{eqnarray*}
We can construct a $(1-\alpha)$ confidence interval for $\beta$:   
$$
\mathcal{C}_n(\alpha;r_0,r_1)=\left\{b\in\mathcal{B}: n(\hat\beta(r_0,r_1)-b)'\widehat{AV}(r_0,r_1)^{-1}(\hat\beta(r_0,r_1)-b)\leq q(\chi^2_{d+3},1-\alpha)\right\},
$$
where $q(\chi^2_{d+3},1-\alpha)$ is the $(1-\alpha)$ quantile of a $\chi^2$ distribution with ${d+3}$ degrees of freedom.\footnote{We can construct a confidence interval for a subvector $\eta'\beta$ of a given vector $\eta$: 
$$
\eta'\hat\beta(r_0,r_1)\pm q(N(0,1),1-\alpha/2)\sqrt{\frac{\eta'\widehat{AV}(r_0,r_1)\eta}{n}}
$$
where $q(N(0,1),1-\alpha/2)$ is the $(1-\alpha/2)$ quantile of the standard normal distribution. In the same way as $\beta$, we can also take the union over $(r_0,r_1)\in\mathcal{R}$ and construct a feasible confidence interval for $\eta'\beta$.} 

Since the true value of $(\rho_0,\rho_1)$ is unknown, we construct a confidence interval for $\beta$ by taking the union of $\mathcal{C}_n(\alpha;r_0,r_1)$ over $(r_0,r_1)\in\mathcal{R}$: 
\begin{equation*}
   \hat{\mathcal{C}}_n(\alpha)=\bigcup_{(r_0,r_1)\in\mathcal{R}}\mathcal{C}_n(\alpha;r_0,r_1),   
\end{equation*}
where $\mathcal{R}\subseteq \left\{ (r_0, r_1): r_0,r_1 \geq 0, r_0 + r_1 < 1 \right\}$. The next theorem shows that $\hat{\mathcal{C}}_n(\alpha)$ contains the true parameter value with correct asymptotic size. 

Notice that the construction of $\hat{\mathcal{C}}_n(\alpha)$ does not require the value of $(\rho_0, \rho_1)$ to be known, but only to be covered by $\mathcal{R}$. This is a mild condition if we set $\mathcal{R} = \left\{ (r_0, r_1): r_0, r_1 \geq 0 , r_0 +r_1 < 1  \right\}$. Evidently, the larger the set $\mathcal{R}$ is, the less informative the confidence interval $\hat{\mathcal{C}}_n(\alpha)$ will be.
A similar remark has been discussed by \citet[Section 3]{molinari:2008}.\footnote{Notice that it is possible to use the infeasible confidence interval $\mathcal{C}_n(\alpha;r_0,r_1)$ to construct a sensitivity analysis in the spirit of the breakdown frontier, as in \citet{masten/poirer:2020}. In particular,  the criterion in $\mathcal{C}_n(\alpha;r_0,r_1)$ could be used to partition the space $\mathcal{R}$ into the set of values of $(r_0, r_1)$ for which $\beta_0$ is statistically significant and those for which the evidence is inconclusive. We thank an anonymous referee for pointing out this connection.}
Ultimately, the selection of $\mathcal{R}$ should be application-specific as the researcher might have a prior on the upper bounds of $(\rho_0, \rho_1)$. For instance, the researcher might have reasons to believe that in the application at hand, false negatives are more like than false positives, $r_1>r_0$, or the sum of the misclassification probabilities cannot exceed 30\%, i.e., $r_0+r_1 \leq 0.3$.
In Section \ref{sec5}, we assess the informativeness of $\hat{\mathcal{C}}_n(\alpha)$ for different values of the misclassification probabilities.

\begin{theorem}\label{theorem_inference_2}
Suppose that 
(i) $\liminf\min_{x}\hat{p}(x)>0$, 
(ii) $\beta_0$ is in the interior of a compact subset $\mathcal{B}$ of the Euclidean space,  
(iii) $\{((\gamma_{ij,n}^\ast)',X_{ij}')':i,j\}$ is not contained in any proper linear subspace of $\mathbb{R}^{d+3}$, and 
(iv) the minimum eigenvalue of $E\left[\frac{1}{n}\sum_{k=1}^n\psi_{\boldsymbol{Q},k, n}\psi_{\boldsymbol{Q},k, n}'\mid X,\sigma_n\right]$ is bounded away from zero, where 
\begin{eqnarray*}
\psi_{\boldsymbol{Q},k, n}
&=&
\frac{1}{n}\sum_{j}(G_{kj,n}-\Psi(\theta_0,X_{kj},\gamma_{kj,n}))
\boldsymbol{v}_1(\theta_0,X_{ij},\gamma_{ij,n})\\&&+\frac{1}{n^2}\sum_{i,j}\left(E\left[G_{ij,n}\mid X,\sigma_n\right]\boldsymbol{V}_1(\theta_0,X_{ij},\gamma_{ij,n})-\boldsymbol{V}_2(\theta_0,X_{ij},\gamma_{ij,n})\right)\psi_{\gamma,k, n}(X_{ij}).
\end{eqnarray*}
Under Assumptions \ref{Ass1}-\ref{Ass3}, 
$$
\liminf_{n\rightarrow\infty}Pr(\beta_0\in \hat{\mathcal{C}}_n(\alpha)\mid X,\sigma_n)\geq 1-\alpha.
$$
\end{theorem}
Condition (i) is imposed to guarantee that $\hat{\gamma}_{ij}$ is a uniformly consistent estimator of $\gamma_{ij,n}$.
Condition (ii) is a regularity condition and is used to derive the asymptotic distribution of $\hat\beta(\rho_0,\rho_1)$.
Condition (iii) is a rank condition and guarantees that $\beta_0$  is the unique maximizer of the limiting objective function when $(r_0,r_1)=(\rho_0,\rho_1)$.
This assumption guarantees that the equilibrium beliefs about the network statistics $\gamma_{ij,n}^\ast$ have sufficient exogenous variations for any finite $n$. This condition is analogous to the standard rank condition for the identification of discrete choice models (e.g., \citealt{aradillaslopez:2010, leung2015two}). As in \citet[p.187]{leung2015two},  it restricts the equilibrium selection to choose an equilibrium network with non-degenerate $\gamma_{ij,n}^\ast$. 
Condition (iv) ensures that $\widehat{AV}(\rho_0,\rho_1)$ is asymptotically invertible. 

The size property of $\hat{\mathcal{C}}_n(\alpha)$ in Theorem \ref{theorem_inference_2} follows from 
\begin{equation}\label{eq:conv_true_quasiMLE}
\sqrt{n}(\widehat{AV}(\rho_0,\rho_1))^{-1/2}(\hat\beta(\rho_0,\rho_1)-\beta_0)\rightarrow_dN(0,I),
\end{equation}
because $Pr(\beta_0\in \hat{\mathcal{C}}_n(\alpha)\mid X,\sigma_n)\geq Pr(\beta_0\in \mathcal{C}_n(\alpha;\rho_0,\rho_1)\mid X,\sigma_n)$.
Although Eq. (\ref{eq:conv_true_quasiMLE}) is proved in a similar manner to \citet[][Theorem 3]{leung2015two}, it is not a direct implication of \citet[][Theorem 3]{leung2015two} since we do not directly observe the true underlying network $G_{n}^\ast$.

\section{Empirical Illustration}\label{sec5}

In this section, we implement the confidence interval introduced in Section \ref{easier_confidence}  using a social network dataset from rural villages in southern India.
In particular, we investigate the robustness of the empirical results in  \citet{leung2015two} to the presence of misclassified links. 
This dataset was assembled to study the introduction of a micro-finance program (cf. \citealt{banerjee/chandrasekhar/duflo/jackson:2013} and  \citealt{jackson/rodriguezbarraquer/tan:2012}) and contains information on household and individual characteristics and data on social networks.

Among the different dimensions of social relationships contained in the dataset, we follow \citet{leung2015two} and focus on a lending network. 
This network measures individuals' willingness to lend money. We construct the directed links in our analysis using the raw answers to the following survey question: ``\emph{Who do you trust enough that if he/she needed to borrow Rs. 50 for a day you would lend it to him/her?}''\footnote{Our analysis does not use the adjacency matrices constructed by \citet{banerjee/chandrasekhar/duflo/jackson:2013}, which are readily available with the original dataset since they are \emph{undirected.} Instead, we use the social connections reported by the individuals to construct the \emph{directed} adjacency matrices. \cite{jackson/rodriguezbarraquer/tan:2012} discusses this point in detail.}

\cite{jackson/rodriguezbarraquer/tan:2012} discuss potential measurement error issues that might be present in this dataset.
In particular, they argue that the most likely type of measurement error to appear in this dataset is the misclassification of true links as non-existent, i.e., false negatives. This is because under the structure of the survey questions, which ask individuals about actual actions (such as lending or borrowing money) rather than perceived relationships, individuals are more likely to forget existing interactions than to imagine non-existent ones. Among the potential sources discussed by \cite{jackson/rodriguezbarraquer/tan:2012} are (i) individuals not remembering their connections, (ii) people getting wearied during interviews, and (iii) top-censoring the number of social connections that individuals could report.\footnote{We assume that the set of individuals observed in our empirical application represents the complete list of nodes in the network. Network subsampling is an interesting and challenging setting that is beyond the scope of this paper. \citet{liu:2013, chandrasekhar/lewis:2016,ata/belloni/candogan:2018},  and \citet{thirkettle:2019} have analyzed the effects of using sampled network data when estimating social interaction models or network externalities.}  In terms of the lending network, top-censoring does not represent a concern as the reporting cap was not reached by any of the individuals included in this sample. We explain this in detail below.  

First, in this empirical illustration, we focus on the false-negative-only case (i.e., $r_0=0$). Then we also consider the false-positive-only case (i.e., $r_1=0$) and a fully flexible specification that includes both false positives and negatives. 

We examine the relative importance that homophily on observed attributes and endogenous beliefs about trustworthiness have in the formation of links in a lending network. Regarding the preferences for homophily, we study homophily relations on gender, caste, language, religion, and family relationships. In terms of religion, Hinduism represents the large majority. 
Following \citet{leung2015two}, we avoid multicollinearity by using only the villages where the non-Hindu minorities have at least a 10\% representation; there are nine of these villages in total.

We aggregate the information across these nine villages to implement the quasi-maximum likelihood estimator described in Section \ref{easier_confidence}.\footnote{In this empirical application, the quasi-maximum likelihood estimator is defined as $\hat{\beta}(r_0,r_1) = \argmax_{b \in \mathcal{B}} \sum_{m=1}^{M} \hat{Q}_{m}(b, r_0, r_1)$ where $m$ denotes the $m$-th village and $\hat{Q}_{m}(b, r_0, r_1)$ denotes the objective function for $m$-th village as defined in \eqref{eq:Q_function_defin}.} The total number of individuals in our sample is $2,031$. Our asymptotic framework is based on the number of agents growing large and not the number of villages. Notice that an asymptotic analyses with many small networks may not be suitable for this empirical example, as the number of villages is much smaller than the number of individuals.

The average in-degree and out-degree across the individuals is equal to 0.951. The largest in-degree value observed for an individual is 10, which means that one individual in the sample was listed 10 times as someone to whom other individuals would be willing to lend money. Meanwhile, the maximum out-degree observed across the individuals is equal to 4. In other words, the maximum number of people to whom a single individual within our sample was willing to lend money is equal to 4. This value suggests that the recording cap for the lending network, given by 5, was not attained in this sample. Thus, top-censoring does not represent a source of concern for the misclassification of links in this application.

The direct link  $G_{ij,n}$, for any distinct individuals $i$ and $j$, is recorded to be equal to $1$ if individual $i$ lists $j$ as someone to whom he or she is willing to lend Rs. 50, and $0$ otherwise. We allow for $G_{ij,n}$ to be potentially misclassified. 
In the vector of observed attributes $X_{ij}$, we include individual $i$- and individual $j$-specific regressors, such as age, caste, gender, religion, and an indicator for whether or not $i$ and $j$ are heads of their household, as well as the controls for homophily described above.

In the vector of endogenous network statistics $\gamma_{ij,n}^{*}$, we consider the conditional expectation of the following factors:
(i) $G_{ji,n}^{*}$, which accounts for the value of reciprocation; (ii) $n^{-1}\sum_{k \neq i} G_{kj,n}^{*}$, which measures the share of people willing to lend money to $j$; and (iii) $n^{-1}\sum_{k \neq i, j} G_{ki,n}^{*} G_{kj,n}^{*}$, which is the supported trust or share of individuals that are willing to lend to both $i$ and $j$. We account for the misclassification on the endogenous network statistics via Lemma \ref{belief_lemma}. As a first stage estimator, we use the frequency estimator described in Section \ref{sec3}.

In order to examine the effects of potentially misclassified links on the estimation of the structural parameters of a network formation model, we allow for the possibility that true underlying links in $G_n^\ast$ are misclassified as missing in $G_n$. In particular, we consider the following scenarios for the misclassification probabilities: we set the space $\mathcal{R} = \left\{ (r_0, r_1): r_0=0, 0 \leq r_1 \leq \bar{\mathcal{R}}_1 \right\}$ with $\bar{\mathcal{R}}_1 \in \{0,0.05, 0.1,0.2,0.3,0.4,0.5\}$.\footnote{Additional results with misclassification probabilities $r_1$ of up to $90\%$ are reported in Appendix \ref{S:additionaltables}.} Notice that the set $\mathcal{R}$ does not restrict the true misclassification probabilities $(\rho_0, \rho_1)$ to be known. Instead, it represents an upper bound on the amount of misclassification that might affect the network links, i.e.,  the probability of a false negative link is up to $\bar{\mathcal{R}}_1$. This prior allows us to construct confidence intervals $\hat{C}_n(\alpha)$ that are robust up to an $\bar{\mathcal{R}}_1$ probability of misclassifying links as negative. 

Table \ref{table:union} presents 95\% confidence intervals using the quasi-maximum likelihood estimator of the network statistics, the homophily parameters, and the constant term. In other words, we report the confidence intervals $\hat{\mathcal{C}}_n(\alpha)=\bigcup_{0 \leq r_1 \leq \bar{\mathcal{R}}_1}\mathcal{C}_n(\alpha;0,r_1)$ for each of the parameters of interest. Column (1) of Table \ref{table:union} presents the confidence intervals for the no-misclassification case, i.e., $r_0=r_1=0$. This scenario corresponds to the empirical analysis in \citet{leung2015two}, and we use it as our baseline case.\footnote{Unlike the empirical specification in \citet{leung2015two}, we do not include in-degree or out-degree statistics weighted by caste or religion as part of the factors that capture network externalities. We do this for simplicity, but the results can be easily generalized to \citet{leung2015two} complete specification.} These results indicate that the network externalities given by reciprocation, in-degree, and supported trust are important factors that determine the willingness of an individual to lend money.  In other words, an individual within the network is more willing to lend money to someone else if that trust is reciprocated, if a large fraction of individuals in the network trust that individual, and if they have common connections that are willing to lend money to both individuals. There is also evidence that individuals present preferences for homophily on gender, religion, caste, and for being relatives when making money lending decisions. The intercept term takes a large negative value to rationalize the fact that the observed network is sparse, and hence, many of the connections are not established.

Columns (2)-(7)  of Table \ref{table:union} present 95\% confidence intervals for the parameter estimates that account for up to $r_1 \leq \bar{\mathcal{R}}_1$ misclassification probabilities with $\bar{\mathcal{R}}_1 \in \{0.05, 0.1,0.2,0.3,0.4,0.5\}$. Although both the network externalities and homophily factors remain significant components driving the formation of a link on trust networks, the confidence intervals become wider as the misclassification of links becomes more likely in the data. Nonetheless, this increase in the width of the confidence intervals remains relatively small. Table \ref{table:ratiolength} provides further evidence about this insight. 

In Table \ref{table:ratiolength}, we compare the length of 95\% confidence intervals that take misclassification into account, with the length of the confidence intervals computed under the assumption of no misclassification, i.e.,  $|\hat{\mathcal{C}}_{n}(\alpha)|/|\mathcal{C}_{n}(\alpha, 0, 0)|$ when $\bar{\mathcal{R}}_1 \in \{0.05, 0.1,0.2,0.3,0.4,0.5\}$. Given a misclassification probability of up to 5\%, the coefficient for in-degree presents the largest increase in interval length, which is $1.17$ times  (or equivalently 17\%) larger than the baseline confidence interval. With a misclassification probability of up to 20\%, the largest increase in the length of the confidence intervals is $1.67$ times (or equivalently 67\%) larger than the confidence intervals that do not take measurement error into account. Finally, in the case that the misclassification probability is at most 50\%, the largest increase in the length of confidence intervals is $2.95$ times larger than the benchmark confidence intervals, and the second-largest increase is $2.70$ times larger. 

As in \citet{leung2015two}, this empirical illustration suggests that the network externalities and homophily factors affect the lending decisions of the individuals in the network. These results are robust to the misclassification of existent links in $G_n^\ast$ as negatives. Furthermore, we use different scenarios for the misclassification probabilities to compare the length of 95\% confidence intervals computed under our method to the length of confidence intervals that assume no link misclassification. The analysis suggests that with a misclassification probability of up to 50\%, the 95\% confidence intervals are at most $2.95$ times larger than the length of the baseline confidence intervals. 

\subsection{Alternative Designs: False Positives and Fully Flexible}
In this subsection, we consider the two alternative designs. In the first design, we consider the case in which only non-existing links in $G_n^\ast$ can be misclassified as existing in $G_n$, i.e.,  false positives. In the second design, we consider a fully flexible specification in which both false positives and negatives can be present.

To analyze the false-positive-only case, we set the space $\mathcal{R} = \left\{ (r_0, r_1): 0 \leq r_0 \leq \bar{\mathcal{R}}_0,  r_1 =0 \right\}$ with the upper bound given by  $\bar{\mathcal{R}}_0 \in \{0,0.01, 0.02,0.03\}$. The probability of 3\% represents a conservative upper bound on $\rho_0$. In other words, for this empirical application, the model rules out that the probability of false positives can be larger than 3\%.\footnote{In particular, the semiparametric analysis in Appendix C yields moment inequalities that bound the misclassification probabilities $(\rho_0, \rho_1)$. We use an upper bound $\max_{k \in \mathcal{N}_n} \frac{1}{n} \sum_{i=1}^{n} G_{ik}\approx 0.03$ for $\rho_0$ and  $\max_{k \in \mathcal{N}_n} \frac{1}{n} \sum_{i=1}^{n} \left(1-G_{ik}\right)\approx 1$ for $\rho_1$.} Parallel to the false-negative-only case, the set  $\mathcal{R}$ does not restrict the true misclassification probabilities $(\rho_0, \rho_1)$ to be known. Instead,  $\bar{\mathcal{R}}_0$ represents an upper bound on the false positive rate. Table \ref{table:unionR0} presents 95\% confidence intervals for the estimates of the parameters of interest, which are computed as $\hat{\mathcal{C}}_n(\alpha)=\bigcup_{0 \leq r_0 \leq \bar{\mathcal{R}}_0}\mathcal{C}_n(\alpha;r_0,0)$. In Column (1) of Table \ref{table:unionR0}, we report the confidence intervals for the no-misclassification case.  Columns (2)-(6) of Table \ref{table:unionR0} present 95\% confidence intervals for the parameter estimates that account for up to $r_0 \leq \bar{\mathcal{R}}_0$. Given a misclassification probability of up to 1\%, the confidence intervals become wider, and the in-degree statistic is no longer statistically significant to explain the formation of the network links. If the probability of misclassifying links as positive is up to 2\%, only the factors accounting for reciprocation, same sex, and same family remain statistically significant at a 95\% significance level. The same conclusion is drawn when the misclassification probability takes the value of 3\%.

Table \ref{table:ratiolengthR0} compares the length of 95\% confidence intervals that take misclassification of non-existent links into account with the length of the confidence intervals computed under the assumption of no link misclassification. In particular, we compute $|\hat{\mathcal{C}}_{n}(\alpha)|/|\mathcal{C}_{n}(\alpha, 0, 0)|$ when $\bar{\mathcal{R}}_0 \in \{0.01, 0.02,0.03\}$. The evidence suggests that, with an upper bound of 1\% probability, the confidence intervals can be $6.06$ times wider in length than the baseline confidence intervals. With a misclassification probability of up to 2\%, the largest increase in the length of the confidence intervals is $7.30$ times larger than the confidence intervals that do not take measurement error into account. Finally, in the extreme case that the misclassification probability is up to 3\%, the 95\% confidence intervals are at most $14.25$ times larger than the length of the baseline confidence intervals.

In the final design, we consider a setting that accounts for both false positives and negatives. Tables \ref{table:unionmixed005} and \ref{table:unionmixed01} present the confidence intervals for the cases in which the false positive rate $r_0$ is bounded up to $1\%$ and $2\%$, respectively, and $r_1 \leq 40\%$. The results indicate that when the misclassifications probabilities are $r_0 \leq 1\%$ and $r_1 \leq 5\%$, all the factors other than reciprocity and same language are statistically significant factors at a 95\% significance level. This conclusion does not change as the upper bound for the probability of misclassifying links as negatives increases up to 40\%; however, most of the confidence intervals become wider to account for the higher uncertainty. In contrast, when the upper bound of the probability of misclassifying links as positives increases to 2\%, only the factors associated with reciprocity, same sex, same family, and the constant term remain statistically significant at a 95\% for any $r_1\leq 4\%$. 

Table \ref{table:ratiolengthmixed005} and \ref{table:ratiolengthmixed01} compare the length of 95\% confidence intervals that take both types of misclassification into account with the length of the confidence intervals computed under the assumption of no link misclassification. The results show that when $r_0\leq 1\%$ and $r_1\leq 4\%$ the confidence intervals can be even $8.21$ times wider in length than the baseline confidence intervals. In contrast when $r_0\leq 2\%$ and $r_1\leq 4\%$,  the confidence intervals can be even $12.57$ times wider in length than the baseline confidence intervals. Relative to both the false-negative-only and false-positive-only cases, the width of the confidence intervals are wider as it accounts for the two-sided link misclassification.

These results suggest that the estimates of the parameters in the network formation model might be sensitive to misclassifying non-existent links as real links, even at small probabilities. One explanation behind these outcomes could be the sparsity of the networks considered in our empirical application. In other words, due to the reduced number of links formed in the network, the effects of misclassifying links are asymmetric and more pervasive for the false-positive case.

\section{Conclusion}\label{sec6}
We study a network formation model with incomplete information and potentially misclassified links.  We propose a novel approach for analyzing network formation models, which is robust to link misclassification. 
In the presence of link misclassification, we characterize the identified set for the preference parameters associated with homophily and network externalities. Based on the identification result, we develop an inference method that is valid when a single large network is available. In an empirical application, we compute conservative confidence intervals that are robust to link misclassification using a lending network from rural villages in Karnataka, India. 
Our results suggest that reciprocity, same gender, and same family are statistically significant factors that explain the formation of a lending network even under link misclassification.

\clearpage
\begin{table}[ht] 
\centering
\caption{95\% confidence intervals $\hat{\mathcal{C}}_n(\alpha)$ with  $r_0=0$.}
   \label{table:union}
   \begin{tabular}{
    l
    >{{[}} 
    S[table-format = -2.3,table-space-text-pre={[}]
    @{,\,} 
    S[table-format = -2.3,table-space-text-post={]}]
    <{{]}} 
    >{{[}} 
    S[table-format = -2.3,table-space-text-pre={[}]
    @{,\,} 
    S[table-format = -2.3,table-space-text-post={]}]
    <{{]}} 
    >{{[}} 
    S[table-format = -2.3,table-space-text-pre={[}]
    @{,\,} 
    S[table-format = -2.3,table-space-text-post={]}]
    <{{]}} 
    >{{[}} 
    S[table-format = -2.3,table-space-text-pre={[}]
    @{,\,} 
    S[table-format = -2.3,table-space-text-post={]}]
    <{{]}} 
}
\hline
&\multicolumn{2}{c}{$r_1=0$}&\multicolumn{2}{c}{$r_1\leq0.05$}&\multicolumn{2}{c}{$r_1\leq 0.1$} & \multicolumn{2}{c}{$r_1\leq 0.2$}\cr
&  \multicolumn{2}{c}{(1)}& \multicolumn{2}{c}{(2)}&\multicolumn{2}{c}{(3)}& \multicolumn{2}{c}{(4)}  \cr \hline\hline
Reciprocation   & 1.342  &   1.676    & 1.314   &   1.676      & 1.283   &   1.676       & 1.213   &   1.676     \cr
In degree       & 26.361 &   33.099   & 25.226  &   33.099     & 24.104  &   33.099      & 21.852  &   33.099    \cr
Supported trust & 59.482 &   110.258  & 56.958  &   110.258    & 53.072  &   110.258     & 44.181  &   110.258   \cr
Constant        & -3.896 &   -3.536   & -3.896  &   -3.520     & -3.896  &   -3.506      & -3.896  &   -3.474    \cr
Same religion   & 0.348  &   0.492    & 0.348   &   0.495      & 0.348   &   0.497       & 0.348   &   0.504     \cr
Same sex        & 0.565  &   0.705    & 0.565   &   0.711      & 0.565   &   0.717       & 0.565   &   0.730     \cr
Same caste      & 0.195  &   0.309    & 0.195   &   0.312      & 0.195   &   0.314       & 0.195   &   0.319     \cr
Same language   & -0.008 &   0.077    & -0.009  &   0.077      & -0.010  &   0.077       & -0.011  &   0.078     \cr
Same family     & 1.308  &   1.537    & 1.308   &   1.572      & 1.308   &   1.608       & 1.308   &   1.685     \cr \hline
&  \multicolumn{2}{c}{}  & \multicolumn{2}{c}{$r_1\leq 0.3$}&  \multicolumn{2}{c}{$r_1\leq 0.4$}& \multicolumn{2}{c}{$r_1\leq 0.5$} \cr
& \multicolumn{2}{c}{}  & \multicolumn{2}{c}{(5)}&  \multicolumn{2}{c}{(6)} & \multicolumn{2}{c}{(7)} \cr \hline\hline
Reciprocation   & \multicolumn{2}{c}{}  &  1.134  & 1.676      &  1.042  & 1.676      &  0.938  & 1.676    \cr
In degree       & \multicolumn{2}{c}{}  &  19.584 & 33.099     &  17.270 & 33.099     &  14.907 & 33.099   \cr
Supported trust & \multicolumn{2}{c}{}  &  35.230 & 110.258    &  26.994 & 110.258    &  19.644 & 110.258  \cr
Constant        & \multicolumn{2}{c}{}  &  -3.896 & -3.439     &  -3.896 & -3.399     &  -3.896 & -3.351   \cr
Same religion   & \multicolumn{2}{c}{}  &  0.348  & 0.513      &  0.348  & 0.525      &  0.348  & 0.540    \cr
Same sex        & \multicolumn{2}{c}{}  &  0.565  & 0.744      &  0.565  & 0.761      &  0.565  & 0.781    \cr
Same caste      & \multicolumn{2}{c}{}  &  0.195  & 0.326      &  0.195  & 0.333      &  0.195  & 0.343    \cr
Same language   & \multicolumn{2}{c}{}  &  -0.013 & 0.078      &  -0.015 & 0.079      &  -0.018 & 0.079    \cr
Same family     & \multicolumn{2}{c}{}  &  1.308  & 1.771      &  1.308  & 1.867      &  1.308  & 1.983    \cr \hline
\end{tabular}

   \caption*{\footnotesize{Note: $\hat{\mathcal{C}}_n(\alpha)$ is computed in Columns (2)-(7) as $\cup_{r_1\leq \bar{\mathcal{R}}_1} \mathcal{C}_n(\alpha; 0, r_1)$, with $\bar{\mathcal{R}}_1\in \{0.05, 0.1, 0.2, 0.3, 0.4, 0.5\}$.}}
\end{table}

\begin{table}[ht]       
\centering
\caption{Ratio of lengths of 95\% confidence intervals, $|\hat{\mathcal{C}}_n(\alpha)|/|\mathcal{C}_n(\alpha, 0,0)|$.}
   \label{table:ratiolength}
   \begin{tabular}{ l c c c c c c}
\hline
                & $r_1\leq 0.05$ &  $r_1\leq 0.1$  &  $r_1\leq 0.2$ &  $r_1\leq 0.3$   &  $r_1\leq 0.4$    & $r_1\leq 0.5$ \\ 
                & (1)            &    (2)          &    (3)         &  (4)             & (5)               &  (6)  \\ \hline\hline 
Reciprocation   &  1.085&1.178&1.385&1.624&1.899&2.210   \\
In degree       &  1.169&1.335&1.669&2.006&2.349&2.700   \\
Supported trust &  1.050&1.126&1.301&1.478&1.640&1.785   \\
Constant        &  1.043&1.084&1.170&1.267&1.380&1.513   \\
Same religion   &  1.017&1.037&1.085&1.146&1.225&1.331   \\
Same sex        &  1.040&1.083&1.175&1.280&1.398&1.538   \\
Same caste      &  1.021&1.042&1.088&1.143&1.210&1.295   \\
Same language   &  1.008&1.017&1.038&1.063&1.095&1.135   \\
Same family     &  1.153&1.311&1.648&2.020&2.439&2.946   \\\hline
\end{tabular}

\end{table}

\begin{table}[ht] 
\centering
\caption{95\% confidence intervals $\hat{\mathcal{C}}_n(\alpha)$ with  $r_1=0$.}
   \label{table:unionR0}
   \begin{tabular}{
    l
    >{{[}} 
    S[table-format = -2.3,table-space-text-pre={[}]
    @{,\,} 
    S[table-format = -2.3,table-space-text-post={]}]
    <{{]}} 
    >{{[}} 
    S[table-format = -2.3,table-space-text-pre={[}]
    @{,\,} 
    S[table-format = -2.3,table-space-text-post={]}]
    <{{]}} 
    >{{[}} 
    S[table-format = -2.3,table-space-text-pre={[}]
    @{,\,} 
    S[table-format = -2.3,table-space-text-post={]}]
    <{{]}} 
    >{{[}} 
    S[table-format = -3.3,table-space-text-pre={[}]
    @{,\,} 
    S[table-format = -2.3,table-space-text-post={]}]
    <{{]}} 
}
\hline
&\multicolumn{2}{c}{$r_0=0$}&\multicolumn{2}{c}{$r_0\leq0.01$}&\multicolumn{2}{c}{$r_0\leq 0.02$} & \multicolumn{2}{c}{$r_0\leq 0.03$}\cr
&  \multicolumn{2}{c}{(1)}& \multicolumn{2}{c}{(2)}&\multicolumn{2}{c}{(3)}& \multicolumn{2}{c}{(4)}  \cr \hline\hline
Reciprocation   &1.342  &1.676  &1.342  &2.033  &1.342  &2.303  &1.280   &2.503     \cr 
In degree       &26.361 &33.099 &-4.617 &36.248 &-11.529&37.635 &-27.082 &50.532    \cr 
Supported trust &59.482 &110.258&35.522 &153.452&-77.588&233.503&-302.054&421.456   \cr 
Constant        &-3.896 &-3.536 &-3.896 &-2.713 &-4.849 &-2.713 &-5.526  &-2.713    \cr 
Same religion   &0.348  &0.492  &0.145  &0.492  &-0.040 &0.492  &-0.156  &0.492     \cr 
Same sex        &0.565  &0.705  &0.323  &0.705  &0.323  &0.787  &0.321   &0.895     \cr 
Same caste      &0.195  &0.309  &0.061  &0.438  &-0.194 &0.438  &-0.349  &0.438     \cr 
Same language   &-0.008 &0.077  &-0.115 &0.077  &-0.115 &0.232  &-0.115  &0.304     \cr 
Same family     &1.308  &1.537  &1.308  &2.047  &1.308  &2.631  &1.308   &2.954     \cr \hline
\end{tabular}

   \caption*{\footnotesize{Note: $\hat{\mathcal{C}}_n(\alpha)$ is computed in Columns (2)-(4) as $\cup_{r_0\leq \bar{\mathcal{R}}_0} \mathcal{C}_n(\alpha; r_0, 0)$, with $\bar{\mathcal{R}}_0\in \{0.01, 0.02, 0.03\}$.}}
\end{table}

\begin{table}[ht]       
   \centering
   \caption{Ratio of lengths of 95\% confidence intervals, $|\hat{\mathcal{C}}_n(\alpha)|/|\mathcal{C}_n(\alpha, 0,0)|$.}
      \label{table:ratiolengthR0}
      \begin{tabular}{ l c c c }
    \hline
                    & $r_0\leq 0.01$ &  $r_0\leq 0.02$  &  $r_0\leq 0.03$  \\ 
                    & (1)            &    (2)          &    (3)            \\ \hline\hline 
Reciprocation       &2.071&2.881&3.664\\
In degree           &6.065&7.297&11.519\\
Supported trust     &2.323&6.127&14.249\\
Constant            &3.281&5.922&7.801\\
Same religion       &2.407&3.693&4.493\\
Same sex            &2.732&3.317&4.102\\
Same caste          &3.306&5.544&6.904\\
Same language       &2.244&4.062&4.900\\
Same family         &3.225&5.772&7.185\\\hline
\end{tabular}

\end{table}

\begin{sidewaystable}[ht] 
\centering
\caption{95\% confidence intervals $\hat{\mathcal{C}}_n(\alpha)$.}
   \begin{tabular}{
    l
    >{{[}} 
    S[table-format = -2.3,table-space-text-pre={[}]
    @{,\,} 
    S[table-format = -2.3,table-space-text-post={]}]
    <{{]}} 
    >{{[}} 
    S[table-format = -2.3,table-space-text-pre={[}]
    @{,\,} 
    S[table-format = -2.3,table-space-text-post={]}]
    <{{]}} 
    >{{[}} 
    S[table-format = -2.3,table-space-text-pre={[}]
    @{,\,} 
    S[table-format = -2.3,table-space-text-post={]}]
    <{{]}} 
    >{{[}} 
    S[table-format = -2.3,table-space-text-pre={[}]
    @{,\,} 
    S[table-format = -2.3,table-space-text-post={]}]
    <{{]}} 
    >{{[}} 
    S[table-format = -2.3,table-space-text-pre={[}]
    @{,\,} 
    S[table-format = -2.3,table-space-text-post={]}]
    <{{]}} 
}
\hline
& \multicolumn{2}{c}{\shortstack{$r_0\leq 0.01$ \\ $r_1\leq 0.05$}} & \multicolumn{2}{c}{\shortstack{$r_0\leq 0.01$ \\ $r_1\leq 0.1$}} & \multicolumn{2}{c}{\shortstack{$r_0\leq 0.01$ \\ $r_1\leq 0.2$}} & \multicolumn{2}{c}{\shortstack{$r_0\leq 0.01$ \\ $r_1\leq 0.3$}} & \multicolumn{2}{c}{\shortstack{$r_0\leq 0.01$ \\ $r_1\leq 0.4$}} \cr   
&  \multicolumn{2}{c}{(1)}& \multicolumn{2}{c}{(2)}&\multicolumn{2}{c}{(3)}& \multicolumn{2}{c}{(4)} & \multicolumn{2}{c}{(5)}  \cr \hline\hline                
Reciprocation   &1.342  &2.033      &1.342   &2.109     &1.342   &2.109     &1.342  &2.109      &1.342  &2.109    \cr 
In degree       &-7.011 &37.492     &-11.421 &40.397    &-11.421 &40.397    &-11.421&40.397     &-11.421&40.397   \cr 
Supported trust &47.250 &145.528    &46.701  &145.528   &46.701  &145.528   &45.733 &145.528    &35.044 &145.528  \cr 
Constant        &-3.896 &-2.594     &-3.896  &-2.426    &-5.220  &-2.426    &-5.274 &-2.426     &-5.387 &-2.426   \cr 
Same religion   &0.116  &0.492      &0.103   &0.574     &0.103   &0.848     &0.103  &0.855      &0.103  &0.889    \cr 
Same sex        &0.326  &0.708      &0.326   &0.829     &0.326   &1.281     &0.326  &1.360      &0.326  &1.455    \cr 
Same caste      &0.047  &0.474      &0.047   &0.556     &0.047   &0.640     &0.047  &0.667      &0.047  &0.700    \cr 
Same language   &-0.121 &0.077      &-0.121  &0.109     &-0.121  &0.280     &-0.121 &0.293      &-0.121 &0.316    \cr 
Same family     &1.308  &2.156      &1.308   &2.388     &1.308   &2.604     &1.308  &2.804      &1.308  &3.016    \cr \hline
\end{tabular}

   \label{table:unionmixed005}
   \caption*{\footnotesize{Note: $\hat{\mathcal{C}}_n(\alpha)$ is computed in Columns (1)-(5) as $\cup_{r_0\leq \bar{\mathcal{R}}_0, r_1\leq \bar{\mathcal{R}}_1} \mathcal{C}_n(\alpha; r_0, r_1)$, with $\bar{\mathcal{R}}_0=0.01$ and $\bar{\mathcal{R}}_1\in \{0.05, 0.1, 0.2, 0.3, 0.4\}$.}}
\end{sidewaystable}

\begin{table}[ht]       
   \centering
   \caption{Ratio of lengths of 95\% confidence intervals, $|\hat{\mathcal{C}}_n(\alpha)|/|\mathcal{C}_n(\alpha, 0,0)|$.}
      \begin{tabular}{l c c c c c}
\hline
& \shortstack{$r_0\leq 0.01$ \\ $r_1\leq 0.05$} & \shortstack{$r_0\leq 0.01$ \\ $r_1\leq 0.1$} & \shortstack{$r_0\leq 0.01$ \\ $r_1\leq 0.2$} & \shortstack{$r_0\leq 0.01$ \\ $r_1\leq 0.3$} & \shortstack{$r_0\leq 0.01$ \\ $r_1\leq 0.4$} \\
&  (1)                         & (2)                        & (3)                         & (4)                          & (5)                         \\ \hline
Reciprocation    &2.070&2.298&2.298&2.298&2.298 \\ 
In degree        &6.605&7.690&7.690&7.690&7.690 \\ 
Supported trust  &1.936&1.946&1.946&1.965&2.176 \\ 
Constant         &3.610&4.077&7.749&7.898&8.211 \\ 
Same religion    &2.610&3.272&5.171&5.218&5.457 \\ 
Same sex         &2.728&3.591&6.822&7.389&8.062 \\ 
Same caste       &3.755&4.469&5.211&5.447&5.740 \\ 
Same language    &2.323&2.696&4.700&4.844&5.124 \\ 
Same family      &3.700&4.713&5.657&6.529&7.452 \\ \hline
\end{tabular}

      \label{table:ratiolengthmixed005}
\end{table}

\begin{sidewaystable}[ht] 
\centering
\caption{95\% confidence intervals $\hat{\mathcal{C}}_n(\alpha)$.}
\begin{tabular}{
    l
    >{{[}} 
    S[table-format = -2.3,table-space-text-pre={[}]
    @{,\,} 
    S[table-format = -2.3,table-space-text-post={]}]
    <{{]}} 
    >{{[}} 
    S[table-format = -2.3,table-space-text-pre={[}]
    @{,\,} 
    S[table-format = -2.3,table-space-text-post={]}]
    <{{]}} 
    >{{[}} 
    S[table-format = -2.3,table-space-text-pre={[}]
    @{,\,} 
    S[table-format = -2.3,table-space-text-post={]}]
    <{{]}} 
    >{{[}} 
    S[table-format = -2.3,table-space-text-pre={[}]
    @{,\,} 
    S[table-format = -2.3,table-space-text-post={]}]
    <{{]}} 
    >{{[}} 
    S[table-format = -2.3,table-space-text-pre={[}]
    @{,\,} 
    S[table-format = -2.3,table-space-text-post={]}]
    <{{]}} 
}
\hline
& \multicolumn{2}{c}{\shortstack{$r_0\leq 0.02$ \\ $r_1\leq 0.05$}} & \multicolumn{2}{c}{\shortstack{$r_0\leq 0.02$ \\ $r_1\leq 0.1$}} & \multicolumn{2}{c}{\shortstack{$r_0\leq 0.02$ \\ $r_1\leq 0.2$}} & \multicolumn{2}{c}{\shortstack{$r_0\leq 0.02$ \\ $r_1\leq 0.3$}} & \multicolumn{2}{c}{\shortstack{$r_0\leq 0.02$ \\ $r_1\leq 0.4$}} \cr   
&  \multicolumn{2}{c}{(1)}& \multicolumn{2}{c}{(2)}&\multicolumn{2}{c}{(3)}& \multicolumn{2}{c}{(4)} & \multicolumn{2}{c}{(5)}  \cr \hline\hline                
Reciprocation   &1.342  &2.382      &1.342  &2.492  &1.342  &2.492  &1.342  &2.492  &1.342  &2.492       \cr 
In degree       &-14.070&40.397     &-14.679&40.397 &-14.679&40.397 &-14.679&40.397 &-14.679&40.397      \cr 
Supported trust &-92.558&250.961    &-92.558&250.961&-92.558&250.961&-92.558&250.961&-92.558&250.961     \cr 
Constant        &-5.387 &-2.426     &-5.387 &-2.426 &-5.387 &-2.426 &-5.387 &-2.426 &-5.387 &-2.426      \cr   
Same religion   &-0.101 &0.889      &-0.131 &0.889  &-0.192 &0.889  &-0.282 &0.889  &-0.391 &0.889       \cr   
Same sex        &0.326  &1.455      &0.326  &1.455  &0.326  &1.455  &0.326  &1.455  &0.326  &1.455       \cr   
Same caste      &-0.248 &0.700      &-0.297 &0.700  &-0.443 &0.700  &-0.478 &0.700  &-0.711 &0.700       \cr   
Same language   &-0.121 &0.316      &-0.121 &0.316  &-0.121 &0.316  &-0.121 &0.316  &-0.121 &0.316       \cr   
Same family     &1.308  &3.016      &1.308  &3.055  &1.308  &3.463  &1.308  &3.653  &1.308  &4.189       \cr \hline  
\end{tabular}

\label{table:unionmixed01}
\caption*{\footnotesize{Note: $\hat{\mathcal{C}}_n(\alpha)$ is computed in Columns (1)-(5) as $\cup_{r_0\leq \bar{\mathcal{R}}_0, r_1\leq \bar{\mathcal{R}}_1} \mathcal{C}_n(\alpha; r_0, r_1)$, with $\bar{\mathcal{R}}_0=0.02$ and $\bar{\mathcal{R}}_1\in \{0.05, 0.1, 0.2, 0.3, 0.4\}$.}}
\end{sidewaystable}

\begin{table}[ht]       
   \centering
   \caption{Ratio of lengths of 95\% confidence intervals, $|\hat{\mathcal{C}}_n(\alpha)|/|\mathcal{C}_n(\alpha, 0,0)|$.}
      \begin{tabular}{l c c c c c}
\hline
& \shortstack{$r_0\leq 0.02$ \\ $r_1\leq 0.05$} & \shortstack{$r_0\leq 0.02$ \\ $r_1\leq 0.1$} & \shortstack{$r_0\leq 0.02$ \\ $r_1\leq 0.2$} & \shortstack{$r_0\leq 0.02$ \\ $r_1\leq 0.3$} & \shortstack{$r_0\leq 0.02$ \\ $r_1\leq 0.4$} \\
&  (1)                         & (2)                        & (3)                         & (4)                          & (5)                         \\ \hline
Reciprocation    &3.116&3.446&3.446&3.446&3.446\\
In degree        &8.084&8.174&8.174&8.174&8.174\\
Supported trust  &6.765&6.765&6.765&6.765&6.765\\
Constant         &8.211&8.211&8.211&8.211&8.211\\
Same religion    &6.868&7.078&7.498&8.126&8.880\\
Same sex         &8.062&8.062&8.062&8.062&8.062\\
Same caste       &8.328&8.755&10.035&10.348&12.394\\
Same language    &5.124&5.124&5.124&5.124&5.124\\
Same family      &7.452&7.623&9.406&10.233&12.570\\\hline 
\end{tabular}

      \label{table:ratiolengthmixed01}
\end{table}

\clearpage
\newpage
\appendix
\section{Proofs}\label{sec_a1}

\subsection{Proof of Lemmas in Section \ref{sec2}}

\begin{proof}[Proof of Lemma \ref{lemma1}]
By Assumption \ref{Ass2}, 
$$
G_{i,n}^\ast 
= 
\argmax_{g_{i,n}^\ast \in \mathcal{G}^n}E\left[U_{i}(g_{i,n}^\ast, G_{-i,n}^\ast, X, \varepsilon_{i}) \mid X, \varepsilon_{i}, \sigma_n \right] 
= 
\argmax_{g_{i,n}^\ast \in \mathcal{G}^n} \frac{1}{n} \sum_{j=1}^{n} g_{ij,n}^\ast
\left[(Z_{ij,n}^\ast)'\beta_0 + \varepsilon_{ij} \right]. 
$$
Therefore, $G_{ij,n}^\ast =  1 \left\{(Z_{ij,n}^\ast)'\beta_0+ \varepsilon_{ij}\geq 0\right\}$.
\end{proof}

\begin{proof}[Proof of Lemma \ref{belief_lemma}]
Define 
$$
D(r_0,r_1)=
\left(\begin{array}{cccc}
1-r_0-r_1&0&0&0\\
0&1-r_0-r_1&0&0\\
0&0&(1-r_0-r_1)^2&r_0(1-r_0-r_1)\\
0&0&0&1-r_0-r_1\\
\end{array}\right).
$$
By Assumption \ref{Ass3}, we can derive 
\begin{eqnarray*}
  && 
 E\left[G_{ki,n} G_{kj,n}\mid X, \sigma_n\right] 
 \\
 &=& 
E[Pr(G_{ki,n}=G_{kj,n}=1\mid G_{ki,n}^\ast ,G_{kj,n}^\ast,X, \sigma_n)\mid X, \sigma_n]\\
 &=& 
E[Pr(G_{ki,n}=1\mid G_{ki,n}^\ast ,G_{kj,n}^\ast,X, \sigma_n)Pr(G_{kj,n}=1\mid G_{ki,n}^\ast ,G_{kj,n}^\ast,X, \sigma_n)\mid X, \sigma_n]\\
 &=& 
\rho_0^2Pr(G_{ki,n}^\ast=0,G_{kj,n}^\ast=0\mid X, \sigma_n)
+\rho_0(1-\rho_1)Pr(G_{ki,n}^\ast=0,G_{kj,n}^\ast=1\mid X, \sigma_n)
\\&&+\rho_0(1-\rho_1)Pr(G_{ki,n}^\ast=1,G_{kj,n}^\ast=0\mid X, \sigma_n)
+(1-\rho_1)^2Pr(G_{ki,n}^\ast=1,G_{kj,n}^\ast=1\mid X, \sigma_n)
\\
 &=& 
 \rho_0^2+(1-\rho_0-\rho_1)^2 E\left[G_{ki,n}^\ast G_{kj,n}^\ast\mid X, \sigma_n\right]+\rho_0(1-\rho_0-\rho_1) E\left[G_{ki,n}^\ast+G_{kj,n}^\ast\mid X, \sigma_n\right],
 \end{eqnarray*}
 where the first equality is the law of iterated expectations, the second equality follows from Assumption \ref{Ass3} (i), the third equality follows from Assumption \ref{Ass3} (ii), and the last equality follows from $Pr(G_{ki,n}^\ast=0,G_{kj,n}^\ast=0\mid X, \sigma_n)=1-E\left[G_{ki,n}^\ast+G_{kj,n}^\ast\mid X, \sigma_n\right]+E[G_{ki,n}^\ast G_{kj,n}^\ast\mid X, \sigma_n]$, 
$Pr(G_{ki,n}^\ast=0,G_{kj,n}^\ast=1\mid X, \sigma_n)=E\left[G_{kj,n}^\ast\mid X, \sigma_n\right]-E[G_{ki,n}^\ast G_{kj,n}^\ast\mid X, \sigma_n]$, 
$Pr(G_{ki,n}^\ast=0,G_{kj,n}^\ast=1\mid X, \sigma_n)=E\left[G_{ki,n}^\ast\mid X, \sigma_n\right]-E[G_{ki,n}^\ast G_{kj,n}^\ast\mid X, \sigma_n]$, and 
$Pr(G_{ki,n}^\ast=1,G_{kj,n}^\ast=1\mid X, \sigma_n)=E[G_{ki,n}^\ast G_{kj,n}^\ast\mid X, \sigma_n]$.  
It follows then 
\begin{eqnarray*}
\gamma_{ij,n}
&=&
\left(\begin{array}{c}
E\left[G_{ji,n}\mid X, \sigma_n\right]\\
\frac{1}{n}\sum_{k}E\left[G_{kj,n}\mid X, \sigma_n\right]\\
\frac{1}{n}\sum_{k}E\left[G_{ki,n} G_{kj,n}\mid X, \sigma_n\right]\\
\frac{1}{n}\sum_{k}E\left[G_{ki,n}+G_{kj,n}\mid X, \sigma_n\right]
\end{array}\right)
\\
&=&
\left(\begin{array}{c}
\rho_0\\
\rho_0\\
\rho_0^2\\
\rho_0
\end{array}\right)
+
D(\rho_0,\rho_1)
\left(\begin{array}{c}
E\left[G_{ji,n}^\ast\mid X, \sigma_n\right]\\
\frac{1}{n}\sum_{k}E\left[G_{kj,n}^\ast\mid X, \sigma_n\right]\\
\frac{1}{n}\sum_{k}E\left[G_{ki,n}^\ast G_{kj,n}^\ast\mid X, \sigma_n\right]\\
\frac{1}{n}\sum_{k}E\left[G_{ki,n}^\ast+G_{kj,n}^\ast\mid X, \sigma_n\right]
\end{array}\right).
\end{eqnarray*}
Since $D(\rho_0,\rho_1)$ is invertible given $1-\rho_0-\rho_1\ne 0$, it follows that  
$$
\left(\begin{array}{c}
E\left[G_{ji,n}^\ast\mid X\right]\\
\frac{1}{n}\sum_{k}E\left[G_{kj,n}^\ast\mid X, \sigma_n\right]\\
\frac{1}{n}\sum_{k}E\left[G_{ki,n}^\ast G_{kj,n}^\ast\mid X, \sigma_n\right]\\
\frac{1}{n}\sum_{k}E\left[G_{ki,n}^\ast+G_{kj,n}^\ast\mid X, \sigma_n \right]
\end{array}\right)
=
D(\rho_0,\rho_1)^{-1}
\left(
\gamma_{ij,n}-
\left(\begin{array}{c}
\rho_0\\
\rho_0\\
\rho_0^2\\
\rho_0
\end{array}\right)
\right).
$$
The first three components of the right-hand side of the above equation are $\gamma_{ij,n}^\ast$, so 
$$
\gamma_{ij,n}^\ast
=
\left(\begin{array}{cccc}
1&0&0&0\\
0&1&0&0\\
0&0&1&0\\
\end{array}\right)
D(\rho_0,\rho_1)^{-1}
\left(
\gamma_{ij,n}-
\left(\begin{array}{c}
\rho_0\\
\rho_0\\
\rho_0^2\\
\rho_0
\end{array}\right)
\right)
=
c(\rho_0,\rho_1)+C(\rho_0,\rho_1)\gamma_{ij,n}.
$$
\end{proof}

\begin{proof}[Proof of Lemma \ref{Ass3_lemma}]
It suffices to show that $Pr(G_{ij,n}=1\mid X_{ij},\gamma_{ij,n}, \gamma_{ij,n}^\ast,X,\sigma_n)=\rho_0Pr(G_{ij,n}^\ast=0\mid X_{ij},\gamma_{ij,n}^\ast)+(1-\rho_1)Pr(G_{ij,n}^\ast=1\mid X_{ij},\gamma_{ij,n}^\ast)$.
Since $(X_{ij},\gamma_{ij,n}, \gamma_{ij,n}^\ast)$ are a function of $(X,\sigma_n)$, it follows that 
$$
Pr(G_{ij,n}=1\mid X_{ij},\gamma_{ij,n}, \gamma_{ij,n}^\ast,X,\sigma_n)
=
Pr(G_{ij,n}=1\mid X,\sigma_n).
$$
Using Assumptions \ref{Ass1}-\ref{Ass3}, 
\begin{eqnarray*}
Pr(G_{ij,n}=1\mid X,\sigma_n)
&=&
\rho_0Pr(G_{ij,n}^\ast=0\mid X,\sigma_n)+(1-\rho_1)Pr(G_{ij,n}^\ast=1\mid X,\sigma_n)\\
&=&
\rho_0Pr((Z_{ij,n}^\ast)'b + \varepsilon_{ij}<0\mid X,\sigma_n)+(1-\rho_1)Pr((Z_{ij,n}^\ast)'b + \varepsilon_{ij}\geq 0\mid X,\sigma_n)\\
&=&
\rho_0Pr((Z_{ij,n}^\ast)'b + \varepsilon_{ij}<0\mid Z_{ij,n}^\ast)+(1-\rho_1)Pr((Z_{ij,n}^\ast)'b + \varepsilon_{ij}\geq 0\mid Z_{ij,n}^\ast),
\end{eqnarray*}
where the first equality follows from Assumption \ref{Ass3}, the second follows from Lemma \ref{lemma1}, and the last follows from the independence between $\varepsilon$ and $X$.
\end{proof}

\subsection{Proof of Theorem \ref{theorem_identification}}

\begin{proof}
To show that every element $\theta$ of  $\Theta_I(P)$ satisfies Eq. \eqref{moment_TheoremID}, we can derive the following equalities: 
\begin{eqnarray*}
P(G_{ij,n}=1\mid X_{ij},\gamma_{ij,n})
&=&
P^\ast(G_{ij,n}=1\mid X_{ij},\gamma_{ij,n}) \\
&=&
P^\ast(G_{ij,n}=1\mid X_{ij}, \gamma_{ij,n}, \gamma_{ij,n}^\ast)\\
&=&
r_0+(1-r_0-r_1)P^\ast(G_{ij,n}^\ast=1\mid X_{ij},  \gamma_{ij,n}^\ast)\\
&=&
r_0+(1-r_0-r_1)P^\ast((Z_{ij,n}^\ast)'b + \varepsilon_{ij}\geq 0\mid X_{ij}, \gamma_{ij,n}^\ast)\\
&=&
r_0+(1-r_0-r_1)\Phi((\gamma_{ij,n}^\ast)'b_1+X_{ij}'b_2)\\
&=&
r_0+(1-r_0-r_1)\Phi((c(r_0,r_1)+C(r_0,r_1)\gamma_{ij,n})'b_1+X_{ij}'b_2),
\end{eqnarray*}
where the first equality follows from $P=P^\ast$ for the observables $(G_{ij,n},X_{ij},\gamma_{ij,n})$, the second equality follows because $\gamma_{ij,n}^\ast$ is a function of $ \gamma_{ij,n}$ in Condition \ref{misclas_prop}(ii), the third equality follows from Condition \ref{misclas_prop}(i), the fourth equality follows from Condition \ref{linear_index}, the fifth equality follows from Condition \ref{independence_assn}, and the last equality follows from Condition \ref{misclas_prop}(ii).
The rest of the proof is going to show that every element $\theta$ of  $\Theta$ satisfying Eq. \eqref{moment_TheoremID} belongs to $\Theta_I(P)$. 

Define the joint distribution $P^\ast$ in the following way. 
The marginal distribution of $\varepsilon_{ij}$ is standard normal.
The conditional distribution of $(\gamma_{ij,n},\gamma_{ij,n}^\ast,X_{ij})$ given $\varepsilon_{ij}$ is 
\begin{equation}\label{A_equ}
P^\ast((\gamma_{ij,n},\gamma_{ij,n}^\ast,X_{ij})\in B \mid \varepsilon_{ij})=P((\gamma_{ij,n},c(r_0,r_1)+C(r_0,r_1)\gamma_{ij,n},X_{ij})\in B)
\end{equation}
for all the measurable sets $B$. 
The conditional distribution of $G_{ij,n}^\ast$ given $(\gamma_{ij,n},\gamma_{ij,n}^\ast,X_{ij},\varepsilon_{ij})$ is  
\begin{equation}\label{B_equ}
P^\ast(G_{ij,n}^\ast=1\mid \gamma_{ij,n},\gamma_{ij,n}^\ast,X_{ij},\varepsilon_{ij})=1\{(Z_{ij,n}^\ast)'b + \varepsilon_{ij}\geq 0\}.
\end{equation}
The conditional distribution of $G_{ij,n}$ given $(G_{ij,n}^\ast,\gamma_{ij,n},\gamma_{ij,n}^\ast,X_{ij},\varepsilon_{ij})$  is  
\begin{equation}\label{C_equ}
P^\ast(G_{ij,n}=1\mid G_{ij,n}^\ast,\gamma_{ij,n},\gamma_{ij,n}^\ast,X_{ij},\varepsilon_{ij})
=
\begin{cases}
r_0&\mbox{ if }G_{ij,n}^\ast=0\\
1-r_1&\mbox{ if }G_{ij,n}^\ast=1.
\end{cases}
\end{equation}
Also it implies 
\begin{equation}\label{C_equ_weak}
P^\ast(G_{ij,n}=1\mid G_{ij,n}^\ast,Z_{ij},\gamma_{ij,n}^\ast)
=
\begin{cases}
r_0&\mbox{ if }G_{ij,n}^\ast=0\\
1-r_1&\mbox{ if }G_{ij,n}^\ast=1.
\end{cases}
\end{equation}

Note that $(P^\ast,\theta)$ satisfies Conditions 1-3, because Condition 1(i) follows because $\varepsilon_{ij}$ is normally distributed under $P^\ast$, Condition 1(ii) follows from Eq. (\ref{A_equ}). Condition 2 follows from Eq. (\ref{B_equ}). Condition 3(i) follows from Eq. (\ref{B_equ}) and (\ref{C_equ}), and Condition 3(ii) follows from Eq. (\ref{A_equ}). 

The distribution of $(G_{ij,n},X_{ij},\gamma_{ij,n})$ induced from $P^\ast$ is equal to $P$.
The distribution of $(X_{ij},\gamma_{ij,n})$ induced from $P^\ast$ is equal to that from $P$, by the construction of $P^\ast((\gamma_{ij,n},\gamma_{ij,n}^\ast,X_{ij})\in B \mid \varepsilon_{ij})$.
The equality of $P^\ast(G_{ij,n}=1\mid Z_{ij,n})=P(G_{ij,n}=1\mid Z_{ij,n})$ a.s. under $P^\ast$ is shown as follows.
Note that  
\begin{equation}\label{D_equ}
\gamma_{ij,n}^\ast=c(r_0,r_1)+C(r_0,r_1)\gamma_{ij,n}\mbox{ a.s. under }P^\ast.
\end{equation}
Then  
\begin{eqnarray*}
P^\ast(G_{ij,n}=1\mid Z_{ij,n})
&=&
P^\ast(G_{ij,n}=1\mid Z_{ij,n},\gamma_{ij,n}^\ast)\\
&=&
r_0P^\ast(G_{ij,n}^\ast=0\mid Z_{ij,n},\gamma_{ij,n}^\ast)+(1-r_1)P^\ast(G_{ij,n}^\ast=1\mid Z_{ij,n},\gamma_{ij,n}^\ast)\\
&=&
r_0+(1-r_0-r_1)P^\ast(G_{ij,n}^\ast=1\mid Z_{ij,n},\gamma_{ij,n}^\ast)\\
&=&
r_0+(1-r_0-r_1)E_{P^\ast}[P^\ast(G_{ij,n}^\ast=1\mid Z_{ij,n},\gamma_{ij,n}^\ast,\varepsilon_{ij})\mid Z_{ij,n},\gamma_{ij,n}^\ast]\\
&=&
r_0+(1-r_0-r_1)P^\ast((Z_{ij,n}^\ast)'b + \varepsilon_{ij}\geq 0\mid Z_{ij,n},\gamma_{ij,n}^\ast)\\
&=&
r_0+(1-r_0-r_1)\Phi((Z_{ij,n}^\ast)'b)\\
&=&
r_0+(1-r_0-r_1)\Phi((c(r_0,r_1)+C(r_0,r_1)\gamma_{ij,n})'b_1+X_{ij}'b_2)\\
&=&
P(G_{ij,n}=1\mid Z_{ij,n}),
\end{eqnarray*}
where the first and seventh equalities follow from Eq. (\ref{D_equ}), the second follows from Eq. (\ref{C_equ_weak}), the fifth follows from Eq. (\ref{B_equ}), and the last follows from Eq. (\ref{moment_TheoremID}).
\end{proof}

\subsection{Proof of Theorem \ref{theorem_inference}}
Theorem \ref{theorem_inference} follows because Lemma \ref{clt_m} and \ref{var_conv} in this appendix imply that, conditional on $(X,\sigma_n)$, $n\hat{m}_n(\theta)'\hat{S}(\theta)^{-1}\hat{m}_n(\theta)$ converges in distribution to the $\chi^2_J$ distribution. 

In the proof of this theorem, all the statements are conditional on $(X,\sigma_n)$. 
For any vector, the norm is understood as the Euclidean norm, and for any matrix the norm is induced by the Euclidean norm.   
Define  
\begin{eqnarray*}
u_{ij}(\theta)
&=&
(c(r_0,r_1)+C(r_0,r_1)\gamma_{ij,n})'b_1+X_{ij}'b_2\\
\hat{u}_{ij}(\theta)
&=&
(c(r_0,r_1)+C(r_0,r_1)\hat\gamma_{ij,n})'b_1+X_{ij}'b_2, 
\end{eqnarray*}
where we use $b=(b_1, b_2)\in \mathcal{B}$, and so $b_1$ represents the first three components of $b$ associated with the network externalities and $b_2$ represents the remaining components in $b$ associated with the homophily covariates.
For a generic random variable RV, define 
\begin{align*}
RV^\dagger = RV - E[RV\mid X, \sigma_n], 
\end{align*}
and note that $E[RV^\dagger\mid X, \sigma_n]=0$. For example, $G_{kj,n}^\dagger=G_{kj,n}-E\left[G_{kj,n}\mid X,\sigma_n\right]$.
Define  
\begin{eqnarray*}
\tilde\psi_k(\theta_0)
&=&
\frac{1}{n}\sum_{j\ne k}G_{kj,n}\zeta_{kj}-(1-\rho_0-\rho_1)\frac{1}{n^2}\sum_{i,j}\left(\phi(u_{ij}(\theta_0))\beta_1'C(\rho_0,\rho_1)\hat\psi_{\gamma,k,n}(X_{ij})\right)\zeta_{ij}.
\end{eqnarray*}

\begin{lemma}\label{gammastar_symmetric}
\begin{equation}\label{gammastar_symmetry}
1\{X_{i_{1},j_1}=X_{ij}\}
\left(
\begin{array}{c}
E[G_{j_1i_{1},n}^\ast\mid X, \sigma_{n}]-E[G_{ji,n}^\ast\mid X, \sigma_{n}]\\
\frac{1}{n}\sum_{k} (E[G_{kj_1,n}^\ast\mid X, \sigma_{n}] - E[G_{kj,n}^\ast\mid X, \sigma_{n}])\\
\frac{1}{n}\sum_{k} (E[G_{ki_{1},n}^\ast G_{kj_1,n}^\ast\mid X, \sigma_{n}] - E[G_{ki,n}^\ast G_{kj,n}^\ast\mid X, \sigma_{n}])
\\
\frac{1}{n}\sum_{k} \left(E\left[G_{ki_1,n}^\ast+G_{kj_1,n}^\ast\mid X, \sigma_n\right] - E\left[G_{ki,n}^\ast+G_{kj,n}^\ast\mid X, \sigma_n\right]\right)  
\end{array}
\right)
=0.
\end{equation}
\end{lemma}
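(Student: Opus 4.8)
The plan is to derive the four coordinate identities directly from the permutation-symmetry of the equilibrium (Assumption \ref{Ass2}(iv)), since that is precisely the structure the permutation class $\Pi$ was set up to exploit. The basic object I would establish first is an equivariance statement for marginal and joint choice probabilities: for any $\pi\in\Pi$ and any agents $a,b$, applying the symmetry relation $\sigma_{a,n}(g\mid X)=\sigma_{\pi(a),n}(\pi(g)\mid\pi(X))$ to the event that $a$ links to $b$ and re-indexing the action profile by $\pi$ yields $E[G_{ab,n}^\ast\mid X,\sigma_n]=E[G_{\pi(a)\pi(b),n}^\ast\mid\pi(X),\sigma_n]$, and likewise $E[G_{ac,n}^\ast G_{bc,n}^\ast\mid X,\sigma_n]=E[G_{\pi(a)\pi(c),n}^\ast G_{\pi(b)\pi(c),n}^\ast\mid\pi(X),\sigma_n]$ for the product terms (these are governed by the single agent's strategy $\sigma_{c,n}$, so the joint event maps cleanly under $\pi$). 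Lemma \ref{lemma1} can be invoked to replace each such probability by $\Phi((\gamma_{ab,n}^\ast)'\beta_1+X_{ab}'\beta_2)$ when an explicit form is convenient, but the equivariance above is the real engine.

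Next I would fix the two pairs $(i,j)$ and $(i_1,j_1)$ with $X_{i_1j_1}=X_{ij}$ and take $\pi$ to be the transposition that swaps $i\leftrightarrow i_1$ and $j\leftrightarrow j_1$. Coordinate one (reciprocity) follows immediately from equivariance applied to the link $j\to i$, giving $E[G_{ji,n}^\ast\mid X,\sigma_n]=E[G_{j_1i_1,n}^\ast\mid\pi(X),\sigma_n]$. For coordinates two, three, and four I would apply equivariance term by term inside $\frac1n\sum_k$ and then re-index the sum by $k\mapsto\pi(k)$; since $\pi$ is a bijection of $\mathcal N_n$, this turns $\frac1n\sum_k E[G_{kj,n}^\ast\mid X,\sigma_n]$ into $\frac1n\sum_k E[G_{kj_1,n}^\ast\mid\pi(X),\sigma_n]$, and analogously for the products $G_{ki,n}^\ast G_{kj,n}^\ast$ and the sums $G_{ki,n}^\ast+G_{kj,n}^\ast$. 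At this point each coordinate of the pair $(i,j)$ has been matched to the corresponding coordinate of $(i_1,j_1)$, but evaluated under the relabeled attribute profile $\pi(X)$ rather than under $X$.

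The remaining step — and the one I expect to be the main obstacle — is to replace $\pi(X)$ by $X$ in all four coordinates. This does not follow from equivariance alone, because for a generic profile the only permutation fixing $X$ is the identity; it requires that the four aggregates be \emph{anonymous} functionals of the attribute profile, so that relabeling the coordinates touched by the transposition $\pi$ leaves their values unchanged. I would make this precise by arguing, from the identical-distribution and disjoint-set independence structure of Assumption \ref{Ass1} together with the symmetric selection of $\sigma_n$, that each coordinate is invariant under $\pi$; the delicate point is that the summed statistics aggregate over all $k$ and the reciprocity term is a single marginal, so one must verify the invariance holds exactly rather than up to the lower-order terms generated by relabeling a bounded number of coordinates. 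Once this anonymity is secured, substituting $\pi(X)$ by $X$ yields the vanishing of each coordinate, and multiplying through by the indicator $\mathbf 1\{X_{i_1j_1}=X_{ij}\}$ gives the claimed identity \eqref{gammastar_symmetry}.
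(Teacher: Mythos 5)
Your first two steps are the same engine the paper itself invokes when it says the lemma ``follows from symmetry of the equilibrium'' as in Lemma 1 of \citet{leung2015two}: equivariance of an agent's (joint) choice probabilities under permutations, applied to the transposition $\pi$ swapping $i\leftrightarrow i_1$ and $j\leftrightarrow j_1$, followed by re-indexing the averages over $k$. (Minor slip along the way: the product terms $G^{\ast}_{ki,n}G^{\ast}_{kj,n}$ are two links sent by the \emph{same} agent $k$, so their joint law is governed by $\sigma_{k,n}$; your indexing $G^{\ast}_{ac}G^{\ast}_{bc}$ ``governed by $\sigma_{c,n}$'' has this backwards.) The genuine gap is your third step, and the repair you propose cannot work. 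You need the pointwise identity $E[\,\cdot\mid \pi(X),\sigma_n]=E[\,\cdot\mid X,\sigma_n]$, i.e., that a fixed function of the realized profile takes equal values at the two arguments $X$ and $\pi(X)$. Assumption \ref{Ass1} (independence across disjoint pairs, identical distribution of the $X_{ij}$) restricts only the \emph{law} of $X$; no property of the law of $X$ can force a conditional-on-$X$ functional to agree at two distinct points of its domain. Worse, with unrestricted pair-specific attributes the invariance you need is genuinely false: take $n=5$, $X_{12}=X_{21}=X_{34}=X_{43}=X_{51}=1$ and all other $X_{ab}=0$, so $X_{12}=X_{34}$. In the equilibrium fixed point, $X_{51}=1$ raises the belief about agent $1$'s in-degree, which raises $P(G^{\ast}_{21,n}=1\mid X,\sigma_n)$ above $P(G^{\ast}_{43,n}=1\mid X,\sigma_n)$ through the in-degree coefficient, and the reciprocity coefficient then transmits this to $P(G^{\ast}_{12,n}=1\mid X,\sigma_n)\neq P(G^{\ast}_{34,n}=1\mid X,\sigma_n)$; since no permutation fixing this profile maps the pair $(1,2)$ to $(3,4)$, symmetry places no restriction that rules this out.

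What closes the argument in the setting the paper (via \citealt{leung2015two}) has in mind is not anonymity of the four functionals but invariance of the \emph{profile} itself: when pair attributes are inherited from individual attributes and the matching is on those attributes, the equality $X_{i_1j_1}=X_{ij}$ forces every coordinate of $X$ touched by the transposition to be unchanged, i.e., $\pi(X)=X$ exactly. In that case your step-one equivariance, evaluated at the single realized profile $X$, already delivers all four coordinates of Eq. \eqref{gammastar_symmetry}, and there is no ``replace $\pi(X)$ by $X$'' step left to justify. So the correct fix is to show that the matching condition implies $\pi(X)=X$ --- precisely the situation your own observation that ``for a generic profile the only permutation fixing $X$ is the identity'' warns about --- rather than to appeal to the distributional structure of Assumption \ref{Ass1}, which cannot bridge that gap.
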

\begin{proof}
This result follows from symmetry of the equilibrium and it is shown in a similar way to Lemma 1 in \citet{leung2015two}.
\end{proof}

\begin{lemma}\label{bounds_many}
$$
\max\{\|\hat\psi_{\gamma,k, n}(X_{ij})\|,\|\psi_{\gamma,k, n}(X_{ij})\|\}\leq \frac{4}{\min_{x}\hat{p}(x)}
$$
$$
\max_{i}\{\|\tilde\psi_i(\theta_0)\|,\|\hat\psi_i(\theta_0)\|,\|\psi_i(\theta_0)\|\}\leq  1+(1-\rho_0-\rho_1)\phi(0) \|\beta_1'C(\rho_0,\rho_1)\|\frac{4}{\min_{x}\hat{p}(x)}.
$$
\end{lemma}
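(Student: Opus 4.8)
The plan is to prove the two displayed inequalities in the order stated, using in each case only that the adjacency entries $G_{\cdot\cdot,n}\in\{0,1\}$, the normalizing identities $\frac{1}{n^2}\sum_{i,j}1\{X_{ij}=x\}=\hat p(x)$ and $\frac{1}{n}\sum_{i}1\{X_{ik}=x\}\le 1$, and elementary norm inequalities. For the first bound I would control the four coordinates of $\hat\psi_{\gamma,k,n}(x)$ one at a time. Since the entries lie in $\{0,1\}$, the first coordinate is at most $\frac{1}{\hat p(x)}\cdot\frac{1}{n}\sum_i 1\{X_{ik}=x\}\le 1/\hat p(x)$, the second and third are nonnegative averages of terms in $[0,1]$ and hence at most $\hat p(x)/\hat p(x)=1$, and the fourth, built from $G_{ki,n}+G_{kj,n}\in[0,2]$, is at most $2$. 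Because $0<\hat p(x)\le 1$, all four are dominated by the multiples $1,1,1,2$ of $1/\min_x\hat p(x)$, so the squared Euclidean norm is at most $(1+1+1+4)/(\min_x\hat p(x))^2=7/(\min_x\hat p(x))^2$, which gives the claim. The identical computation applies to $\psi_{\gamma,k,n}(x)$ once I observe that for a random variable $\mathrm{RV}\in[0,c]$ the centered version satisfies $|\mathrm{RV}^\dagger|=|\mathrm{RV}-E[\mathrm{RV}\mid X,\sigma_n]|\le c$, so the same per-coordinate constants $1,1,1,2$ survive.

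For the second bound I would apply the triangle inequality to the two summands defining each of $\tilde\psi_k,\hat\psi_k,\psi_k$ and bound the two pieces separately, uniformly in the index. The first summand is in every case a combination $\frac{1}{n}\sum_{j\ne k}a_{kj}\mathbf{1}_{kj}$ with $|a_{kj}|\le 1$: for $\tilde\psi_k$ and $\hat\psi_k$ the coefficient is $G_{kj,n}\in\{0,1\}$, while for $\psi_k$ it is $G_{kj,n}-\rho_0-(1-\rho_0-\rho_1)\Phi(u_{kj}(\theta_0))$, the difference between $G_{kj,n}\in\{0,1\}$ and the quantity $\rho_0+(1-\rho_0-\rho_1)\Phi(\cdot)\in[\rho_0,1-\rho_1]\subset[0,1]$. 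Because each $\mathbf{1}_{kj}$ has its unique nonzero entry at the coordinate selected by $X_{kj}$, the $m$-th coordinate of this summand is bounded in absolute value by $q_m:=\frac{1}{n}\sum_{j\ne k}1\{X_{kj}=x_m\}$, where $\sum_m q_m\le 1$. Hence its squared norm is at most $\sum_m q_m^2\le(\sum_m q_m)^2\le 1$, so the first summand has norm at most $1$.

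For the second summand I would use $\phi(\cdot)\le\phi(0)$ together with the Cauchy–Schwarz estimate $|\beta_1'C(\rho_0,\rho_1)\psi_{\gamma,k,n}(X_{ij})|\le\|\beta_1'C(\rho_0,\rho_1)\|\cdot\|\psi_{\gamma,k,n}(X_{ij})\|$ (and likewise for $\hat\psi_{\gamma,k,n}$), with the already-proved first bound controlling the last factor by $\sqrt7/\min_x\hat p(x)$. The scalar coefficient multiplying $\mathbf{1}_{ij}$ is therefore uniformly at most $\phi(0)\|\beta_1'C(\rho_0,\rho_1)\|\sqrt7/\min_x\hat p(x)$, and the same indicator argument—now with weights $\frac{1}{n^2}\sum_{i,j}1\{X_{ij}=x_m\}=\hat p(x_m)$, which sum to $1$—bounds the norm of this summand by $(1-\rho_0-\rho_1)\phi(0)\|\beta_1'C(\rho_0,\rho_1)\|\sqrt7/\min_x\hat p(x)$. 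Adding the two pieces via the triangle inequality yields exactly the stated bound, uniformly in the index.

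I expect the only genuine subtlety to be the norm estimate for the indicator-weighted sums, which is used twice: because each attribute $X_{ij}$ coincides with exactly one support point $x_m$, the contributions to distinct coordinates come from disjoint index sets whose empirical frequencies sum to at most one. This is precisely what allows the bounds to avoid any spurious factor of $\sqrt J$ in the number of support points and to match the stated constants $\sqrt7$ and $1$ exactly.
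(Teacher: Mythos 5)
Your proposal is correct and follows essentially the same route as the paper's proof: coordinatewise bounds $(1,1,1,2)$ on $\hat\psi_{\gamma,k,n}$ and $\psi_{\gamma,k,n}$ yielding the constant $\sqrt{7}$, then a triangle-inequality split of each $\psi$-type vector into its two indicator-weighted summands, with the coefficients bounded uniformly by $1$ and by $\phi(0)\|\beta_1'C(\rho_0,\rho_1)\|\sqrt{7}/\min_x\hat p(x)$ respectively. In fact, your write-up makes explicit the steps the paper leaves implicit (notably the weights-summing-to-one argument that avoids any factor of $\sqrt{J}$, and the observation that centering preserves the per-coordinate bounds), so it is a more complete rendering of the same argument.
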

\begin{proof}
The bound for $\|\hat\psi_{\gamma,k, n}(X_{ij})\|$ is derived from 
\begin{eqnarray*}
\|\hat\psi_{\gamma,k, n}(x)\|
&\leq&
\frac{1}{n^2} \sum_{i_{1},j_1} 
\frac{1\{X_{i_{1},j_1}=x\}}{\hat{p}(x)} 
\left\|
\left(  
\begin{array}{c}
0\\
G_{kj_1}\\
G_{ki_{1}}G_{kj_1}\\
G_{ki_{1}}+G_{kj_1}
\end{array}
\right)
\right\|
+
\frac{1}{n} \sum_{i_{1}} 
\frac{1\{X_{i_{1},k}=x\}}{\hat{p}(x)} 
\left\|
\left(  
\begin{array}{c}
G_{ki_{1}}\\
0\\
0\\
0
\end{array}
\right)\right\|
\\
&\leq&  
\frac{\sqrt{6}}{\min_x\hat{p}(x)}+\frac{1}{\min_x\hat{p}(x)}\\
\\
&\leq&  
\frac{4}{\min_x\hat{p}(x)}.
\end{eqnarray*}
The bound for $\|\psi_{\gamma,k, n}(X_{ij})\|$ is similarly derived. 

The bound for $\|\tilde\psi_i(\theta)\|$ is derived from 
\begin{eqnarray*}
\|\tilde\psi_i(\theta)\|
&\leq&
\max_{j\ne i}\left|G_{ij,n}\right|+(1-\rho_0-\rho_1)\max_{l,j}\left|\phi(u_{lj}(\theta_0))\beta_1'C(\rho_0,\rho_1)\hat\psi_{\gamma,i, n}(X_{lj})\right|\\
 &\leq&
1+(1-\rho_0-\rho_1)\phi(0) \|\beta_1'C(\rho_0,\rho_1)\|\frac{4}{\min_{x}\hat{p}(x)}.
\end{eqnarray*}
The bound for $\|\hat\psi_i(\theta)\|$ is similarly derived.

The bound for $\|\psi_i(\theta)\|$ is derived from  
\begin{eqnarray*}
\|\psi_i(\theta_0)\|
&\leq&
\max_{j\ne i}\left|G_{ij,n}-\rho_0-(1-\rho_0-\rho_1)\Phi(u_{ij}(\theta_0))\right|\\&&+(1-\rho_0-\rho_1)\max_{l,j}\|\phi(u_{ij}(\theta_0))\beta_1'C(\rho_0,\rho_1)\| \|\psi_{\gamma,i, n}(X_{lj})\|
\\
&\leq&
1+(1-\rho_0-\rho_1)\phi(0)\|\beta_1'C(\rho_0,\rho_1)\|\frac{4}{\min_{x}\hat{p}(x)}.
\end{eqnarray*}
\end{proof}

\begin{lemma}\label{gamma_influe}
$$
\hat\gamma_{ij}-\gamma_{ij,n}=\frac{1}{n}\sum_{k} \psi_{\gamma,k, n}(X_{ij})
$$ 
and 
$$
\sup_{i,j}\left\|\hat\gamma_{ij}-\gamma_{ij,n}\right\|=O_p(n^{-1/2})\mbox{ given }(X,\sigma_n).
$$ 
\end{lemma}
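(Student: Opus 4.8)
The plan is to prove the two displays in sequence: first the influence-function decomposition by direct substitution, and then the $O_p(n^{-1/2})$ rate by a conditional variance bound built on that decomposition. For the first display I would substitute the definition of $\psi_{\gamma,k,n}(x)$ into $\frac{1}{n}\sum_k\psi_{\gamma,k,n}(X_{ij})$, interchange the summation over $k$ with the summation over pair indices, and group terms coordinate by coordinate. Since $\psi_{\gamma,k,n}$ is assembled only from the centered row-$k$ quantities $G_{kj,n}^\dagger$, $(G_{ki,n}G_{kj,n})^\dagger$, $(G_{ki,n}+G_{kj,n})^\dagger$ and $G_{ki,n}^\dagger$, summing over $k$ reassembles precisely the sample network statistics in the numerator of $\hat\gamma(X_{ij})$ on one side, and their conditional expectations on the other. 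Concretely, $\frac{1}{n}\sum_k\psi_{\gamma,k,n}(x)$ splits into $\hat\gamma(x)$ minus a term of the form $\frac{1}{\hat{p}(x)}\frac{1}{n^2}\sum_{i,j}1\{X_{ij}=x\}E[\,\cdot\mid X,\sigma_n]$; the remaining task is to identify this last term with $\gamma_{ij,n}$.

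The key ingredient for that identification is Lemma \ref{gammastar_symmetric}: equilibrium symmetry implies that the conditional expectation of each network statistic depends on the pair $(i,j)$ only through $X_{ij}$. Hence $E[\,\cdot\mid X,\sigma_n]$ is constant across all pairs with $X_{ij}=x$ and equals the corresponding coordinate of $\gamma_{ij,n}$; pulling this constant out of the average over pairs and cancelling $\hat{p}(x)$ returns $\gamma_{ij,n}$. In the first coordinate this is exact. In the remaining coordinates there is a benign mismatch, because the estimator averages over all $k$ whereas $\gamma_{ij,n}$ omits $k=i$; the omitted diagonal contribution is $(X,\sigma_n)$-measurable and of order $1/n$, so it is absorbed into a remainder dominated by $n^{-1/2}$ and does not affect the stated rate.

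Given the decomposition it suffices to bound $\frac{1}{n}\sum_k\psi_{\gamma,k,n}(x)$, and two facts make this immediate. First, each $\psi_{\gamma,k,n}(x)$ has conditional mean zero given $(X,\sigma_n)$, because every summand carries the centering $^\dagger$ while the weights $1\{\cdot=x\}/\hat{p}(x)$ are $(X,\sigma_n)$-measurable. Second, by Lemma \ref{bounds_many} each $\psi_{\gamma,k,n}(x)$ is bounded in norm by $\sqrt{7}/\min_x\hat{p}(x)$, and it involves only the outgoing links $\{G_{k\ell,n}:\ell\}$ of agent $k$; by Assumption \ref{Ass1}(2) the rows $\delta_{k,n}(X,\varepsilon_k)$ are mutually independent, and together with the link-wise conditional-independence structure of the misclassification in Assumption \ref{Ass3} this makes $\{\psi_{\gamma,k,n}(x)\}_k$ conditionally independent across $k$ given $(X,\sigma_n)$. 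The cross terms in the second moment therefore vanish, so $E\big[\|\tfrac{1}{n}\sum_k\psi_{\gamma,k,n}(x)\|^2\mid X,\sigma_n\big]=\frac{1}{n^2}\sum_k E[\|\psi_{\gamma,k,n}(x)\|^2\mid X,\sigma_n]\le 7/(n(\min_x\hat{p}(x))^2)=O(n^{-1})$, where $\min_x\hat{p}(x)$ stays bounded away from zero by Assumption \ref{Ass1}(4); the conditional Markov inequality then gives $\hat\gamma_{ij}-\gamma_{ij,n}=O_p(n^{-1/2})$ given $(X,\sigma_n)$.

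The algebra of recombining the summands is routine, so the two places needing genuine care are the following. First, invoking equilibrium symmetry (Lemma \ref{gammastar_symmetric}) so that the centering terms collapse to $\gamma_{ij,n}$, and correctly tracking the $\sum_k$-versus-$\sum_{k\ne i}$ boundary term as an $O(1/n)$ remainder. Second, and this is the main obstacle, justifying from the primitive assumptions that $\{\psi_{\gamma,k,n}(x)\}_k$ are conditionally independent across $k$: this is exactly what converts the crude uniform bound from Lemma \ref{bounds_many} into an $O(n^{-1})$ conditional variance and hence the $n^{-1/2}$ rate, and it hinges on each influence term depending on a single, independently generated row of the adjacency matrix.
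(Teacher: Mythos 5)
Your proposal is correct and follows essentially the same route as the paper's proof: the exact identity is obtained by interchanging the sum over $k$ with the sum over pairs and using equilibrium symmetry (the paper first extends Lemma \ref{gammastar_symmetric} to the \emph{observed} statistics via Assumption \ref{Ass3}, which is the one small step you fold into your symmetry appeal) so that the centering terms collapse to $\gamma_{ij,n}$, and the rate follows from conditional mean zero plus conditional independence of $\psi_{\gamma,k,n}(X_{ij})$ across $k$ together with the bound of Lemma \ref{bounds_many}. The only difference is the final step, where the paper invokes Lyapunov's central limit theorem while you use a conditional second-moment/Chebyshev bound, which is more elementary and fully sufficient for the $O_p(n^{-1/2})$ claim; you are also somewhat more careful than the paper on two points it silently glosses over, namely the $\sum_{k}$ versus $\sum_{k\ne i}$ boundary term and the fact that the observed rows involve misclassification noise beyond $\varepsilon_k$, so the row-wise independence needs Assumption \ref{Ass3} and not just Assumption \ref{Ass1}.
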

 \begin{proof}
First, from Lemma \ref{gammastar_symmetric} and Assumption \ref{Ass3}, we can derive  
\begin{equation}\label{gamma_symmetry}
1\{X_{i_{1},j_1}=X_{ij}\}
\left(
\begin{array}{c}
E[G_{j_1i_{1},n}\mid X, \sigma_{n}]-E[G_{ji,n}\mid X, \sigma_{n}]\\
\frac{1}{n}\sum_{k} (E[G_{kj_1,n}\mid X, \sigma_{n}] - E[G_{kj,n}\mid X, \sigma_{n}])\\
\frac{1}{n}\sum_{k} (E[G_{ki_{1},n} G_{kj_1,n}\mid X, \sigma_{n}] - E[G_{ki,n} G_{kj,n}\mid X, \sigma_{n}])\\
\frac{1}{n}\sum_{k} (E[(G_{ki_{1},n}+G_{kj_1,n})\mid X, \sigma_{n}] - E[(G_{ki,n}+G_{kj,n})\mid X, \sigma_{n}])
\end{array}
\right)
=0.
\end{equation} 

Using Eq. \eqref{gamma_symmetry}, we have  
\begin{eqnarray*}
&&
\hat\gamma_{ij}-\gamma_{ij,n} 
\\
&=&
\frac{1}{n^2}\sum_{i_{1},j_1}\frac{1\{X_{i_{1},j_1}=X_{ij}\}}{\frac{1}{n^2}\sum_{i_{1},j_1}1\{X_{i_{1},j_1}=X_{ij}\}}
\left(
\begin{array}{c}
G_{j_1i_{1},n}-E[G_{ji,n}\mid X, \sigma_{n}]\\
\frac{1}{n}\sum_{k} (G_{kj_1,n} - E[G_{kj,n}\mid X, \sigma_{n}])\\
\frac{1}{n}\sum_{k} (G_{ki_{1},n} G_{kj_1,n} - E[G_{ki,n} G_{kj,n}\mid X, \sigma_{n}])\\
\frac{1}{n}\sum_{k} ((G_{ki_{1},n}+G_{kj_1,n}) - E[(G_{ki,n}+G_{kj,n})\mid X, \sigma_{n}])
\end{array}
\right)
\\
&=&
\frac{1}{n^2}\sum_{i_{1},j_1}\frac{1\{X_{i_{1},j_1}=X_{ij}\}}{\frac{1}{n^2}\sum_{i_{1},j_1}1\{X_{i_{1},j_1}=X_{ij}\}}
\left(
\begin{array}{c}
G_{j_1i_{1},n}^\dagger\\
\frac{1}{n}\sum_{k} G_{kj_1,n}^\dagger \\
\frac{1}{n}\sum_{k} (G_{ki_{1},n} G_{kj_1,n})^\dagger\\
\frac{1}{n}\sum_{k} ((G_{ki_{1},n}+G_{kj_1,n}))^\dagger
\end{array}
\right)
\\
&=&
\frac{1}{n}\sum_{k} \psi_{\gamma,k, n}(X_{ij}).
\end{eqnarray*}

Since $X_{ij}$ has a finite support, the uniform convergence over $i,j$ follows from the point convergence for every $i,j$.
By Lyapunov's central limit theorem, it suffices to show 
that $E[\psi_{\gamma,k, n}(X_{ij})\mid X,\sigma_n]=0$ and that $\psi_{\gamma,k, n}(X_{ij})$ is independent across $k$ given $(X,\sigma_n)$. 
The equality $E[\psi_{\gamma,k, n}(X_{ij})\mid X,\sigma_n]=0$ follows from 
\begin{align*}
E[\psi_{\gamma,k, n}(X_{ij})\mid X,\sigma_n]
&=
\frac{1}{n^2} \sum_{i_{1},j_1} 
\left(\frac{1\{X_{i_{1},j_1}=X_{ij}\}}{\hat{p}(X_{ij})} \right) 
\left(  
\begin{array}{c}
0\\
E\left[ G_{kj_1,n}^{\dagger}  \mid X,\sigma_n \right]\\
E\left[(G_{ki_{1},n}G_{kj_1,n})^{\dagger} \mid X,\sigma_n \right]\\
E\left[(G_{ki_{1},n}+G_{kj_1,n})^{\dagger} \mid X,\sigma_n \right]
\end{array}
\right)
\\
&+
\frac{1}{n} \sum_{i_{1}} 
\left(\frac{1\{X_{i_{1},k}=X_{ij}\}}{\hat{p}(X_{ij})} \right) 
\left(  
\begin{array}{c}
E\left[ G_{ki_{1},n}^{\dagger} \mid X,\sigma_n \right]\\
0\\
0\\
0
\end{array}
\right)
\\
& =0
\end{align*}
since $E\left[ RV^{\dagger} \mid X,\sigma_n \right] =0$ by definition of $RV^\dagger$. The conditional independence of $\psi_{\gamma,k, n}(X_{ij})$ across $k$ is shown as follows.
Note that $\psi_{\gamma,k, n}(X_{ij})$ does not depend on $G_{-k,n}$, so it is a function of $\varepsilon_k$, $(X,\sigma_n)$. Therefore, it follows from Assumptions \ref{Ass1} that $\psi_{\gamma,k, n}(X_{ij})$ is independent across $k$ given $(X,\sigma_n)$.
\end{proof}

\begin{lemma}\label{diff_psi}
$\max_{i}\|\hat\psi_i(\theta_0)-\tilde\psi_i(\theta_0)\|=o_p(1)$ given $(X,\sigma_n)$.
\end{lemma}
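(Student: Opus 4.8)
The plan is to exploit that $\hat\psi_i(\theta_0)$ and $\tilde\psi_i(\theta_0)$ share an identical first term, $\frac{1}{n}\sum_{j\ne i}G_{ij,n}\mathbf{1}_{ij}$, and differ only in the second term through the argument of $\phi$: $\hat\psi_i$ evaluates $\phi$ at $\hat u_{lj}(\theta_0)$ (built from the estimated $\hat\gamma_{lj}$), while $\tilde\psi_i$ evaluates $\phi$ at $u_{lj}(\theta_0)$ (built from the true $\gamma_{lj,n}$); the remaining factors $\beta_1'C(\rho_0,\rho_1)$ and $\hat\psi_{\gamma,i,n}(X_{lj})$ coincide. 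Subtracting, the first terms cancel and one is left with
\begin{equation*}
\hat\psi_i(\theta_0)-\tilde\psi_i(\theta_0)
=
-(1-\rho_0-\rho_1)\frac{1}{n^2}\sum_{l,j}\left(\phi(\hat u_{lj}(\theta_0))-\phi(u_{lj}(\theta_0))\right)\beta_1'C(\rho_0,\rho_1)\hat\psi_{\gamma,i,n}(X_{lj})\mathbf{1}_{lj}.
\end{equation*}
Taking norms and using $\|\mathbf{1}_{lj}\|=1$, it suffices to bound the scalar increments $|\phi(\hat u_{lj}(\theta_0))-\phi(u_{lj}(\theta_0))|$ together with the factor $\|\hat\psi_{\gamma,i,n}(X_{lj})\|$.

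For the first factor I would use that the standard normal density is globally Lipschitz, since $\phi'(x)=-x\phi(x)$ satisfies $|\phi'(x)|\le\phi(1)$; hence $|\phi(\hat u_{lj}(\theta_0))-\phi(u_{lj}(\theta_0))|\le\phi(1)\,|\hat u_{lj}(\theta_0)-u_{lj}(\theta_0)|$. By the definitions of $u_{lj}$ and $\hat u_{lj}$, the increment is linear in the estimation error of $\gamma$,
\begin{equation*}
\hat u_{lj}(\theta_0)-u_{lj}(\theta_0)=\beta_1'C(\rho_0,\rho_1)(\hat\gamma_{lj}-\gamma_{lj,n}),
\end{equation*}
so $|\hat u_{lj}(\theta_0)-u_{lj}(\theta_0)|\le\|\beta_1'C(\rho_0,\rho_1)\|\,\|\hat\gamma_{lj}-\gamma_{lj,n}\|$ by Cauchy--Schwarz. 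Because $X_{ij}$ has finite support, $\hat\gamma_{lj}=\hat\gamma(X_{lj})$ and $\gamma_{lj,n}$ take only finitely many distinct values, so Lemma \ref{gamma_influe} upgrades its pointwise rate to $\max_{l,j}\|\hat\gamma_{lj}-\gamma_{lj,n}\|=O_p(n^{-1/2})$ conditional on $X$ and $\sigma_n$.

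Assembling these pieces and bounding $\|\hat\psi_{\gamma,i,n}(X_{lj})\|\le\sqrt{7}/\min_x\hat p(x)$ by Lemma \ref{bounds_many}, I would arrive at a bound of the form
\begin{equation*}
\max_i\|\hat\psi_i(\theta_0)-\tilde\psi_i(\theta_0)\|
\le
(1-\rho_0-\rho_1)\phi(1)\|\beta_1'C(\rho_0,\rho_1)\|^2\frac{\sqrt{7}}{\min_x\hat p(x)}\max_{l,j}\|\hat\gamma_{lj}-\gamma_{lj,n}\|,
\end{equation*}
whose right-hand side is free of $i$. Since $\liminf\min_x\hat p(x)>0$ is assumed, the prefactor is $O_p(1)$, while the last factor is $O_p(n^{-1/2})$, so the whole expression is $O_p(n^{-1/2})=o_p(1)$. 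The only point requiring care is that the claim is \emph{uniform} in $i$, which is why I keep the bounding constant independent of $i$; this is possible precisely because both the $\gamma$-estimation error (through the finite support of $X$) and the bound on $\|\hat\psi_{\gamma,i,n}(\cdot)\|$ (through Lemma \ref{bounds_many}) are themselves uniform, so the maximum over $i$ costs nothing.
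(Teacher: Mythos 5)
Your proof is correct and follows essentially the same route as the paper: the identical cancellation of the first terms, the same reduction to bounding $|\phi(\hat u_{lj}(\theta_0))-\phi(u_{lj}(\theta_0))|$ times $\|\hat\psi_{\gamma,i,n}(X_{lj})\|$, the same appeal to Lemma \ref{bounds_many} for the latter, and the same use of Lemma \ref{gamma_influe} (made uniform over $(l,j)$ via the finite support of $X_{ij}$, a step you make explicit where the paper leaves it implicit) for the rate. The only cosmetic difference is that you invoke the global Lipschitz constant $\phi(1)$ of the normal density, whereas the paper uses the mean-value bound $|\phi(u_1)-\phi(u_2)|\le\phi(0)\max\{|u_1|,|u_2|\}\,|u_1-u_2|$ and then separately bounds $|u_{lj}(\theta_0)|$ and $|\hat u_{lj}(\theta_0)|$; your version is slightly cleaner and equally valid.
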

\begin{proof}
Note that 
\begin{eqnarray*}
\hat\psi_i(\theta_0)-\tilde\psi_i(\theta_0)
&=&
-(1-\rho_0-\rho_1)\frac{1}{n^2}\sum_{l,j}\left(\phi(\hat{u}_{lj}(\theta_0))-\phi(u_{lj}(\theta_0)) \right)\beta_1'C(\rho_0,\rho_1) \hat\psi_{\gamma,i, n}(X_{lj}) \zeta_{lj}.
\end{eqnarray*}
Then 
\begin{eqnarray*}
\|\hat\psi_i(\theta_0)-\tilde\psi_i(\theta_0)\|
&\leq&
\|\beta_1'C(\rho_0,\rho_1)\|\max_{l,j}\left|\phi(\hat{u}_{lj}(\theta_0))-\phi(u_{lj}(\theta_0))\right| \|\hat\psi_{\gamma,i, n}(X_{lj})\|\\
&\leq&
\phi(0)\|\beta_1'C(\rho_0,\rho_1)\|\max_{l,j}\max\{|\hat{u}_{lj}(\theta_0)|,|u_{lj}(\theta_0)|\}|\hat{u}_{lj}(\theta_0)-u_{lj}(\theta_0)|\|\hat\psi_{\gamma,i, n}(X_{lj})\|,
\end{eqnarray*}
where the last inequality follows from the mean value expansion of the normal pdf $\phi$:  
$|\phi(u_1)-\phi(u_2)|\leq\max_{u_1\leq u\leq u_2}|\phi'(u)||u_1-u_2|\leq\phi(0)\max\{|u_1|,|u_2|\}|u_1-u_2|$. 
Since 
\begin{eqnarray*}
|u_{lj}(\theta_0)|
&\leq&
(\|c(\rho_0,\rho_1)\|+\|C(\rho_0,\rho_1)\|\|\gamma_{lj,n})\|\|\beta_1\|+\|X_{lj}\|\|\beta_2\|\\
&\leq&
(\|c(\rho_0,\rho_1)\|+4\|C(\rho_0,\rho_1)\|)\|\beta_1\|+\max_{x}\|x\|\|\beta_2\|\\
|\hat{u}_{lj}(\theta_0)|
&\leq&
(\|c(\rho_0,\rho_1)\|+4\|C(\rho_0,\rho_1)\|)\|\beta_1\|+\max_{x}\|x\|\|\beta_2\|\\
|\hat{u}_{lj}(\theta_0)-u_{lj}(\theta_0)|
&=&
|C(\rho_0,\rho_1)(\hat\gamma_{lj}-\gamma_{lj,n})'\beta_1|\\
&\leq&
\|C(\rho_0,\rho_1)\|\|\beta_1\|\max_{lj}\|\hat\gamma_{lj}-\gamma_{lj,n}\|, 
\end{eqnarray*}
it follows that 
$$
\max_{i}\|\hat\psi_i(\theta_0)-\tilde\psi_i(\theta_0)\|=O_p(\max_{lj}\|\hat\gamma_{lj}-\gamma_{lj,n})\|)=o_p(1).
$$
\end{proof}

\begin{lemma}\label{psi_ind}
$\psi_i(\theta_0)$ is independent across $i$ given $(X,\sigma_n)$. 
\end{lemma}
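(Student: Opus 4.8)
The plan is to reduce the claim to the mutual independence of the private shocks $\{\varepsilon_i\}$ granted by Assumption \ref{Ass1}(2), by showing that each $\psi_i(\theta_0)$ is a measurable function of agent $i$'s own information alone, with $X$ and $\sigma_n$ held fixed throughout (recall that every statement in this proof is conditional on $X,\sigma_n$). The structural fact I would exploit is that, after conditioning on $X$ and $\sigma_n$, the only randomness entering $\psi_i(\theta_0)$ is carried by agent $i$'s outgoing-link row $\{G_{ij,n}:j\ne i\}$, which by Lemma \ref{lemma1} and Assumption \ref{Ass2} is generated from $\varepsilon_i$ (through $G_{ij,n}^\ast=\mathbf{1}\{(Z_{ij,n}^\ast)'\beta_0+\varepsilon_{ij}\geq 0\}$ together with the row-$i$ misclassification), whereas $\gamma_{ij,n}$, $u_{ij}(\theta_0)$, $c(\rho_0,\rho_1)$, $C(\rho_0,\rho_1)$ and $\mathbf{1}_{ij}$ are all deterministic given $(X,\sigma_n,\theta_0)$.

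First I would inspect the two summands in the definition of $\psi_i(\theta_0)$. The first summand $\frac1n\sum_{j\ne i}(G_{ij,n}-\rho_0-(1-\rho_0-\rho_1)\Phi(u_{ij}(\theta_0)))\mathbf{1}_{ij}$ involves only the observed outgoing links $G_{ij,n}$ and quantities that are deterministic given $(X,\sigma_n,\theta_0)$. In the second summand the only stochastic factor is $\psi_{\gamma,i,n}(\cdot)$, and here I would invoke the observation already recorded in the proof of Lemma \ref{gamma_influe}: every $G$-term appearing in $\psi_{\gamma,i,n}(\cdot)$ carries $i$ as its first (sending) index, so $\psi_{\gamma,i,n}(\cdot)$ does not depend on $G_{-i,n}$ and is a function of $(\varepsilon_i,X,\sigma_n)$ only. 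Combining the two summands shows that $\psi_i(\theta_0)$ is itself a function of $(\varepsilon_i,X,\sigma_n)$.

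Second, I would close with the independence argument. By Assumption \ref{Ass1}(2) the profiles $\{\varepsilon_i:i\in\mathcal{N}_n\}$ are mutually independent, and $X$ (hence $\sigma_n$) is being conditioned on; since $\psi_i(\theta_0)$ is a measurable function of $\varepsilon_i$ alone given the fixed $(X,\sigma_n)$, and measurable functions of mutually independent random elements are mutually independent, the $\psi_i(\theta_0)$ are independent across $i$ given $X$ and $\sigma_n$. This is precisely the mechanism used for $\psi_{\gamma,k,n}$ at the end of the proof of Lemma \ref{gamma_influe}.

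I expect the only delicate point to be verifying that no cross-agent randomness leaks into $\psi_i(\theta_0)$ through the misclassification. The utility-shock contribution is manifestly agent-$i$-specific, but the observed links $G_{ij,n}$ also carry misclassification noise, and Assumption \ref{Ass3}(i) as stated guarantees only within-row conditional independence of $G_{ij,n}$ and $G_{ij',n}$. The clean way to dispatch this is to treat the row-$i$ misclassification as part of agent $i$'s private information (consistently with the data-generating process and with how $G_{k\cdot,n}$ is treated as a function of $\varepsilon_k$ in Lemma \ref{gamma_influe}), so that the observed row $\{G_{ij,n}:j\}$ remains a function of agent-$i$-specific randomness that is independent across $i$ given $(X,\sigma_n)$; the remainder of the argument is then routine.
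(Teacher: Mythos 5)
Your proposal is correct and follows essentially the same route as the paper, whose entire proof is the single observation that $\psi_i(\theta_0)$ does not depend on $G_{-i,n}$ and is therefore a function of $(\varepsilon_i, X, \sigma_n)$, whence independence across $i$ follows from the mutual independence of the $\varepsilon_i$ in Assumption \ref{Ass1}. The delicate point you flag at the end --- that the observed row $\{G_{ij,n}: j \ne i\}$ carries misclassification noise which Assumption \ref{Ass3}(i) does not explicitly declare independent across rows --- is genuine but is passed over silently by the paper as well, and your resolution (folding row-$i$ misclassification into agent $i$'s private randomness) is exactly what the paper's one-line proof implicitly assumes.
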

\begin{proof}
$\psi_i(\theta_0)$ does not depend on $G_{-i,n}$, so it is a function of $(\varepsilon_i,X,\sigma_n)$.  By the independence of $\varepsilon_i$ across $i$, it implies the statement of this lemma. 
\end{proof}

\begin{lemma}
$\hat{m}_n(\theta_0)=\frac{1}{n}\sum_{i=1}^n\psi_i(\theta_0)+o_p(n^{-1/2})$ given $(X,\sigma_n)$.
\end{lemma}
\begin{proof}
Note that 
\begin{eqnarray*}
   &&
   \hat{m}_n(\theta_0)-\frac{1}{n}\sum_{i=1}^n\psi_i(\theta_0)
   \\
   &=&
   (1-\rho_0-\rho_1)\frac{1}{n^2}\sum_{i,j}\left(\Phi(\hat{u}_{ij}(\theta_0))-\Phi(u_{ij}(\theta_0))-\phi(u_{ij}(\theta_0))\beta_1'C(\rho_0,\rho_1)(\hat\gamma_{ij}-\gamma_{ij,n})\right)\zeta_{ij}   
\end{eqnarray*}
By the second-order Taylor expansion of the normal cdf $\Phi$, 
$$
\Phi(u_1)=\Phi(u_2)+\phi(u_2)(u_2-u_1)+R(u_1,u_2)
$$
where 
$$
|R_{ij}|\leq\frac{1}{2}\max_{u_1\leq u\leq u_2}\phi'(u)|u_1-u_2|^2\leq\frac{1}{2}\phi(0)\max\{|u_1|,|u_2|\}|u_1-u_2|^2.
$$
Since 
\begin{eqnarray*}
\max\{|u_{lj}(\theta_0)|,|\hat{u}_{lj}(\theta_0)|\}
&\leq&
(\|c(\rho_0,\rho_1)\|+4\|C(\rho_0,\rho_1)\|)\|\beta_1\|+\max_{x}\|x\|\|\beta_2\|\\
|\hat{u}_{lj}(\theta_0)-u_{lj}(\theta_0)|
&\leq&
\|C(\rho_0,\rho_1)\|\|\beta_1\|\max_{lj}\|\hat\gamma_{lj}-\gamma_{lj,n}\|, 
\end{eqnarray*}
it follows that 
$$
\|\hat{m}_n(\theta_0)-\frac{1}{n}\sum_{i=1}^n\psi_i(\theta_0)\|
=O_p(\max_{lj}\|\hat\gamma_{lj}-\gamma_{lj,n})\|^2)
=O_p(n^{-1}).
$$
\end{proof}

\begin{lemma}\label{clt_m}
Conditional on $(X,\sigma_n)$, 
$$
\hat{m}_n(\theta_0)=o_P(1)
$$
and 
$$
Var(\psi_i(\theta_0)\mid X,\sigma_n)^{-1/2}\sqrt{n}\hat{m}_n(\theta_0)\rightarrow_dN(0,I).
$$
\end{lemma}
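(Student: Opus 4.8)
The plan is to combine the linear expansion from the preceding lemma, namely $\hat m_n(\theta_0)=\frac1n\sum_{i=1}^n\psi_i(\theta_0)+o_p(n^{-1/2})$ conditional on $X,\sigma_n$, with a triangular-array central limit theorem for the independent summands $\psi_i(\theta_0)$. The two displayed claims then follow, the first being an immediate consequence of the second.

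First I would verify that each $\psi_i(\theta_0)$ is conditionally mean zero, i.e. $E[\psi_i(\theta_0)\mid X,\sigma_n]=0$. The correction term in $\psi_i(\theta_0)$ is mean zero because $E[\psi_{\gamma,k,n}(x)\mid X,\sigma_n]=0$, established in Lemma \ref{gamma_influe}. For the leading term I would show $E[G_{ij,n}\mid X,\sigma_n]=\rho_0+(1-\rho_0-\rho_1)\Phi(u_{ij}(\theta_0))$: by Lemma \ref{Ass3_lemma} this conditional mean equals $\rho_0+(1-\rho_0-\rho_1)Pr(G_{ij,n}^\ast=1\mid X,\sigma_n)$, by Lemma \ref{lemma1} together with the normality and independence in Assumption \ref{Ass1} the latter probability is $\Phi((\gamma_{ij,n}^\ast)'\beta_1+X_{ij}'\beta_2)$, and by Lemma \ref{belief_lemma} this argument equals $u_{ij}(\theta_0)$. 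Hence the leading term is conditionally centered as well, and $E[\psi_i(\theta_0)\mid X,\sigma_n]=0$.

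Given centering, the first claim $\hat m_n(\theta_0)=o_P(1)$ is immediate: the $\psi_i(\theta_0)$ are independent across $i$ given $X,\sigma_n$ (Lemma \ref{psi_ind}) and uniformly bounded (Lemma \ref{bounds_many}), so $Var(\frac1n\sum_i\psi_i(\theta_0)\mid X,\sigma_n)=\frac1{n^2}\sum_i Var(\psi_i(\theta_0)\mid X,\sigma_n)=O(n^{-1})$, whence $\frac1n\sum_i\psi_i(\theta_0)=O_p(n^{-1/2})$ by Chebyshev; the approximation then transfers this conclusion to $\hat m_n(\theta_0)$.

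For the second claim, writing $\Omega_n=\frac1n\sum_i Var(\psi_i(\theta_0)\mid X,\sigma_n)$ for the conditional variance of $\frac1{\sqrt n}\sum_i\psi_i(\theta_0)$ (this is the object denoted $Var(\psi_i(\theta_0)\mid X,\sigma_n)$ in the statement, and is precisely what $\hat S(\theta_0)$ estimates), I would apply the multivariate Lindeberg--Feller theorem to the independent, centered array $W_{n,i}=\Omega_n^{-1/2}\psi_i(\theta_0)/\sqrt n$, which satisfies $\sum_i Var(W_{n,i}\mid X,\sigma_n)=I$. The key step, and the only real obstacle, is verifying Lindeberg's condition; but it becomes trivial here because the hypothesis that the minimum eigenvalue of $\Omega_n$ is bounded away from zero makes $\|\Omega_n^{-1/2}\|$ uniformly bounded, and the uniform bound on $\|\psi_i(\theta_0)\|$ from Lemma \ref{bounds_many} (finite under $\liminf\min_x\hat p(x)>0$) then yields $\max_i\|W_{n,i}\|\le Cn^{-1/2}\to 0$, so the Lindeberg truncation set is eventually empty. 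This delivers $\Omega_n^{-1/2}\frac1{\sqrt n}\sum_i\psi_i(\theta_0)\rightarrow_d N(0,I)$. Finally, since $\sqrt n\,\hat m_n(\theta_0)=\frac1{\sqrt n}\sum_i\psi_i(\theta_0)+o_p(1)$ and $\|\Omega_n^{-1/2}\|$ is bounded, the term $\Omega_n^{-1/2}o_p(1)$ is $o_p(1)$, and Slutsky's lemma gives $\Omega_n^{-1/2}\sqrt n\,\hat m_n(\theta_0)\rightarrow_d N(0,I)$, which is the second display.
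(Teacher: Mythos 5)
Your proof is correct and follows essentially the same route as the paper's: the linearization $\hat m_n(\theta_0)=\frac1n\sum_i\psi_i(\theta_0)+o_p(n^{-1/2})$, conditional independence (Lemma \ref{psi_ind}) and uniform boundedness (Lemma \ref{bounds_many}) of the $\psi_i(\theta_0)$, a triangular-array CLT (your Lindeberg--Feller verification via boundedness is interchangeable with the paper's appeal to Lyapunov), and the reduction to showing $E[\psi_i(\theta_0)\mid X,\sigma_n]=0$. You additionally spell out steps the paper leaves implicit --- the derivation of $E[G_{ij,n}\mid X,\sigma_n]=\rho_0+(1-\rho_0-\rho_1)\Phi(u_{ij}(\theta_0))$ through Lemmas \ref{Ass3_lemma}, \ref{lemma1}, and \ref{belief_lemma}, and the final Slutsky step --- but these are elaborations, not a different argument.
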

\begin{proof}
By Lemmas \ref{bounds_many} and \ref{psi_ind} and Lyapunov's central limit theorem, it suffices to show $E[\psi_i(\theta_0)\mid X,\sigma_n]=0$.
We can derive 
\begin{eqnarray*}
E[\psi_i(\theta_0)\mid X,\sigma_n]
&=&
\frac{1}{n}\sum_{j\ne i}\left(E[G_{ij,n}\mid X,\sigma_n]-\rho_0-(1-\rho_0-\rho_1)\Phi(u_{ij}(\theta_0))\right)\zeta_{ij}
\\&&
-(1-\rho_0-\rho_1)\frac{1}{n^2}\sum_{l,j}\left(
\phi(u_{lj}(\theta_0))\beta_1'C(\rho_0,\rho_1)
 E[\psi_{\gamma,i, n}(X_{lj})\mid X,\sigma_n]
 \right)\zeta_{lj}\\ 
&=&
0, 
\end{eqnarray*}
because 
\begin{eqnarray*}
E[G_{ij,n}\mid X,\sigma_n]
&=&
\rho_0+(1-\rho_0-\rho_1)\Phi(u_{ij}(\theta_0))\\
E[\psi_{\gamma,i, n}(X_{lj})\mid X,\sigma_n]
&=&
\frac{1}{n^2} \sum_{i_{1},j_1} 
\left(\frac{1\{X_{i_{1},j_1}=X_{ij}\}}{\hat{p}(X_{ij})} \right) 
\left(  
\begin{array}{c}
0\\
E[G_{kj_1,n}^{\dagger}\mid X,\sigma_n]\\
E[(G_{ki_{1},n}G_{kj_1,n})^{\dagger}\mid X,\sigma_n]\\
E[(G_{ki_{1},n}+G_{kj_1,n})^{\dagger}\mid X,\sigma_n]
\end{array}
\right)
\\
&&
+
\frac{1}{n} \sum_{i_{1}} 
\left(\frac{1\{X_{i_{1},k}=X_{ij}\}}{\hat{p}(X_{ij})} \right) 
\left(  
\begin{array}{c}
E[G_{ki_{1},n}^{\dagger}\mid X,\sigma_n]\\
0\\
0\\
0
\end{array}
\right)\\
&=&
0.
\end{eqnarray*}
Note that $E[RV^{\dagger}\mid X,\sigma_n]=0$ by the definition of $RV^\dagger$.
\end{proof}

\begin{lemma}\label{var_conv}
$\hat{S}(\theta_0)=Var(\psi_i(\theta_0)\mid X,\sigma_n)+o_p(1)$ given $(X,\sigma_n)$. 
\end{lemma}
\begin{proof}
First, we are going to show that $\hat{S}(\theta_0)=\frac{1}{n}\sum_{i=1}^n\tilde\psi_i(\theta_0)\tilde\psi_i(\theta_0)'-
\left(\frac{1}{n}\sum_{i=1}^n\tilde\psi_i(\theta_0)\right)
\left(\frac{1}{n}\sum_{i=1}^n\tilde\psi_i(\theta_0)\right)'
+o_p(1)$.
Since 
\begin{eqnarray*}
&&
\hat{S}(\theta_0)-\frac{1}{n}\sum_{i=1}^n\tilde\psi_i(\theta_0)\tilde\psi_i(\theta_0)'+\left(\frac{1}{n}\sum_{i=1}^n\tilde\psi_i(\theta_0)\right)\left(\frac{1}{n}\sum_{i=1}^n\tilde\psi_i(\theta_0)\right)'
\\
&=&
\frac{1}{n}\sum_{i=1}^n(\hat\psi_i(\theta_0)-\tilde\psi_i(\theta_0))(\hat\psi_i(\theta_0)-\tilde\psi_i(\theta_0))'
\\&&+\frac{1}{n}\sum_{i=1}^n\tilde\psi_i(\theta_0)(\hat\psi_i(\theta_0)-\tilde\psi_i(\theta_0))'
\\&&+\frac{1}{n}\sum_{i=1}^n(\hat\psi_i(\theta_0)-\tilde\psi_i(\theta_0))\tilde\psi_i(\theta_0)'
\\&&-\left(\frac{1}{n}\sum_{i=1}^n(\hat\psi_i(\theta_0)-\tilde\psi_i(\theta_0))\right)\left(\frac{1}{n}\sum_{i=1}^n(\hat\psi_i(\theta_0))\right)'
\\&&-\left(\frac{1}{n}\sum_{i=1}^n\tilde\psi_i(\theta_0)\right)\left(\frac{1}{n}\sum_{i=1}^n(\hat\psi_i(\theta_0)-\tilde\psi_i(\theta_0))\right)',
\end{eqnarray*}
it follows that 
\begin{eqnarray*}
&&
\left\|\hat{S}(\theta_0)-\frac{1}{n}\sum_{i=1}^n\tilde\psi_i(\theta_0)\tilde\psi_i(\theta_0)'+\left(\frac{1}{n}\sum_{i=1}^n\tilde\psi_i(\theta_0)\right)\left(\frac{1}{n}\sum_{i=1}^n\tilde\psi_i(\theta_0)\right)'
\right\|\\
&&\quad\leq
\max_{i}\|\hat\psi_i(\theta_0)-\tilde\psi_i(\theta_0)\|^2+3\max_{i}\|\hat\psi_i(\theta_0)-\tilde\psi_i(\theta_0)\|\max_{i}\|\tilde\psi_i(\theta_0)\|+\max_{i}\|\hat\psi_i(\theta_0)-\tilde\psi_i(\theta_0)\|\max_{i}\|\hat\psi_i(\theta_0)\|.
\end{eqnarray*}
Thus it suffices to show that $\max_{i}\|\hat\psi_i(\theta_0)-\tilde\psi_i(\theta_0)\|=o_p(1)$ and $\max_{i}\{\|\tilde\psi_i(\theta_0)\|,\|\hat\psi_i(\theta_0)\|\}=O_p(1)$.
They are shown in Lemmas \ref{bounds_many} and \ref{diff_psi}. 

Second, we are going to show that $\hat{S}(\theta_0)=Var(\tilde\psi_i(\theta_0)\mid X,\sigma_n)+o_p(1)$.
It suffices to show $E[\|\tilde\psi_i(\theta_0)\|^4\mid X,\sigma_n]<\infty$. 
By the triangle inequality, 
\begin{eqnarray*}
E[\|\tilde\psi_i(\theta_0)\|^4\mid X,\sigma_n]^{1/4}
&\leq&
\frac{1}{n}\sum_{j\ne i}E\left[\left\|G_{ij,n}\right\|^4\mid X,\sigma_n\right]^{1/4}\\&&+\frac{1}{n^2}\sum_{l,j}E\left[\left\|\phi(u_{ij}(\theta_0))\beta_1'C(\rho_0,\rho_1)\hat\psi_{\gamma,i, n}(X_{lj})\right\|^4\mid X,\sigma_n\right]^{1/4}\\
&\leq&
\frac{1}{n}\sum_{j\ne i}\left(E[\left\|G_{ij,n}\right\|^4\mid X,\sigma_n]^{1/4}\right)\\&&+\frac{1}{n^2}\sum_{l,j}\phi(u_{ij}(\theta_0))\beta_1'C(\rho_0,\rho_1)E\left[\left\|\hat\psi_{\gamma,i, n}(X_{lj})\right\|^4\mid X,\sigma_n\right]^{1/4}\\
&\leq&
1+\frac{1}{n^2}\sum_{l,j}\phi(u_{ij}(\theta_0))\beta_1'C(\rho_0,\rho_1)E\left[\left\|\hat\psi_{\gamma,i, n}(X_{lj})\right\|^4\mid X,\sigma_n\right]^{1/4}\\
&<&
\infty,
\end{eqnarray*}
where the last inequality follows from Lemma \ref{bounds_many}. 

Third, we are going to show that $Var(\tilde\psi_i(\theta_0)\mid X,\sigma_n)=Var(\psi_i(\theta_0)\mid X,\sigma_n)$.
Note that $\tilde\psi_i(\theta_0)-\psi_i(\theta_0)$ is a function of $(X,\sigma_n)$, so the conditional variances are the same. 
\end{proof}

\subsection{Proof of Theorem \ref{theorem_inference_2}}
As in the previous section, all the statements in this appendix are conditional on $(X,\sigma_n)$. 
Theorem \ref{theorem_inference_2} follows from Lemma \ref{lemma:clt_hatbeta}. 

\begin{lemma}\label{lemma:beta_identi_true_rho}
$\beta$ is the unique maximizer of $E\left[\boldsymbol{Q}_n(b)\mid X,\sigma_n\right]$, where 
$$
\boldsymbol{Q}_n(b)=\frac{1}{n^2}\sum_{i,j}\log\left(\Psi(b,\rho_0,\rho_1,X_{ij},\gamma_{ij,n})^{G_{ij,n}}(1-\Psi(b,\rho_0,\rho_1,X_{ij},\gamma_{ij,n}))^{1-G_{ij,n}}\right).
$$
\end{lemma}
\begin{proof}
Applying Jensen's inequality to the logarithm function, we have  
\begin{eqnarray*}
&&
E\left[\boldsymbol{Q}_n(b)\mid X,\sigma_n\right]-E\left[\boldsymbol{Q}_n(\beta)\mid X,\sigma_n\right]
\\
&=&
\frac{1}{n^2}\sum_{i,j}\left(\Psi(\theta_0,X_{ij},\gamma_{ij,n})\log \frac{\Psi(b,\rho_0,\rho_1,X_{ij},\gamma_{ij,n})}{\Psi(\theta_0,X_{ij},\gamma_{ij,n})}+(1-\Psi(\theta_0,X_{ij},\gamma_{ij,n}))\log\frac{1-\Psi(b,\rho_0,\rho_1,X_{ij},\gamma_{ij,n})}{1-\Psi(\theta_0,X_{ij},\gamma_{ij,n})}\right)\\
&\leq&
\log\left(\frac{1}{n^2}\sum_{i,j}\left(\Psi(\theta_0,X_{ij},\gamma_{ij,n}) \frac{\Psi(b,\rho_0,\rho_1,X_{ij},\gamma_{ij,n})}{\Psi(\theta_0,X_{ij},\gamma_{ij,n})}+(1-\Psi(\theta_0,X_{ij},\gamma_{ij,n}))\frac{1-\Psi(b,\rho_0,\rho_1,X_{ij},\gamma_{ij,n})}{1-\Psi(\theta_0,X_{ij},\gamma_{ij,n})}\right)\right)\\
&=&
0.
\end{eqnarray*}
It suffices to show that the equality holds only when $b=\beta$.  
By Jensen's inequality, the equality holds if and only if  
\begin{equation}\label{eq:proof_beta_identi_true_rho}
\frac{\Psi(b,\rho_0,\rho_1,X_{ij},\gamma_{ij,n})}{\Psi(\theta_0,X_{ij},\gamma_{ij,n})}=1\mbox{ for every }i,j.
\end{equation}
Eq. (\ref{eq:proof_beta_identi_true_rho}) implies $((\gamma_{ij,n}^\ast)',X_{ij}')\beta=((\gamma_{ij,n}^\ast)',X_{ij}')b\mbox{ for every }i,j$.
Since $\{((\gamma_{ij,n}^\ast)',X_{ij}')':i,j\}$ is not contained in any proper linear subspace of $\mathbb{R}^{d+3}$, we have $\beta=b$.  
\end{proof}

\begin{lemma}\label{lemma:unfi_conv}
Conditional on $(X,\sigma_n)$, 
$$
\sup_{b\in\mathcal{B}}|\boldsymbol{Q}_n(b)-E\left[\boldsymbol{Q}_n(b)\mid X,\sigma_n\right]|=o_p(1)
$$
$$
\sup_{b\in\mathcal{B}}\|\frac{\partial^2}{\partial b\partial b'}\boldsymbol{Q}_n(b)-E\left[\frac{\partial^2}{\partial b\partial b'}\boldsymbol{Q}_n(b)\mid X,\sigma_n\right]\|=o_p(1).
$$
\end{lemma}
\begin{proof}
They follow from \citet[Proposition 1]{jenish2009central} as in the proof of \citet[][Theorem 2]{leung2015two}.
\end{proof}

\begin{lemma}\label{lemma:QMLE_consisitent}
$\hat\beta(\rho_0,\rho_1)\rightarrow_{a.s.}\beta$.
\end{lemma}
\begin{proof}
By Lemma \ref{lemma:beta_identi_true_rho} and \citet[Theorem 3.3]{white1988unified}, it suffices to show that  
$$\sup_{b\in\mathcal{B}}|\hat{\boldsymbol{Q}}_n(b,\rho_0,\rho_1)-E\left[\boldsymbol{Q}_n(b)\mid X,\sigma_n\right]|\rightarrow_p0.$$
By Lemma \ref{lemma:unfi_conv}, we need to show that $\sup_{b\in\mathcal{B}}|\hat{\boldsymbol{Q}}_n(b,\rho_0,\rho_1)-\boldsymbol{Q}_n(b)|\rightarrow_p0$.
Some calculations yield  
{
\small
\begin{eqnarray*}
&&
|\hat{\boldsymbol{Q}}_n(b,\rho_0,\rho_1)-\boldsymbol{Q}_n(b)|
\\
&=&
\left|\frac{1}{n^2}\sum_{i,j}\log\left(\left(\frac{\Psi(b,\rho_0,\rho_1,X_{ij},\hat\gamma(X_{ij}))}{\Psi(b,\rho_0,\rho_1,X_{ij},\gamma_{ij,n})}\right)^{G_{ij,n}}\left(\frac{1-\Psi(b,\rho_0,\rho_1,X_{ij},\hat\gamma(X_{ij}))}{1-\Psi(b,\rho_0,\rho_1,X_{ij},\gamma_{ij,n})}\right)^{1-G_{ij,n}}\right)\right|\\
&\leq&
\max_{i,j}\max\left\{\left|\log\left(\frac{\Psi(b,\rho_0,\rho_1,X_{ij},\hat\gamma(X_{ij}))}{\Psi(b,\rho_0,\rho_1,X_{ij},\gamma_{ij,n})}\right)\right|,\left|\log\left(\frac{1-\Psi(b,\rho_0,\rho_1,X_{ij},\hat\gamma(X_{ij}))}{1-\Psi(b,\rho_0,\rho_1,X_{ij},\gamma_{ij,n})}\right)\right|\right\}\\
&\leq&
\max_{i,j}
\frac{|\Psi(b,\rho_0,\rho_1,X_{ij},\hat\gamma(X_{ij}))-\Psi(b,\rho_0,\rho_1,X_{ij},\gamma_{ij,n})|}{\min\{\Psi(b,\rho_0,\rho_1,X_{ij},\hat\gamma(X_{ij})),\Psi(b,\rho_0,\rho_1,X_{ij},\gamma_{ij,n}),1-\Psi(b,\rho_0,\rho_1,X_{ij},\hat\gamma(X_{ij})),1-\Psi(b,\rho_0,\rho_1,X_{ij},\gamma_{ij,n})\}}\\
&\leq&
\max_{i,j}
\frac{|\Psi(b,\rho_0,\rho_1,X_{ij},\hat\gamma(X_{ij}))-\Psi(b,\rho_0,\rho_1,X_{ij},\gamma_{ij,n})|}{\min\{\Psi(b,\rho_0,\rho_1,X_{ij},\gamma_{ij,n}),1-\Psi(b,\rho_0,\rho_1,X_{ij},\gamma_{ij,n})\}-|\Psi(b,\rho_0,\rho_1,X_{ij},\hat\gamma(X_{ij}))-\Psi(b,\rho_0,\rho_1,X_{ij},\gamma_{ij,n})|}\\
&\leq&
\frac{\mathrm{term1}}{\mathrm{term2}-\mathrm{term1}}
\end{eqnarray*}
}
where the second inequality follows from $|\log(x)|\leq\max\{|x-1|,|x-1|/x\}$ for $x>0$ and the last equation uses 
\begin{eqnarray*}
\mathrm{term1}&=&\max_{i,j}|\Psi(b,\rho_0,\rho_1,X_{ij},\hat\gamma(X_{ij}))-\Psi(b,\rho_0,\rho_1,X_{ij},\gamma_{ij,n})|
\\
\mathrm{term2}&=&\min_{i,j}\min\{\Psi(b,\rho_0,\rho_1,X_{ij},\gamma_{ij,n}),1-\Psi(b,\rho_0,\rho_1,X_{ij},\gamma_{ij,n})\}.
\end{eqnarray*}
Since $\min_{i,j}\min\{\Psi(b,\rho_0,\rho_1,X_{ij},\gamma_{ij,n}),1-\Psi(b,\rho_0,\rho_1,X_{ij},\gamma_{ij,n})\}$ is bounded away from zero (because the support of $X_{ij}$ is finite), 
the uniform convergence of $\hat{\boldsymbol{Q}}_n(b,\rho_0,\rho_1)-\boldsymbol{Q}_n(b)$ follows from 
\begin{eqnarray*}
&&
\sup_{b\in\mathcal{B}}\max_{i,j}|\Psi(b,\rho_0,\rho_1,X_{ij},\hat\gamma(X_{ij}))-\Psi(b,\rho_0,\rho_1,X_{ij},\gamma_{ij,n})|
\\
&&\qquad=
(1-\rho_0-\rho_1)\sup_{b\in\mathcal{B}}\max_{i,j}\left|\Phi(\hat{u}_{ij}(b,\rho_0,\rho_1))-\Phi(u_{ij}(b,\rho_0,\rho_1))\right|\\
&&\qquad=
O_p\left(\|\max_{ij}\|\hat\gamma_{ij}-\gamma_{ij,n}\|\right).
\end{eqnarray*}
\end{proof}

\begin{lemma}\label{lemma:min_eig_hess}
The minimum eigenvalue of $\{\left.E\left[\frac{\partial^2}{\partial b\partial b'}\boldsymbol{Q}_n(b)\mid X,\sigma_n\right]\right|_{b=\beta}\}$ is bounded away from zero.
\end{lemma}
\begin{proof}
We have the following equalities: 
\begin{eqnarray*}
&&
\left.E\left[\frac{\partial^2}{\partial b\partial b'}\boldsymbol{Q}_n(b)\mid X,\sigma_n\right]\right|_{b=\beta}
\\
&=&
\frac{1}{n^2}\sum_{i,j}\frac{\left.\frac{\partial}{\partial b}\Psi(b,\rho_0,\rho_1,X_{ij},\gamma_{ij,n})\right|_{b=\beta}\left.\frac{\partial}{\partial b'}\Psi(b,\rho_0,\rho_1,X_{ij},\gamma_{ij,n})\right|_{b=\beta}}{\Psi(\theta_0,X_{ij},\gamma_{ij,n})(1-\Psi(\theta_0,X_{ij},\gamma_{ij,n}))}
\\
&=&
\frac{1}{n^2}\sum_{i,j}\frac{\left.\frac{\partial}{\partial b}\Psi(b,\rho_0,\rho_1,X_{ij},\gamma_{ij,n})\right|_{b=\beta}\left.\frac{\partial}{\partial b'}\Psi(b,\rho_0,\rho_1,X_{ij},\gamma_{ij,n})\right|_{b=\beta}}{\Psi(\theta_0,X_{ij},\gamma_{ij,n})(1-\Psi(\theta_0,X_{ij},\gamma_{ij,n}))}\\
&=&
\frac{(1-\rho_0-\rho_1)^2}{n^2}\sum_{i,j}\frac{
\phi((Z_{ij,n}^\ast)'\beta_0)^2
}{\Psi(\theta_0,X_{ij},\gamma_{ij,n})(1-\Psi(\theta_0,X_{ij},\gamma_{ij,n}))}
Z_{ij,n}^\ast(Z_{ij,n}^\ast)'.
\end{eqnarray*}
Note that the minimum eigenvalue of $\sum_{i,j}Z_{ij,n}^\ast(Z_{ij,n}^\ast)'$ is bounded away from zero. 
Since $$\frac{\phi((Z_{ij,n}^\ast)'\beta_0)^2}{\Psi(\theta_0,X_{ij},\gamma_{ij,n})(1-\Psi(\theta_0,X_{ij},\gamma_{ij,n}))}$$ is bounded from zero uniformly over $i,j,n$, the minimum eigenvalue of 
$$
\sum_{i,j}\frac{
\phi((Z_{ij,n}^\ast)'\beta_0)^2
}{\Psi(\theta_0,X_{ij},\gamma_{ij,n})(1-\Psi(\theta_0,X_{ij},\gamma_{ij,n}))}
Z_{ij,n}^\ast(Z_{ij,n}^\ast)'
$$
is bounded away from zero. 
\end{proof}

\begin{lemma}\label{lemma:B_unif_conv}
$\sup_{b\in\mathcal{B}}\left\|E\left[\frac{\partial^2}{\partial b\partial b'}\boldsymbol{Q}_n(b)\mid X,\sigma_n\right]-\frac{\partial^2}{\partial b\partial b'}\hat{\boldsymbol{Q}}_n(b,\rho_0,\rho_1)\right\|=o_p(1)$ given $(X,\sigma_n)$.
\end{lemma}
\begin{proof}
By Lemma \ref{lemma:unfi_conv}, we need to show that
$$
\sup_{b\in\mathcal{B}}\left\|\frac{\partial^2}{\partial b\partial b'}\hat{\boldsymbol{Q}}_n(b,\rho_0,\rho_1)-\frac{\partial^2}{\partial b\partial b'}\boldsymbol{Q}_n(b)\right\|=o_p(1),
$$
that is, 
$$
\sup_{b\in\mathcal{B}}\left\|\boldsymbol{u}'\left(\frac{\partial^2}{\partial b\partial b'}\hat{\boldsymbol{Q}}_n(b,\rho_0,\rho_1)-\frac{\partial^2}{\partial b\partial b'}\boldsymbol{Q}_n(b)\right)\right\|=o_p(1)
\mbox{ for every vector }\boldsymbol{u}.
$$
Since 
\begin{eqnarray*}
&& 
\boldsymbol{u}'\left(\frac{\partial^2}{\partial b\partial b'}\hat{\boldsymbol{Q}}_n(b,\rho_0,\rho_1)-\frac{\partial^2}{\partial b\partial b'}\boldsymbol{Q}_n(b)\right)
\\
&=&
\frac{1}{n^2}\sum_{i,j}G_{ij,n}\boldsymbol{u}'\left(\frac{\partial}{\partial b'}(\boldsymbol{v}_1(b,\rho_0,\rho_1,X_{ij},\hat\gamma(X_{ij}))-\boldsymbol{v}_1(b,\rho_0,\rho_1,X_{ij},\gamma_{ij,n}))\right)
\\&&-
\frac{1}{n^2}\sum_{i,j}\boldsymbol{u}'\left(\frac{\partial}{\partial b'}(\boldsymbol{v}_2(b,\rho_0,\rho_1,X_{ij},\hat\gamma(X_{ij}))-\boldsymbol{v}_2(b,\rho_0,\rho_1,X_{ij},\gamma_{ij,n}))\right)
\\
&=&
\frac{1}{n^2}\sum_{i,j}G_{ij,n}\frac{\partial}{\partial b'}(\boldsymbol{u}'\boldsymbol{v}_1(b,\rho_0,\rho_1,X_{ij},\hat\gamma(X_{ij}))-\boldsymbol{u}'\boldsymbol{v}_1(b,\rho_0,\rho_1,X_{ij},\gamma_{ij,n}))
\\&&-
\frac{1}{n^2}\sum_{i,j}\frac{\partial}{\partial b'}(\boldsymbol{u}'\boldsymbol{v}_2(b,\rho_0,\rho_1,X_{ij},\hat\gamma(X_{ij}))-\boldsymbol{u}'\boldsymbol{v}_2(b,\rho_0,\rho_1,X_{ij},\gamma_{ij,n})),
\end{eqnarray*}
we have  
\begin{eqnarray*}
&&   
\left\|\boldsymbol{u}'\left(\frac{\partial^2}{\partial b\partial b'}\hat{\boldsymbol{Q}}_n(b,\rho_0,\rho_1)-\frac{\partial^2}{\partial b\partial b'}\boldsymbol{Q}_n(b)\right)\right\|
\\
&\leq&
\frac{1}{n^2}\sum_{i,j}\left\|\frac{\partial}{\partial b'}(\boldsymbol{u}'\boldsymbol{v}_1(b,\rho_0,\rho_1,X_{ij},\hat\gamma(X_{ij}))-\boldsymbol{u}'\boldsymbol{v}_1(b,\rho_0,\rho_1,X_{ij},\gamma_{ij,n}))\right\|
\\&&+
\frac{1}{n^2}\sum_{i,j}\left\|\frac{\partial}{\partial b'}(\boldsymbol{u}'\boldsymbol{v}_2(b,\rho_0,\rho_1,X_{ij},\hat\gamma(X_{ij}))-\boldsymbol{u}'\boldsymbol{v}_2(b,\rho_0,\rho_1,X_{ij},\gamma_{ij,n}))\right\|
\\
&\leq&
\frac{1}{n^2}\sum_{i,j}\sup_{\check{\gamma}_{ij}}\left\|\frac{\partial^2}{\partial b\partial \check{\gamma}_{ij}'}\boldsymbol{u}'\boldsymbol{v}_1(b,\rho_0,\rho_1,X_{ij},\check{\gamma}_{ij})\right\|\|\hat\gamma(X_{ij})-\gamma_{ij,n}\|
\\&&+
\frac{1}{n^2}\sum_{i,j}\sup_{\check{\gamma}_{ij}}\left\|\frac{\partial^2}{\partial b\partial \check{\gamma}_{ij}'}\boldsymbol{u}'\boldsymbol{v}_2(b,\rho_0,\rho_1,X_{ij},\check{\gamma}_{ij})\right\|\|\hat\gamma(X_{ij})-\gamma_{ij,n}\|
\\
&\leq&
\sup_{i,j}\sup_{\check{\gamma}_{ij}}\left\|\frac{\partial^2}{\partial b\partial \check{\gamma}_{ij}'}\boldsymbol{u}'\boldsymbol{v}_1(b,\rho_0,\rho_1,X_{ij},\check{\gamma}_{ij})\right\|\sup_{i,j}\|\hat\gamma(X_{ij})-\gamma_{ij,n}\|
\\&&+
\sup_{i,j}\sup_{\check{\gamma}_{ij}}\left\|\frac{\partial^2}{\partial b\partial \check{\gamma}_{ij}'}\boldsymbol{u}'\boldsymbol{v}_2(b,\rho_0,\rho_1,X_{ij},\check{\gamma}_{ij})\right\|\sup_{i,j}\|\hat\gamma(X_{ij})-\gamma_{ij,n}\|.
\end{eqnarray*}
Since $\frac{\partial^2}{\partial b\partial \check{\gamma}_{ij}'}\boldsymbol{u}'\boldsymbol{v}_1$ and $\frac{\partial^2}{\partial b\partial \check{\gamma}_{ij}'}\boldsymbol{u}'\boldsymbol{v}_2$ have bounded supports, we have  
$$
\sup_{b\in\mathcal{B}}\left\|\boldsymbol{u}'\left(\frac{\partial^2}{\partial b\partial b'}\hat{\boldsymbol{Q}}_n(b,\rho_0,\rho_1)-\frac{\partial^2}{\partial b\partial b'}\boldsymbol{Q}_n(b)\right)\right\|=O_p\left(\sup_{i,j}\|\hat\gamma(X_{ij})-\gamma_{ij,n}\|\right)=o_p(1).
$$
\end{proof}

\begin{lemma}\label{lemma:QMLE_asynormal}
\begin{eqnarray*}
   \sqrt{n}E\left[\frac{1}{n}\sum_{k=1}^n\psi_{\boldsymbol{Q},k, n}\psi_{\boldsymbol{Q},k, n}'\mid X,\sigma_n\right]^{-1/2}\left.E\left[\frac{\partial^2}{\partial b\partial b'}\boldsymbol{Q}_n(b)\mid X,\sigma_n\right]\right|_{b=\beta}(\hat\beta(\rho_0,\rho_1)-\beta)\rightarrow_{d}N(0,I)
   \mbox{ given }(X,\sigma_n).   
\end{eqnarray*}
\end{lemma}
\begin{proof} 
By Lemma \ref{lemma:QMLE_consisitent}, \ref{lemma:min_eig_hess}, \ref{lemma:B_unif_conv} and \citet[Theorem 5.1]{white1988unified}, it suffices to prove the following statements: 
\begin{itemize}
\item $\left.E\left[\frac{\partial^2}{\partial b\partial b'}\boldsymbol{Q}_n(b)\mid X,\sigma_n\right]\right|_{b=\beta}$ and $E\left[\frac{1}{n}\sum_{k=1}^n\psi_{\boldsymbol{Q},k, n}\psi_{\boldsymbol{Q},k, n}'\right]$ are $O(1)$; 
\item $E\left[\frac{\partial^2}{\partial b\partial b'}\boldsymbol{Q}_n(b)\mid X,\sigma_n\right]$ is continuous in $b\in\mathcal{B}$ uniformly in $n$; and 
\item 
\begin{equation}\label{eq:conv_norm_Qder}
\sqrt{n}E\left[\frac{1}{n}\sum_{k=1}^n\psi_{\boldsymbol{Q},k, n}\psi_{\boldsymbol{Q},k, n}'\mid X,\sigma_n\right]^{-1/2}\left.\frac{\partial}{\partial b}\hat{\boldsymbol{Q}}_n(b,\rho_0,\rho_1)\right|_{b=\beta}\rightarrow_dN(0,I)
\end{equation}
\end{itemize}
The first two statements follow from the error being normally distributed.

Before proving Eq. (\ref{eq:conv_norm_Qder}), we are going to show that 
\begin{equation}
\frac{1}{n^2}\sum_{i,j}(G_{ij,n}-E\left[G_{ij,n}\mid X,\sigma_n\right])\boldsymbol{V}_1(\theta_0,X_{ij},\gamma_{ij,n})(\hat{\gamma}_{ij}-\gamma_{ij,n})=o_p(n^{-1/2}).
\label{eq:conv_norm_Qder_condition3}
\end{equation}
Note that  
\begin{eqnarray*}
&&
\frac{1}{n^2}\sum_{i,j}(G_{ij,n}-E\left[G_{ij,n}\mid X,\sigma_n\right])\boldsymbol{V}_1(\theta_0,X_{ij},\gamma_{ij,n})(\hat{\gamma}_{ij}-\gamma_{ij,n})
\\
&=&
\frac{1}{n^2}\sum_{i,j}(G_{ij,n}-E\left[G_{ij,n}\mid X,\sigma_n\right])\boldsymbol{V}_1(\theta_0,X_{ij},\gamma_{ij,n})\frac{1}{n}\sum_{k} \psi_{\gamma,k, n}(X_{ij})\\
&=&
\frac{1}{n^2}\sum_{i,k}\mathrm{term}(i,k),
\end{eqnarray*}
where 
\begin{eqnarray*}
   \mathrm{term}(i,k) &=& \frac{1}{n}\sum_{j}(G_{ij,n}-E\left[G_{ij,n}\mid X,\sigma_n\right])\boldsymbol{V}_1(\theta_0,X_{ij},\gamma_{ij,n}) \psi_{\gamma,k, n}(X_{ij}).   
\end{eqnarray*}
We will demonstrate the $L^2$ convergence of $\frac{1}{n^2}\sum_{i,k}\mathrm{term}(i,k)$.
Regarding the expectation of $\frac{1}{n^2}\sum_{i,k}\mathrm{term}(i,k)$, we have 
\begin{eqnarray*}
&&
E\left[\frac{1}{n^2}\sum_{i,k}\mathrm{term}(i,k)\mid X,\sigma_n\right]
\\
&=&
\frac{1}{n^2}\sum_{i\ne k}\frac{1}{n}\sum_{j}E\left[(G_{ij,n}-E\left[G_{ij,n}\mid X,\sigma_n\right])\boldsymbol{V}_1(\theta_0,X_{ij},\gamma_{ij,n}) \psi_{\gamma,k, n}(X_{ij})\mid X,\sigma_n\right]
\\
&&+
\frac{1}{n^2}\sum_{i}\frac{1}{n}\sum_{j}E\left[(G_{ij,n}-E\left[G_{ij,n}\mid X,\sigma_n\right])\boldsymbol{V}_1(\theta_0,X_{ij},\gamma_{ij,n}) \psi_{\gamma,i, n}(X_{ij})\mid X,\sigma_n\right]
\\
&=&
\frac{1}{n^2}\sum_{i\ne k}\frac{1}{n}\sum_{j}E\left[(G_{ij,n}-E\left[G_{ij,n}\mid X,\sigma_n\right])\mid X,\sigma_n\right]\boldsymbol{V}_1(\theta_0,X_{ij},\gamma_{ij,n}) E\left[\psi_{\gamma,k, n}(X_{ij})\mid X,\sigma_n\right]
\\
&&+
\frac{1}{n^2}\sum_{i}\frac{1}{n}\sum_{j}E\left[(G_{ij,n}-E\left[G_{ij,n}\mid X,\sigma_n\right])\boldsymbol{V}_1(\theta_0,X_{ij},\gamma_{ij,n}) \psi_{\gamma,i, n}(X_{ij})\mid X,\sigma_n\right]
\\
&=&
\frac{1}{n^2}\sum_{i}\frac{1}{n}\sum_{j}E\left[(G_{ij,n}-E\left[G_{ij,n}\mid X,\sigma_n\right])\boldsymbol{V}_1(\theta_0,X_{ij},\gamma_{ij,n}) \psi_{\gamma,i, n}(X_{ij})\mid X,\sigma_n\right]
\\
&=&
O(n^{-1}).
\end{eqnarray*}
where the second equality comes from the independence of $\{G_{ij,n}:j\}$ across $i$, the third equality follows from $E\left[(G_{ij,n}-E\left[G_{ij,n}\mid X,\sigma_n\right])\mid X,\sigma_n\right]=0$, and the last equality follows because $G_{ij,n}$, $\boldsymbol{V}_1(\theta_0,X_{ij},\gamma_{ij,n})$, and  $\psi_{\gamma,i, n}(X_{ij})$ are bounded.  
Regarding the variance of $\frac{1}{n^2}\sum_{i,k}\mathrm{term}(i,k)$, we use  
\begin{eqnarray}
&&Cov\left(\mathrm{term}(i_1,k_1),\mathrm{term}(i_2,k_2)\mid X,\sigma_n\right)=0\mbox{ if }k_2\ne i_1,k_1,i_2\label{eq_U_shape1}\\
&&Cov\left(\mathrm{term}(i_1,k_1),\mathrm{term}(i_2,k_2)\mid X,\sigma_n\right)=0\mbox{ if }k_1\ne i_1,k_2,i_2\label{eq_U_shape2}\\
&&Cov\left(\mathrm{term}(i_1,k_1),\mathrm{term}(i_2,k_1)\mid X,\sigma_n\right)=0\mbox{ if }i_1\ne k_1,i_2,\label{eq_U_shape3}
\end{eqnarray}
where they result from the fact that $G_{ij,n}-E\left[G_{ij,n}\mid X,\sigma_n\right]$ and $\psi_{\gamma,k, n}(X_{ij})$ are mean-zero and independent across $(i,k)$.
We have 
\begin{eqnarray*}
Var\left(\frac{1}{n^2}\sum_{i,k}\mathrm{term}(i,k)\mid X,\sigma_n\right)
&=&
\frac{1}{n^4}\sum_{(i_1,k_1,i_2,k_2)}Cov\left(\mathrm{term}(i_1,k_1),\mathrm{term}(i_2,k_2)\mid X,\sigma_n\right)
\\
&=&
\frac{1}{n^4}\sum_{(i_1,k_1,i_2)}\sum_{k_2=i_1,k_1,i_2}Cov\left(\mathrm{term}(i_1,k_1),\mathrm{term}(i_2,k_2)\mid X,\sigma_n\right)
\\
&=&
\frac{1}{n^4}\sum_{(i_1,i_2)}\sum_{k_2=i_1,i_2}\sum_{k_1}Cov\left(\mathrm{term}(i_1,k_1),\mathrm{term}(i_2,k_2)\mid X,\sigma_n\right)
\\
&&+
\frac{1}{n^4}\sum_{(k_1,i_2)}\sum_{i_1}Cov\left(\mathrm{term}(i_1,k_1),\mathrm{term}(i_2,k_1)\mid X,\sigma_n\right)
\\
&=&
\frac{1}{n^4}\sum_{(i_1,i_2)}\sum_{k_2=i_1,i_2}\sum_{k_1=i_1,i_2}Cov\left(\mathrm{term}(i_1,k_1),\mathrm{term}(i_2,k_2)\mid X,\sigma_n\right)
\\
&&+
\frac{1}{n^4}\sum_{(k_1,i_2)}\sum_{i_1= k_1,i_2}Cov\left(\mathrm{term}(i_1,k_1),\mathrm{term}(i_2,k_1)\mid X,\sigma_n\right)
\\
&\leq&
\frac{6}{n^2}
\max_{(i_1,k_1,i_2,k_2)}|Cov\left(\mathrm{term}(i_1,k_1),\mathrm{term}(i_2,k_2)\mid X,\sigma_n\right)|
\\
&=&
O(n^{-2}),
\end{eqnarray*}
where the third equality uses Eq.(\ref{eq_U_shape1}), the fifth equality uses Eq.(\ref{eq_U_shape2}) and Eq.(\ref{eq_U_shape3}), and the last equality follows from $\sup_{(i_1,k_1,i_2,k_2)}\left|Cov\left(\mathrm{term}(i_1,k_1),\mathrm{term}(i_2,k_2)\mid X,\sigma_n\right)\right|=O(1)$.

Now we are going to show that Eq. (\ref{eq:conv_norm_Qder_condition3}) implies Eq. (\ref{eq:conv_norm_Qder}). 
The first-order Taylor expansions yield
\begin{eqnarray*}
\sup_{i,j}\|\boldsymbol{v}_1(\theta_0,X_{ij},\hat{\gamma}_{ij})-\boldsymbol{v}_1(\theta_0,X_{ij},\gamma_{ij,n})-\boldsymbol{V}_1(\theta_0,X_{ij},\gamma_{ij,n})(\hat{\gamma}_{ij}-\gamma_{ij,n})\|&=&o_p\left(\sup_{i,j}\|\hat{\gamma}_{ij}-\gamma_{ij,n}\|\right)
\\
\sup_{i,j}\|\boldsymbol{v}_2(\theta_0,X_{ij},\hat{\gamma}_{ij})-\boldsymbol{v}_2(\theta_0,X_{ij},\gamma_{ij,n})-\boldsymbol{V}_2(\theta_0,X_{ij},\gamma_{ij,n})(\hat{\gamma}_{ij}-\gamma_{ij,n})\|&=&o_p\left(\sup_{i,j}\|\hat{\gamma}_{ij}-\gamma_{ij,n}\|\right),
\end{eqnarray*}
and 
\begin{eqnarray*}
&&
\left.\frac{\partial}{\partial b}\hat{\boldsymbol{Q}}_n(b,\rho_0,\rho_1)\right|_{b=\beta}
\\
&&=
\frac{1}{n^2}\sum_{i,j}(G_{ij,n}-\Psi(\theta_0,X_{ij},\gamma_{ij,n}))\boldsymbol{v}_1(\theta_0,X_{ij},\gamma_{ij,n})
\\&&\quad+
\frac{1}{n^2}\sum_{i,j}\left(G_{ij,n}\left(
\boldsymbol{v}_1(\theta_0,X_{ij},\hat\gamma(X_{ij}))-\boldsymbol{v}_1(\theta_0,X_{ij},\gamma_{ij,n})
\right)
-\left(
\boldsymbol{v}_2(\theta_0,X_{ij},\hat\gamma(X_{ij}))-\boldsymbol{v}_2(\theta_0,X_{ij},\gamma_{ij,n})\right)
\right)\\
&&=
\frac{1}{n^2}\sum_{i,j}(G_{ij,n}-\Psi(\theta_0,X_{ij},\gamma_{ij,n}))
\boldsymbol{v}_1(\theta_0,X_{ij},\gamma_{ij,n})\\&&\quad+
\frac{1}{n^2}\sum_{i,j}\left(G_{ij,n}\boldsymbol{V}_1(\theta_0,X_{ij},\gamma_{ij,n})-\boldsymbol{V}_2(\theta_0,X_{ij},\gamma_{ij,n})\right)(\hat{\gamma}_{ij}-\gamma_{ij,n})+o_p(n^{-1/2})
\\
&&=
\frac{1}{n^2}\sum_{i,j}(G_{ij,n}-\Psi(\theta_0,X_{ij},\gamma_{ij,n}))
\boldsymbol{v}_1(\theta_0,X_{ij},\gamma_{ij,n})\\&&\quad+
\frac{1}{n^2}\sum_{i,j}\left(E\left[G_{ij,n}\mid X,\sigma_n\right]\boldsymbol{V}_1(\theta_0,X_{ij},\gamma_{ij,n})-\boldsymbol{V}_2(\theta_0,X_{ij},\gamma_{ij,n})\right)(\hat{\gamma}_{ij}-\gamma_{ij,n})+o_p(n^{-1/2})
\\
&&=
\frac{1}{n^2}\sum_{i,j}(G_{ij,n}-\Psi(\theta_0,X_{ij},\gamma_{ij,n}))
\boldsymbol{v}_1(\theta_0,X_{ij},\gamma_{ij,n})\\&&\quad+
\frac{1}{n^2}\sum_{i,j}\left(E\left[G_{ij,n}\mid X,\sigma_n\right]\boldsymbol{V}_1(\theta_0,X_{ij},\gamma_{ij,n})-\boldsymbol{V}_2(\theta_0,X_{ij},\gamma_{ij,n})\right)\frac{1}{n}\sum_{k} \psi_{\gamma,k, n}(X_{ij})+o_p(n^{-1/2})
\\
&&=
\frac{1}{n}\sum_{k}\psi_{\boldsymbol{Q},k, n}+o_p(n^{-1/2}),
\end{eqnarray*}
where the third equality uses Eq. (\ref{eq:conv_norm_Qder_condition3}).
We can apply Lyapunov’s central limit theorem  to uniformly-bounded random variables $\psi_{\boldsymbol{Q},k, n}$, and we have Eq. (\ref{eq:conv_norm_Qder}).
\end{proof}

\begin{lemma}\label{lemma:clt_hatbeta}
$\sqrt{n}\widehat{AV}(\rho_0,\rho_1)^{-1/2}(\hat\beta(\rho_0,\rho_1)-\beta)\rightarrow_{d}N(0,I)\mbox{ given }(X,\sigma_n)$.
\end{lemma}
\begin{proof}
By Lemma \ref{lemma:QMLE_asynormal}, it is sufficient to show that 
\begin{eqnarray*}
\left.\frac{\partial^2}{\partial b\partial b'}\hat{\boldsymbol{Q}}_n(b,\rho_0,\rho_1)\right|_{b=\hat\beta(\rho_0,\rho_1)}-\left.E\left[\frac{\partial^2}{\partial b\partial b'}\boldsymbol{Q}_n(b)\mid X,\sigma_n\right]\right|_{b=\beta}&=&o_p(1)\\
\mathcal{\hat{S}}(\rho_0,\rho_1)-E\left[\frac{1}{n}\sum_{k=1}^n\psi_{\boldsymbol{Q},k, n}\psi_{\boldsymbol{Q},k, n}'\mid X,\sigma_n\right]&=&o_p(1).
\end{eqnarray*}
The first statement follows from Lemma \ref{lemma:B_unif_conv}.
The second statement can be shown similarly to Lemma \ref{var_conv}.
\end{proof}

\section{Covariate-Dependent Misclassification}\label{sec_aprob}

In this section, we weaken Assumption \ref{Ass3} into the following assumption.

\begin{assumption}
   \label{Ass3_Xdependent}
   The following two statements hold for every $n$ and every $i,j,k\in\mathcal{N}_n$.
   (i) $G_{ki,n}$ and $G_{kj,n}$ are independent given $(G_{ki,n}^\ast, G_{kj,n}^\ast,X,\sigma_n)$. 
   (ii) $Pr(G_{ij,n}\ne G_{ij,n}^\ast\mid G_{ij,n}^\ast,X,\sigma_n)=\rho_0(X_i)1\{G_{ij,n}^\ast=0\}+\rho_1(X_i)1\{G_{ij,n}^\ast=1\}$, 
   with $\rho_0(X_i), \rho_1(X_i) \geq 0$ and $\rho_0(X_i) + \rho_1(X_i) < 1$ a.s.
\end{assumption}

Assumption \ref{Ass3_Xdependent} implies that the probability of misclassification can be covariate-specific.  This setting captures heterogeneity in the misclassification due to differences in the observed types and allows for correlation within the misclassification of links for a given individual, i.e., the misclassification process of $G_{ij}$ and $G_{ik}$ could be correlated through $X_i$. This specification might be desirable if the researcher is concerned that individuals of certain profiles (e.g., age, religion, caste) are more prone to fatigue
or apprehensive about listing all their individual connections.

Under Assumption \ref{Ass3_Xdependent}, we can modify Lemma \ref{belief_lemma} as follows. 

\begin{lemma}
If Assumptions \ref{Ass1}, \ref{Ass2} and \ref{Ass3_Xdependent} hold, then for every distinct $i,j,k \in \mathcal{N}_n$, 
\begin{eqnarray*}
   \gamma_{ij,n}^\ast
   &=&
   \left(
   \begin{array}{c}
   E\left[G_{ji,n}^\ast \mid X, \sigma_n\right]\\
   \frac{1}{n}\sum_{k}E\left[G_{kj,n}^\ast\mid X, \sigma_n\right]\\
   \frac{1}{n}\sum_{k}E\left[G_{ki,n}^\ast G_{kj,n}^\ast \mid X, \sigma_n\right]
   \end{array}
   \right)
   \\
   &=&
   \left(
   \begin{array}{c}
   \omega(X_j)^{-1} 
   \left\{  E\left[ G_{ji,n} \mid X, \sigma_n \right] - \rho_0(X_{j})  \right\}
   \\
   \frac{1}{n}\sum_{k}
   \omega(X_k)^{-1} 
   \left\{ E\left[ G_{ki,n} \mid X, \sigma_n \right] - \rho_0(X_{k})  \right\}
   \\
   \frac{1}{n}\sum_{k}
   \omega(X_k)^{-2} 
   \left\{  
   E\left[ G_{ki,n} G_{kj,n} \mid X, \sigma_n \right]
   -
   \rho_0(X_{k}) 
   E\left[ G_{ki,n}+G_{kj,n} \mid X, \sigma_n \right]
   + 
   \rho_0(X_{k})^2
   \right\}
   \end{array}
   \right)
\end{eqnarray*}
where $\omega(X_k) = (1-\rho_0(X_{k})-\rho_1(X_{k}))$.

\end{lemma}
\begin{proof}
Notice that 
\begin{eqnarray*}
   E\left[ G_{ki,n}^\ast \mid X, \sigma_n \right]   
   &=&
   ( 1- \rho_0(X_{k}) - \rho_1(X_{k}) )^{-1}
   \left\{  
   E\left[ G_{ki,n} \mid X, \sigma_n \right]
   -
   \rho_0(X_{k}) 
   \right\}
\end{eqnarray*}
and 
\begin{eqnarray*}
   E\left[ G_{ki,n}^\ast + G_{kj,n}^\ast \mid X, \sigma_n \right]   
   &=&
   ( 1- \rho_0(X_{k}) - \rho_1(X_{k}) )^{-1}
   \left\{  
   E\left[ G_{ki,n}+G_{kj,n} \mid X, \sigma_n \right]
   -
   2\rho_0(X_{k}) 
   \right\}.
\end{eqnarray*}
Consider 
\begin{eqnarray*}
   &&
   E\left[ G_{ki,n} G_{kj,n} \mid X, \sigma_n \right]
   \\
   &=&
   E
   \left[  
      Pr\left[ G_{ki,n}=1, G_{kj,n} = 1 \mid X, \sigma_n, G_{ki,n}^\ast G_{kj,n}^\ast \right]
      \mid 
      X, \sigma_n
   \right]
   \\
   &=&
   E
   \left[  
      Pr\left[ G_{ki,n}  = 1 \mid X, \sigma_n, G_{ki,n}^\ast G_{kj,n}^\ast \right]
      Pr\left[ G_{kj,n} = 1 \mid X, \sigma_n, G_{ki,n}^\ast G_{kj,n}^\ast \right]
      \mid 
      X, \sigma_n
   \right]
   \\
   &=&
   ( 1-\rho_1(X_{k}) )^2
   Pr\left[ G_{ki,n}^\ast = 1,  G_{kj,n}^\ast =1 \mid X, \sigma_n \right]
   \\
   &&
   +
   \rho_0(X_{k}) ( 1-\rho_1(X_{k}) ) 
   \left\{   
      Pr\left[ G_{ki,n}^\ast= 1,  G_{kj,n}^\ast=0  \mid X, \sigma_n \right]
      +
      Pr\left[ G_{ki,n}^\ast= 0,  G_{kj,n}^\ast=1  \mid X, \sigma_n \right]
   \right\}
   \\
   &&
   +
   \rho_0(X_{k})^2
   Pr\left[ G_{ki,n}^\ast= 0,  G_{kj,n}^\ast=0  \mid X, \sigma_n \right]
   \\
   &=&
   \rho_0(X_{k})^2
   +
   \left\{  
   \rho_0(X_{k}) ( 1-\rho_0(X_{k})-\rho_1(X_{k}) ) 
   \right\}
   E\left[ G_{ki,n}^\ast+ G_{kj,n}^\ast \mid X, \sigma_n\right]
   \\
   &&
   +
   ( 1-\rho_0(X_{k})-\rho_1(X_{k}) )^2 
   E\left[ G_{ki,n}^\ast G_{kj,n}^\ast \mid X, \sigma_n\right], 
\end{eqnarray*}   
where the fourth equality follows from using 
\begin{eqnarray*}
   Pr\left[ G_{ki,n}^\ast = 1,  G_{kj,n}^\ast =1 \mid X, \sigma_n \right] 
   &=& E\left[ G_{ki,n}^\ast G_{kj,n}^\ast \mid X, \sigma_n\right] \\
   Pr\left[ G_{ki,n}^\ast= 1,  G_{kj,n}^\ast=0  \mid X, \sigma_n \right] 
   &=& E\left[ G_{ki,n}^\ast \mid X, \sigma_n\right] - E\left[ G_{ki,n}^\ast G_{kj,n}^\ast \mid X, \sigma_n\right]\\
   Pr\left[ G_{ki,n}^\ast= 0,  G_{kj,n}^\ast=1  \mid X, \sigma_n \right] 
   &=& E\left[ G_{kj,n}^\ast \mid X, \sigma_n\right] - E\left[ G_{ki,n}^\ast G_{kj,n}^\ast \mid X, \sigma_n\right]\\
   Pr\left[ G_{ki,n}^\ast= 0,  G_{kj,n}^\ast=0  \mid X, \sigma_n \right]
   &=&
   1 - E\left[ G_{ki,n}^\ast + G_{kj,n}^\ast \mid X, \sigma_n\right] + E\left[ G_{ki,n}^\ast G_{kj,n}^\ast \mid X, \sigma_n\right]
\end{eqnarray*}
Solving for $E\left[ G_{ki,n}^\ast G_{kj,n}^\ast \mid X, \sigma_n\right]$ we obtain 
\begin{eqnarray*}
   E\left[ G_{ki,n}^\ast G_{kj,n}^\ast \mid X, \sigma_n\right]
   &=&
   ( 1-\rho_0(X_{k})-\rho_1(X_{k}) )^{-2} 
   \left\{  
   E\left[ G_{ki,n} G_{kj,n} \mid X, \sigma_n \right]
   - 
   \rho_0(X_{k})^2
   \right\}
   \\
   &&
   -
   \rho_0(X_{k}) ( 1-\rho_0(X_{k})-\rho_1(X_{k}) )^{-1} 
   E\left[ G_{ki,n}^\ast+ G_{kj,n}^\ast \mid X, \sigma_n\right]
   \\
   &=&
   ( 1-\rho_0(X_{k})-\rho_1(X_{k}) )^{-2} 
   \left\{  
      \left(  
      E\left[ G_{ki,n} G_{kj,n} \mid X, \sigma_n \right]
      +
      \rho_0(X_{k})^2
      \right)  
   \right. 
   \\
   &&
   -
   \left.
   \rho_0(X_{k}) 
   E\left[ G_{ki,n}+G_{kj,n} \mid X, \sigma_n \right]
   \right\}.
\end{eqnarray*} 
\end{proof}

\begin{remark}
We can further weaken Condition (ii) in Assumption \ref{Ass3_Xdependent} to $Pr(G_{ij,n}\ne G_{ij,n}^\ast\mid G_{ij,n}^\ast,X,\sigma_n)=\rho_0(X)1\{G_{ij,n}^\ast=0\}+\rho_1(X)1\{G_{ij,n}^\ast=1\}$, 
   with $\rho_0(X), \rho_1(X) \geq 0$ and $\rho_0(X) + \rho_1(X) < 1$ a.s. In this case, the network statistics can be written as 
   \begin{eqnarray*}
      \gamma_{ij,n}^\ast
      &=&
      \left(
      \begin{array}{c}
      E\left[G_{ji,n}^\ast \mid X, \sigma_n\right]\\
      \frac{1}{n}\sum_{k}E\left[G_{kj,n}^\ast\mid X, \sigma_n\right]\\
      \frac{1}{n}\sum_{k}E\left[G_{ki,n}^\ast G_{kj,n}^\ast \mid X, \sigma_n\right]
      \end{array}
      \right)
      \\
      &=&
      D(p_0(X), p_1(X))
      \left(
      \begin{array}{c}
      E\left[ G_{ji,n} \mid X, \sigma_n \right] 
      \\
      \frac{1}{n}\sum_{k}
      E\left[ G_{ki,n} \mid X, \sigma_n \right] 
      \\
      \omega(X)^{-2} 
      \frac{1}{n}\sum_{k}
      E\left[ G_{ki,n} G_{kj,n} \mid X, \sigma_n \right]
      \\
      \omega(X)^{-2} 
      \rho_0(X) 
      \frac{1}{n}\sum_{k}
      E\left[ G_{ki,n}+G_{kj,n} \mid X, \sigma_n \right]
      \end{array}
      \right)
      +
      \left(
         \begin{array}{c}
         -\omega(X)^{-1} 
         \rho_0(X) 
         \\
         -\omega(X)^{-1} 
         \rho_0(X)  
         \\
         \omega(X)^{-2} 
         \rho_0(X)^2
         \end{array}
         \right),
   \end{eqnarray*}
where $\omega(X) = (1-\rho_0(X)-\rho_1(X))$ and 
\begin{eqnarray*}
   D(p_0(X), p_1(X)) &=&
   \begin{bmatrix}
      \omega(X)^{-1} & 0 &0 & 0 \\
      0 & \omega(X)^{-1} &0 & 0 \\
      0 & 0 &\omega(X)^{-2} & -\omega(X)^{-2}\rho_0(X)  \\
   \end{bmatrix}.
\end{eqnarray*}
\end{remark}

\section{Semiparametric Identification Analysis}\label{sec_a2}

Given $P\in\mathcal{P}$, we will characterize the identified set in the semiparametric model. 
\begin{definition}
For each distribution $P\in\mathcal{P}$, 
the identified set $\Theta_{I,SP}(P)$ is defined as the set of all $\theta=(b,r_0,r_1)$ in $\Theta$ for which there is some joint distribution $P^\ast\in\mathcal{P}^\ast$ such that 
Condition \ref{independence_assn}, \ref{linear_index}(ii), and \ref{misclas_prop} holds, and that the distribution of $(G_{ij,n},X_{ij},\gamma_{ij,n})$ induced from $P^\ast$ is equal to $P$.
\end{definition}

\begin{theorem}
Given $P\in\mathcal{P}$, 
$\Theta_{I,SP}(P)$ is equal to the set of $\theta\in\Theta$ satisfying the following statements a.s. for some $r_0, r_1 \geq 0$ such that $r_0 + r_1 < 1$ and some weakly increasing and right-continuous function $\Lambda$:  
\begin{eqnarray}
&&r_0\leq E_P\left[G_{ij,n}\mid Z_{ij,n}\right]\mbox{ and }r_1\leq E_P\left[1-G_{ij,n}\mid Z_{ij,n}\right]\label{a_0_ineq}\\
&&E_P\left[G_{ij,n}\mid Z_{ij,n}\right]=\Lambda\left((c(r_0,r_1)+\gamma_{ij,n}'C(r_0,r_1))'b_1+X_{ij}'b_2\right).\label{SSSinequality}
\end{eqnarray}
\end{theorem}

\begin{proof}
First, we are going to show that every element $\theta$ of $\Theta_{I,SP}(P)$ satisfies the conditions in (\ref{a_0_ineq})-(\ref{SSSinequality}).
Let $(r_0, r_1)$ denote the misclassification probabilities.  
Denote by $\Lambda^\ast$ the cdf of $-\varepsilon_{ij}$ and define $\Lambda(v)=r_0+(1-r_0-r_1)\Lambda^\ast(v)$. 
By Lemma \ref{belief_lemma} and \ref{Ass3_lemma}, 
\begin{eqnarray*}
E_{P^\ast}\left[G_{ij,n}\mid Z_{ij,n}\right]
&=&
r_0+(1-r_0-r_1)E_{P^\ast}\left[G_{ij,n}^\ast\mid Z_{ij,n}\right]\\
&=&
r_0+(1-r_0-r_1)\Lambda^\ast((c(r_0,r_1)+\gamma_{ij,n}'C(r_0,r_1))'b_1+X_{ij}'b_2),
\end{eqnarray*}
and we have the condition (\ref{SSSinequality}).
Note that $\Lambda$ is weakly increasing and right-continuous, because $\Lambda^\ast$ is weakly increasing and right-continuous.
The two inequalities in (\ref{a_0_ineq}) are shown as follows: 
\begin{eqnarray*}
E_{P^\ast}\left[G_{ij,n}\mid Z_{ij,n}\right]
&=&
r_0+(1-r_0-r_1)E_{P^\ast}\left[G_{ij,n}^\ast\mid Z_{ij,n}\right]\geq r_0\\
E_{P^\ast}\left[1-G_{ij,n}\mid Z_{ij,n}\right]
&=&
r_1+(1-r_0-r_1)E_{P^\ast}\left[1-G_{ij,n}^\ast\mid Z_{ij,n}\right]\geq r_1,
\end{eqnarray*}
where the inequalities follow from $1-r_0-r_1\geq 0$.

Now, the rest of the proof is going to show that every element $\theta\in\Theta$ satisfying (\ref{a_0_ineq})-(\ref{SSSinequality}), belongs to $\Theta_{I,SP}(P)$. 
By the condition (\ref{SSSinequality}) as well as Condition (\ref{a_0_ineq}), there is a weakly increasing and right-continuous function $\Lambda:\mathbb{R}\rightarrow [r_0,1-r_1]$ such that 
\begin{equation}\label{equE}
E_P\left[G_{ij,n}\mid Z_{ij,n}\right]=\Lambda\left((c(r_0,r_1)+\gamma_{ij,n}'C(r_0,r_1))'b_1+X_{ij}'b_2\right).
\end{equation}
Denote by $\Lambda^\ast$ the cdf satisfying $\Lambda(v)=r_0+(1-r_0-r_1)\Lambda^\ast(v)$.

Define the joint distribution $P^\ast$ in the following way. 
Define the cdf of $\varepsilon_{ij}$ such that $\Lambda^\ast$ is the cdf of $-\varepsilon_{ij}$.
The conditional distribution of $(\gamma_{ij,n},\gamma_{ij,n}^\ast,X_{ij})$ given $\varepsilon_{ij}$ is 
\begin{equation}\label{equA}
P^\ast((\gamma_{ij,n},\gamma_{ij,n}^\ast,X_{ij})\in B \mid \varepsilon_{ij})=P((\gamma_{ij,n},c(r_0,r_1)+C(r_0,r_1)\gamma_{ij,n},X_{ij})\in B)
\end{equation}
for all the measurable sets $B$. 
The conditional distribution of $G_{ij,n}^\ast$ given $(\gamma_{ij,n},\gamma_{ij,n}^\ast,X_{ij},\varepsilon_{ij})$ is 
\begin{equation}\label{equB}
P^\ast(G_{ij,n}^\ast=1\mid \gamma_{ij,n},\gamma_{ij,n}^\ast,X_{ij},\varepsilon_{ij})=1\{(Z_{ij,n}^\ast)'b + \varepsilon_{ij}\geq 0\}.
\end{equation}
The conditional distribution of $G_{ij,n}$ given $(G_{ij,n}^\ast,\gamma_{ij,n},\gamma_{ij,n}^\ast,X_{ij},\varepsilon_{ij})$  is 
\begin{equation}\label{equC}
P^\ast(G_{ij,n}=1\mid G_{ij,n}^\ast,\gamma_{ij,n},\gamma_{ij,n}^\ast,X_{ij},\varepsilon_{ij})
=
\begin{cases}
r_0&\mbox{ if }G_{ij,n}^\ast=0\\
1-r_1&\mbox{ if }G_{ij,n}^\ast=1.
\end{cases}
\end{equation}

Note that $(P^\ast,\theta)$ satisfies Conditions 1(ii), 2 and 3, because Condition 1(ii) follows from Eq. (\ref{equA}), Condition 2 follows from Eq. (\ref{equB}),  Condition 3(i) follows from Eq. (\ref{equB}) and (\ref{equC}), and Condition 3(ii) follows from Eq. (\ref{equA}).

To conclude this proof, we are going to show that the distribution of $(G_{ij,n},X_{ij},\gamma_{ij,n})$ induced from $P^\ast$ is equal to $P$.
The distribution of $(X_{ij},\gamma_{ij,n})$ induced from $P^\ast$ is equal to that from $P$, by Eq. (\ref{equA}).
The equality of $P^\ast(G_{ij,n}=1\mid Z_{ij,n})=P(G_{ij,n}=1\mid Z_{ij,n})$ a.s. under $P^\ast$ is shown as follows. 
Note that  
\begin{equation}\label{equD}
\gamma_{ij,n}^\ast=c(r_0,r_1)+C(r_0,r_1)\gamma_{ij,n} \mbox{ a.s.  under }P^\ast
\end{equation}
Then 
\begin{eqnarray*}
P^\ast(G_{ij,n}=1\mid Z_{ij,n})
&=&
P^\ast(G_{ij,n}=1\mid Z_{ij,n},\gamma_{ij,n}^\ast)\\
&=&
r_0P^\ast(G_{ij,n}^\ast=0\mid Z_{ij,n},\gamma_{ij,n}^\ast)+(1-r_1)P^\ast(G_{ij,n}^\ast=1\mid Z_{ij,n},\gamma_{ij,n}^\ast)\\
&=&
r_0+(1-r_0-r_1)P^\ast(G_{ij,n}^\ast=1\mid Z_{ij,n},\gamma_{ij,n}^\ast)\\
&=&
r_0+(1-r_0-r_1)E_{P^\ast}[P^\ast(G_{ij,n}^\ast=1\mid Z_{ij,n},\gamma_{ij,n}^\ast,\varepsilon_{ij})\mid Z_{ij,n},\gamma_{ij,n}^\ast]\\
&=&
r_0+(1-r_0-r_1)P^\ast((Z_{ij,n}^\ast)'b + \varepsilon_{ij}\geq 0\mid Z_{ij,n},\gamma_{ij,n}^\ast)\\
&=&
r_0+(1-r_0-r_1)\Lambda^\ast((Z_{ij,n}^\ast)'b)\\
&=&
r_0+(1-r_0-r_1)\Lambda^\ast((c(r_0,r_1)+C(r_0,r_1)\gamma_{ij,n})'b_1+X_{ij}'b_2)\\
&=&
P(G_{ij,n}=1\mid Z_{ij,n}), 
\end{eqnarray*}
where the first and seventh equalities follow from Eq. (\ref{equD}), the second follows from Eq. (\ref{equC}), the fifth follows from Eq. (\ref{equB}), and the last follows from Eq. (\ref{equE}).
\end{proof}

\clearpage
\section{Additional Tables}\label{S:additionaltables}

\begin{table}[ht] 
\centering
\caption{95\% confidence intervals $\hat{\mathcal{C}}_n(\alpha)$ with  $r_0=0$.}
   \label{table:Ext_unionR1}
   \begin{tabular}{
    l
    >{{[}} 
    S[table-format = -1.3,table-space-text-pre={[}]
    @{,\,} 
    S[table-format = -2.3,table-space-text-post={]}]
    <{{]}} 
    >{{[}} 
    S[table-format = -1.3,table-space-text-pre={[}]
    @{,\,} 
    S[table-format = -2.3,table-space-text-post={]}]
    <{{]}} 
    >{{[}} 
    S[table-format = -1.3,table-space-text-pre={[}]
    @{,\,} 
    S[table-format = -2.3,table-space-text-post={]}]
    <{{]}} 
    >{{[}} 
    S[table-format = -1.3,table-space-text-pre={[}]
    @{,\,} 
    S[table-format = -2.3,table-space-text-post={]}]
    <{{]}} 
}
\hline
&\multicolumn{2}{c}{$r_1\leq 0.6$}&\multicolumn{2}{c}{$r_1\leq0.7$}&\multicolumn{2}{c}{$r_1\leq 0.8$} & \multicolumn{2}{c}{$r_1\leq 0.9$} \cr
&  \multicolumn{2}{c}{(1)}& \multicolumn{2}{c}{(2)}&\multicolumn{2}{c}{(3)}& \multicolumn{2}{c}{(4)}  \cr \hline\hline
Reciprocation     &0.823  & 1.676    &0.701  & 1.676   &0.694  & 1.676   &0.500  & 1.676   \cr
In degree         &12.292 & 33.099   &9.830  & 33.099  &6.934  & 33.099  &4.192  & 33.099  \cr
Supported trust   &12.939 & 110.258  &7.808  & 110.258 &3.400  & 110.258 &1.825  & 110.258 \cr
Constant          &-3.896 & -3.246   &-3.896 & -3.204  &-3.896 & -3.027  &-3.896 & -2.826  \cr
Same religion     &0.348  & 0.548    &0.348  & 0.591   &0.348  & 0.643   &0.348  & 0.730   \cr
Same sex          &0.565  & 0.793    &0.565  & 0.834   &0.565  & 0.851   &0.565  & 0.944   \cr
Same caste        &0.195  & 0.353    &0.195  & 0.373   &0.195  & 0.396   &0.195  & 0.449   \cr
Same language     &-0.022 & 0.079    &-0.029 & 0.079   &-0.044 & 0.079   &-0.065 & 0.079   \cr
Same family       &1.308  & 2.139    &1.308  & 2.381   &1.308  & 2.845   &1.308  & 6.118   \cr \hline
\end{tabular}

   \caption*{\footnotesize{Note: $\hat{\mathcal{C}}_n(\alpha)$ is computed in Columns (1)-(4) as $\cup_{r_1\leq \bar{\mathcal{R}}_1} \mathcal{C}_n(\alpha; 0, r_1)$, with $\bar{\mathcal{R}}_1\in \{0.6, 0.7, 0.8, 0.9\}$.}}
\end{table}

\begin{table}[ht]       
\centering
\caption{Ratio of lengths of 95\% confidence intervals, $|\hat{\mathcal{C}}_n(\alpha)|/|\mathcal{C}_n(\alpha, 0,0)|$.}
   \label{table:Ext_ratiosR1}
   \begin{tabular}{l c c c c}
\hline
& $r_1\leq 0.6$ &  $r_1\leq 0.7$   &  $r_1\leq 0.8$ &  $r_1\leq 0.9$   \\ 
& (1)           &    (2)           &    (3)         &  (4)             \\ \hline\hline 
Reciprocation  &    2.555&2.919&2.941&3.522\\
In degree      &    3.088&3.453&3.883&4.290\\
Supported trust&    1.917&2.018&2.104&2.136\\
Constant       &    1.802&1.918&2.412&2.968\\
Same religion  &    1.386&1.687&2.050&2.652\\
Same sex       &    1.628&1.920&2.042&2.702\\
Same caste     &    1.387&1.558&1.765&2.230\\
Same language  &    1.190&1.263&1.445&1.685\\
Same family    &    3.628&4.682&6.708&20.987\\\hline
\end{tabular}

\end{table}

\clearpage
\bibliography{bib_missperp}
\end{document}